\def\LongVersion{}
\def\LongVersionEnd{}
\long\def\ShortVersion#1\ShortVersionEnd{}
\def\ShortVersion{}
\def\ShortVersionEnd{}
\long\def\LongVersion#1\LongVersionEnd{}
\newcommand{\Ignore}[1]{\ignorespaces}
\renewcommand{\paragraph}[1]{\par\noindent\textbf{#1}}
\newtheorem{theorem}{Theorem}[section]
\newtheorem{lemma}[theorem]{Lemma}
\newtheorem{observation}[theorem]{Observation}
\newtheorem{corollary}[theorem]{Corollary}
\newtheorem{proposition}[theorem]{Proposition}
\theoremstyle{definition}
\newtheorem*{definition*}{Definition}
\newtheorem*{example*}{Example}
\theoremstyle{plain}
\theoremstyle{definition}
\theoremstyle{plain}
\tikzstyle{cycrec} = [draw=black!40, rounded rectangle, minimum width=#1 cm,minimum height=1.2cm]
\tikzstyle{blackvertex}=[circle, draw, fill, inner sep=0pt, minimum size=6pt]
\tikzstyle{whitevertex}=[circle, draw, inner sep=0pt, minimum size=6pt]
\tikzstyle{inarrow}=[<-,shorten <=1pt,>=stealth]
\tikzstyle{outarrow}=[->,shorten >=1pt,>=stealth]
\tikzstyle{vertex}=[circle, draw, inner sep=2pt, minimum size=12pt]
\newcommand{\black}{\node[blackvertex]}
\newcommand{\white}{\node[whitevertex]}
\newcommand{\Integers}{\mathbb{Z}}
\newcommand{\Reals}{\mathbb{R}}
\newcommand{\Ex}{\mathbb{E}}
\newcommand{\Prtk}{\mathcal{P}}
\newcommand{\length}{\operatorname{len}}
\newcommand{\lenUtil}{\lambda}
\newcommand{\Wish}{\mathit{W}}
\newcommand{\WishVecSpace}{\mathcal{W}}
\newcommand{\utility}{\mathit{u}}
\newcommand{\ExcSet}{\Pi}
\newcommand{\SW}{\mathrm{SW}}
\newcommand{\Cycles}{\mathcal{C}}
\newcommand{\Weight}{\mathit{w}}
\newcommand{\GreedyAlg}{\ensuremath{\mathtt{Greedy}}}
\newcommand{\LocalSearchAlg}{\ensuremath{\mathtt{LS}}}
\newcommand{\NonUniformAlg}{\ensuremath{\mathtt{NU}}}
\newcommand{\IterativeOpt}{\ensuremath{\mathtt{IO}}}
\newcommand{\criticalRatio}{\rho}
\newcommand{\Neighbors}{\mathit{N}}
\newcommand{\Degree}{\operatorname{deg}}
\newcommand{\Alg}{\ensuremath{\mathtt{Alg}}}
\newcommand{\Opt}{\ensuremath{\mathtt{OPT}}}
\newcommand{\Agents}{\alpha}
\newcommand{\CycleGraphSet}{\mathcal{CG}}
\newcommand{\Sect}{Sec.}
\newcommand{\Thm}{Thm.}
\newcommand{\Lem}{Lem.}
\newcommand{\Obs}{Obs.}
\newcommand{\Prop}{Prop.}
\begin{document}

\title{Barter Exchange with Bounded Trading Cycles}

\author{Yuval Emek}
\author{Matan-El Shpiro}
\affil{Technion – Israel Institute of Technology}

\date{}

\begin{titlepage}

\maketitle

\begin{abstract}
Consider a \emph{barter exchange} problem over a finite set of agents, where
each agent owns an item and is also associated with a (privately known)
\emph{wish list} of items belonging to the other agents.
An outcome of the problem is a (re)allocation of the items to the agents such
that each agent either keeps her own item or receives an item from her
(reported) wish list, subject to the constraint that the length of the trading
cycles induced by the allocation is up-bounded by a prespecified \emph{length
bound} $k$.
The utility of an agent from an allocation is $1$ if she receives an item from
her (true) wish list and $0$ if she keeps her own item (the agent incurs a
large dis-utility if she receives an item that is neither hers nor belongs to
her wish list).

In this paper, we investigate the aforementioned barter exchange problem from
the perspective of mechanism design without money, aiming for truthful (and
individually rational) mechanisms whose objective is to maximize the
\emph{social welfare}.
As the construction of a social welfare maximizing allocation is
computationally intractable for length bounds
$k \geq 3$,
this paper focuses on (computationally efficient) truthful mechanisms that
approximate the (combinatorially) optimal social welfare.
We also study a more general version of the barter exchange problem, where the
utility of an agent from participating in a trading cycle of length
$2 \leq \ell \leq k$
is $\lenUtil(\ell)$, where $\lenUtil$ is a general (monotonically
non-increasing) \emph{length function}.
Our results include upper and lower bounds on the guaranteed approximation
ratio, expressed in terms of the length bound $k$ and the length function
$\lenUtil$.
On the technical side, our main contribution is an algorithmic tool that can
be viewed as a truthful version of the \emph{local search} paradigm.
As it turns out, this tool can be applied to more general (bounded size)
coalition formation problems.
\end{abstract}

\thispagestyle{empty}

\end{titlepage}

\setcounter{page}{1}

\section{Introduction}
\label{section:introduction}
Barter markets, where players exchange goods without involving any payments,
are a fundamental form of trade that dates back to the origin of mankind.
Modern forms of barter markets include the extensively studied kidney exchange
market
\cite{roth2004kidney,
roth2005pairwise,
hatfield2005pairwise,
roth2007efficient,
unver2010dynamic,
ashlagi2014free,
ashlagi2015mix,
dickerson2019failure}
as well as the emerging electronic markets for book exchange
\cite{PaperBackSwap}
and for recreational home exchange
\cite{HomeExchange, LoveHomeSwap}
(cf.\ \cite{abbassi2015exchange}).

In the current paper, we study an abstract game theoretic model for barter
exchange that was introduced by Roth, S\"{o}nmez, and \"{U}nver
\cite{roth2005pairwise} and studied further by Hatfield
\cite{hatfield2005pairwise} and in a follow up paper of Roth et al.\
\cite{roth2007efficient}, as well as serving as the cornerstone for various
other game theoretic models studied subsequently (refer to
\Sect{}~\ref{section:related-work} for a detailed discussion).
This model is defined over a set of agents, each holding an item that she
wishes to exchange for one item from a certain subset of items held by the
other agents.\footnote{%
Roth et al.\ use kidney exchange terminology in their model description.
We prefer the more ``neutral'' agent-item terminology since we do not regard
kidney exchange as this paper's main application.}
The agents report their item wish lists to a mechanism that determines the
outcome in the form of a collection of trading cycles, effectively
reallocating the items among the agents.

A fundamental characteristic of the model of \cite{roth2005pairwise}
is the
$\{ 0, 1 \}$-preferences
assumption, stating that the agents are indifferent about which item they
receive as long as this item belongs to their wish list.
Specifically, an agent that ends up with an item from her wish list enjoys a
unit utility, whereas her utility is zero if she retains her original item
(Roth et al.\ assume that the agent cannot receive a non-wished item).
Among other aspects of this barter exchange model, Roth et al.\
\cite{roth2005pairwise} investigate the design of truthful (strategy proof)
mechanisms with the objective of maximizing the social welfare.

Motivated by real-life difficulties in implementing long trading cycles, the
authors of
\cite{roth2005pairwise, hatfield2005pairwise, roth2007efficient}
advocate for mechanisms that generate trading cycles of bounded length.
The extreme case of trading cycles that consist of (exactly) two agents
reduces the mechanism designer's task to that of constructing a maximum
matching in an undirected graph \cite{roth2005pairwise}, a classic
optimization problem that admits computationally efficient algorithms.
As demonstrated in \cite{roth2007efficient} though, there is a significant
social welfare gain in allowing slightly longer trading cycles, thus raising
the challenge of designing truthful mechanisms for the \emph{$k$-barter
exchange} problem, namely, the problem of constructing a collection of trading
cycles that maximizes the number of participating agents (and thus also the
social welfare) under the constraint that each trading cycle is of length at
most $k$, for small constants
$k \geq 3$.

\begin{example*}
For a concrete application, consider a home swap platform where users
register their home for specific vacation dates and report on their wish list
over other homes registered for the same dates.
The platform then determines an exchange that speciﬁes an assignment of the
users to vacation homes.\footnote{%
Although a more general abstraction of the home swap application may include
ordinal utilities or heterogeneous cardinal utilities, we feel that the
$\{ 0, 1 \}$-preferences
assumption is reasonable also for this application, where users will probably
refrain from including a home in their wish list unless it perfectly fits
their needs.}
A downside of this application is that a single user that defects and stays
in her own home leads to a chain of cancellations affecting all participants
of the corresponding trading cycle.
Since users are likely to be highly upset by such forced cancellations,
especially if they happen at the last minute, it makes sense for the
platform to bound the maximum length $k$ of the trading cycles, thus hedging
the reputation loss resulting from unsatisﬁed users.

Of particular interest is the
$k = 3$
case that comes with the desirable feature that each user is acquainted with
all other users in her trading cycle (they are assigned to her home or she is
assigned to theirs).
In this case, a last minute cancellation is attributed to a person the user is
actually aware of (rather than some obscure defector up the cancellation
chain), hopefully preventing resentment towards the platform.
\end{example*}

Unfortunately, as proved by Abraham, Blum, and Sandholm
\cite{abraham2007clearing}, solving the aforementioned $k$-barter exchange
problem optimally is computationally intractable for any
$k \geq 3$,
irrespective of truthfulness considerations.
Consequently, the goal of this paper is to design efficient truthful
mechanisms for the $k$-barter exchange problem,
$k \geq 3$,
that approximate the (combinatorially) optimal solution.
Our main technical contribution is a novel framework for the design of
truthful versions of \emph{local search} algorithms for a related graph
theoretic problem.
Employing this framework, we develop the first efficient truthful $k$-barter
exchange mechanism with a non-trivial approximation ratio for every constant
$k \geq 3$.

While the
$\{ 0, 1 \}$-preferences
assumption of \cite{roth2005pairwise} states that an agent has no priorities
over the items in her wish list, it is often natural to assume that she
prefers to receive a wished item through shorter trading cycles over longer
ones (up to the length bound $k$).
Indeed, the longer is the trading cycle, the higher is the risk for the
exchange to break down due to defecting participants (cf.\
\cite{dickerson2019failure}).
Moreover, a trading cycle of length $2$ has the unique
``accountability advantage'' (in the aforementioned home swap application,
this means that you stay in the home of the person who stays in your home)
which is lost with longer trading cycles, so it is reasonable to assume that
agents would prioritize trading cycles of length $2$ even if longer trading
cycles are allowed.

Motivated by these insights, we wish to extend the scope of our research to a
more general barter exchange setting where each agent is associated with a
non-increasing \emph{length function}
$\lambda : \{ 2, 3, \dots, k \} \rightarrow (0, 1]$
so that her utility from receiving a wished item
via a trading cycle of length
$2 \leq \ell \leq k$
is $\lenUtil(\ell)$.
As a first step in this direction, we assume that the same length function
$\lenUtil$ is shared by all agents, introducing the
\emph{$(k, \lenUtil)$-barter
exchange} problem.
It turns out that the combinatorial aspect of this problem remains intractable
for any length function $\lenUtil$, thus we still aim for approximate
solutions.
Relying on the aforementioned local search framework, we develop efficient
truthful
$(k, \lenUtil)$-barter
exchange mechanisms with a non-trivial approximation ratio for any choice of
function $\lambda$.
On the negative side, we establish approximability lower bounds, showing that
the dependency of our mechanisms' approximation ratio on $\lenUtil$ is
optimal.

\subsection{Model and Problem}
\label{section:model}
Consider a collection of
$n \in \Integers_{> 0}$
agents indexed by the integers in $[n]$.
A \emph{wish list} of agent
$i \in [n]$
is a subset
$\Wish_{i} \subseteq [n] - \{ i \}$
of the other agents.
Let
$\WishVecSpace_{n}
=
2^{[n] - \{ 1 \}}
\times \cdots \times
2^{[n] - \{ n \}}$
denote the set of all wish list vectors, indexed by the agents in
$[n]$.

An \emph{exchange} over $[n]$ is a bijection
$\pi : [n] \rightarrow [n]$;
we say that an agent
$i \in [n]$
\emph{partakes} in the exchange $\pi$ if
$\pi(i) \neq i$,
denoting the set of agents that partake in $\pi$ by
$\Prtk_{\pi}$.
Observe that from a graph theoretic perspective, the exchange $\pi$ induces a
collection $\Cycles_{\pi}$ of disjoint directed cycles on $\Prtk_{\pi}$,
referred to as the \emph{trading cycles} of $\pi$, so that an arc
$(i, j) \in \Prtk_{\pi} \times \Prtk_{\pi}$
is included in a trading cycle if and only if
$\pi(i) = j$.\footnote{%
Unless stated otherwise, the graph theoretic term cycle refers in this paper
to a simple cycle, i.e., a cycle in which each vertex appears at most once.}
Given a wish list vector
$\Wish \in \WishVecSpace_{n}$,
we say that the exchange $\pi$, as well as the trading cycles in
$\Cycles_{\pi}$,
\emph{respect} $\Wish$ if
$\pi(i) \in \Wish_{i}$
for each
$i \in \Prtk_{\pi}$.

Given a trading cycle $c$, let
$\Agents(c) \subseteq [n]$
denote the set of agents that form $c$.
The \emph{length} of $c$ is defined as
$\length(c) = | \Agents(c) |$.
The \emph{length} of an exchange $\pi$, denoted by $\length(\pi)$,
is defined to be the maximum length of any of its trading cycles, i.e.,
$\length(\pi) = \max \{ \length(c) \, : \, c \in \Cycles_{\pi} \}$.
For an integer
$k \geq 2$,
let $\ExcSet_{n}^{k}$ denote the set of all exchanges over $[n]$ whose length at
most $k$.
Given a wish list vector
$\Wish \in \WishVecSpace_{n}$,
let
$\ExcSet_{\Wish}^{k} \subseteq \ExcSet_{n}^{k}$
denote the set of all exchanges over $[n]$ that respect $\Wish$ whose length
is at most $k$.
Likewise, let
$\Cycles_{\Wish}^{k} = \bigcup_{\pi \in \ExcSet_{\Wish}^{k}} \Cycles_{\pi}$
denote the set of all trading cycles over $[n]$ that respect $\Wish$ whose
length is at most $k$.

Given an integral \emph{length bound}
$k \geq 2$,
an instance of the \emph{$k$-barter exchange ($k$-BE)} problem over
$n \in \Integers_{> 0}$
agents is characterized by a wish list
$\Wish^{*}_{i} \subseteq [n] - \{ i \}$
for each agent
$i \in [n]$,
regarded as the agent's private information.
The outcome space of the $k$-BE problem is the collection
$\ExcSet_{n}^{k}$
of exchanges over $[n]$ of length at most $k$.
The utility of an agent
$i \in [n]$
from an exchange
$\pi \in \ExcSet_{n}^{k}$
is determined by the utility function
$\utility
:
[n] \times \ExcSet_{n}^{k} \rightarrow \{ -\infty \} \cup [0, 1]$
defined so that
\begin{equation}\textstyle
\label{equation:utility-function-uniform}
\utility(i, \pi)
\, = \,
\begin{cases}
0, & i \notin \Prtk_{\pi} \\
1, & i \in \Prtk_{\pi} \text{ and } \pi(i) \in \Wish^{*}_{i} \\
-\infty, & i \in \Prtk_{\pi} \text{ and } \pi(i) \notin \Wish^{*}_{i}
\end{cases}
\end{equation}
(cf.\
\cite{roth2005pairwise,
abbassi2015exchange}).
This allows us to define the \emph{social welfare} of $\pi$, denoted by
$\SW(\pi)$, as the sum of all agent utilities, observing that if
$\pi \in \ExcSet_{\Wish^{*}}^{k}$,
then
$\SW(\pi) = |\Prtk_{\pi}|$.

A (deterministic) \emph{mechanism} for the $k$-BE problem is a function $f$
that for any
$n \in \Integers_{> 0}$,
maps a given wish list vector
$\Wish \in \WishVecSpace_{n}$
to an exchange
$f(\Wish) \in \ExcSet_{n}^{k}$,
regarding $\Wish_{i}$ as the wish list reported to the mechanism by agent
$i \in [n]$.
Following \cite{roth2005pairwise} (see also \cite{abbassi2015exchange}), we
make the simplifying \emph{subset wish list assumption} stating that an
agent's reported wish list is always a subset of her true wish list, i.e.,
$\Wish_{i} \subseteq \Wish^{*}_{i}$
for every
$i \in [n]$.\footnote{%
In Appendix~\ref{appendix:subset-assumption}, we present a simple method that
lifts this assumption using an arbitrarily small amount of randomness.
}
The mechanism $f$ is said to be \emph{truthful} (a.k.a.\ \emph{strategy
proof}) if for every 
wish list vector
$\Wish \in \WishVecSpace_{n}$
(that satisfies
$\Wish_{j} \subseteq \Wish^{*}_{j}$
for all
$j \in [n]$)
and for every agent
$i \in [n]$,
it holds that
\[\textstyle
\utility(i, f(\Wish))
\, \leq \,
\utility \left( i, f \left( \Wish_{-i}, \Wish^{*}_{i} \right) \right)
\, ;
\]
in other words, it is a dominant strategy for agent $i$ to report her (full)
true wish list $\Wish^{*}_{i}$ to the mechanism.\footnote{%
Given a wish list vector
$\Wish \in \WishVecSpace_{n}$
and an agent
$i \in [n]$,
we adhere to the convention that $\Wish_{-i}$ denotes the vector
obtained from $\Wish$ by omitting its $i$th coordinate, that is,
$\Wish_{-i}
=
(\Wish_{1}, \dots, \Wish_{i - 1}, \Wish_{i + 1}, \dots, \Wish_{n})$.
This notation is naturally extended to agent subsets
$S \subseteq [n]$,
using $W_{-S}$ to denote the vector obtained from $\Wish$ by omitting its
$i$th coordinate for every
$i \in S$.}

To ensure that mechanism $f$ is \emph{individually rational}, we further
require that $f(\Wish)$ respects $\Wish$
(i.e.,
$f(\Wish) \in \ExcSet_{\Wish}^{k}$),
which means that the utility of the agents from $f$ is never negative.
Consequently, we can express the social welfare of the mechanism's outcome as
\begin{equation}\textstyle
\label{equation:social-welfare-uniform}
\SW(f(\Wish))
\, = \,
\sum_{i \in [n]} \utility(i, f(\Wish))
\, = \,
\sum_{c \in \Cycles_{f(\Wish)}} \length(c)
\, = \,
|\Prtk_{f(\Wish)}|
\, .
\end{equation}

We aim for truthful (and individually rational) mechanisms that maximize the
social welfare of the exchanges they produce.
To this end, we say that a truthful $k$-BE mechanism $f$ admits
\emph{approximation ratio}
$r \geq 1$
if for every
$n \in \Integers_{> 0}$
and wish list vector
$\Wish^{*} \in \WishVecSpace_{n}$,
it is guaranteed that
\[\textstyle
\SW(f(\Wish^{*}))
\, \geq \,
\frac{1}{r}
\cdot
\max \left\{
\SW(\pi) \, : \, \pi \in \ExcSet_{\Wish^{*}}^{k}
\right\}
\, ;
\]
that is, the benchmark of the approximation is the optimal exchange of length
at most $k$ that respects $\Wish^{*}$.
We typically regard the length bound $k$ as a small constant (recall our
particular interest in
$k = 3$),
whereas the number $n$ of agents grows asymptotically.
Our ultimate goal is to design truthful mechanisms whose approximation ratio
is expressed as a function of $k$ (i.e., constant approximation) and whose
running time is polynomial in $n$ (i.e., they are \emph{computationally
efficient}).

\paragraph{General Length Functions.}
As discussed earlier, we also consider a generalization of the $k$-BE problem,
referred to as 
\emph{$(k, \lenUtil)$-BE},
where
$\lenUtil : \{ 2, 3, \dots, k \} \rightarrow (0, 1]$
is a non-increasing \emph{length function}.
This generalization is obtained by redefining the utility function from
\eqref{equation:utility-function-uniform} as
\begin{equation}\textstyle
\label{equation:utility-function-general}
\utility(i, \pi)
\, = \,
\begin{cases}
0, & i \notin \Prtk_{\pi} \\
\lenUtil(\length(c)), &
i \in \Prtk_{\pi} \, ,
\pi(i) \in \Wish^{*}_{i} \, ,
\text{ and $c$ is the trading cycle in $\Cycles_{\pi}$ s.t.\
$i \in \Agents(c)$} \\
-\infty, & i \in \Prtk_{\pi} \text{ and } \pi(i) \notin \Wish^{*}_{i}
\end{cases} \, .
\end{equation}
Consequently, if
$\pi \in \ExcSet_{\Wish^{*}}^{k}$,
then the social welfare $\SW(\pi)$ of $\pi$ becomes
$\sum_{c \in \Cycles_{\pi}} \length(c) \cdot \lenUtil(\length(c))$
and the social welfare expression in
\eqref{equation:social-welfare-uniform} is now replaced by
\begin{equation}\textstyle
\label{equation:social-welfare-general}
\SW(f(\Wish))
\, = \,
\sum_{i \in [n]} \utility(i, f(\Wish))
\, = \,
\sum_{c \in \Cycles_{f(\Wish)}} \length(c) \cdot \lenUtil(\length(c))
\, .
\end{equation}
Similarly to the length bound $k$, the images
$\lenUtil(2), \lenUtil(3), \dots, \lenUtil(k)$
of the length function are also regarded as constants (part of the problem's
specification).

To see that the
$(k, \lenUtil)$-BE
problem indeed generalizes the $k$-BE problem, simply take the \emph{uniform}
length function $\lenUtil^{u}$ defined by setting
$\lenUtil^{u}(\ell) = 1$
for every
$2 \leq \ell \leq k$.
In this regard, a length function $\lenUtil$ is \emph{non-uniform} if
$\lenUtil(\ell) > \lenUtil(\ell')$
for some
$2 \leq \ell < \ell' \leq k$.\footnote{%
Note that scaling the images of $\lambda$ by a constant factor has no effect
on the approximation ratio.}
Unless stated otherwise, we subsequently address the
$(k, \lenUtil)$-BE
problem in its general form that includes both uniform and non-uniform length
functions $\lenUtil$.
Although the $k$-BE problem is a special case of this more general problem, it
is undoubtedly the most important special case and therefore, receives its own
focus.

\subsection{Our Contribution}
\label{section:contribution}
As observed by Roth et al.\ \cite{roth2005pairwise}, the
$2$-BE problem admits an efficient optimal truthful mechanism based on
constructing a maximum size matching (breaking ties consistently) in the
undirected graph
$G = ([n], E_{\leftrightarrow})$
obtained from the reported wish list vector
$\Wish \in \WishVecSpace_{n}$
by setting
$E_{\leftrightarrow}
=
\{ \{ i, j \} \mid i \in \Wish_{j} \text{ and } j \in \Wish_{i} \}$.\footnote{%
Notice that when
$k = 2$,
the length functions $\lenUtil$ have no role as all trading cycles are of the
same length.}
Consequently, our attention in the remainder of this paper is devoted to
length bounds
$k \geq 3$.

Arguably the simplest efficient
$(k, \lenUtil)$-BE
mechanism is the one we denote by \GreedyAlg{}.
This naive mechanism iterates over the possible trading cycles $c$ in
increasing order of length
$\length(c) = 2, 3, \dots, k$,
greedily adding trading cycle $c$ to $\Cycles_{\pi}$ if $c$ does not intersect
with any trading cycle already included in $\Cycles_{\pi}$.

\begin{theorem} \label{theorem:greedy-mechanism}
\GreedyAlg{} is truthful and approximates the
$(k, \lenUtil)$-BE
problem within ratio $k$ for every length bound
$k \geq 3$
and (uniform or non-uniform) length function $\lenUtil$.
\end{theorem}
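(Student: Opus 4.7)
The plan is to establish truthfulness and the $k$-approximation bound separately, after fixing a canonical ordering on the candidate cycles that depends only on $n$ and the cycles' vertex sets (e.g., primary key: length; secondary key: a lexicographic order on rotation-normalized vertex sequences). With this ordering used by \GreedyAlg{}, the reported wish list vector only affects which cycles are \emph{valid} (respected), not the order in which they are considered.

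For truthfulness, I fix an agent $i$ with true wish list $\Wish^{*}_{i}$ and any deviation $\Wish_{i} \subsetneq \Wish^{*}_{i}$, keeping other agents' reports fixed, and compare the truthful run to the deviating run. Let $c^{*}$ denote the unique cycle containing $i$ in the truthful output (if any) and $j^{*}$ its index in the canonical order (set $j^{*} = \infty$ if $i$ is unmatched). The central lemma, proved by induction on $j$, states that the selected sets of the two runs coincide after processing the first $j$ candidates for every $j < j^{*}$. The inductive step uses that cycles not containing $i$ have identical validity in both runs (agents other than $i$ report the same), while a cycle containing $i$ at index below $j^{*}$ is rejected in the truthful run and must therefore conflict with some already-selected cycle that does not contain $i$ (by the definition of $j^{*}$); this same conflict carries over to the deviating run.

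With the lemma in hand I consider two cases. If the successor of $i$ in $c^{*}$ lies in the reported $\Wish_{i}$, then $c^{*}$ is also valid under the deviation, it is selected at step $j^{*}$ in both runs, and the induction extends to all $j$, so $i$ receives the same cycle and the same utility. Otherwise $c^{*}$ is skipped under the deviation; any later cycle containing $i$ that the deviating run selects has canonical index exceeding $j^{*}$ and hence length at least $\length(c^{*})$, so the monotonicity of $\lenUtil$ yields that $i$'s deviating utility is at most $\lenUtil(\length(c^{*}))$, matching her truthful utility. The unmatched case ($j^{*} = \infty$) follows by extending the induction to every $j$, which forces $i$ to remain unmatched in both runs. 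This establishes that truthful reporting is weakly dominant.

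For the approximation bound, let $G = \Cycles_{\GreedyAlg(\Wish^{*})}$ and let $P = \Cycles_{\pi}$ for some $\pi \in \ExcSet_{\Wish^{*}}^{k}$ maximizing $\SW$. I will define a blame map $\beta : P \to G$ by setting $\beta(c) = c$ when $c \in P \cap G$ and, for $c \in P \setminus G$, letting $\beta(c)$ be any cycle in $G$ that was processed before $c$ and shares an agent with $c$---such a cycle must exist since $c$ is valid but rejected. The canonical order yields $\length(\beta(c)) \leq \length(c) \leq k$, and the vertex-disjointness of $P$'s cycles gives $|\beta^{-1}(c')| \leq \length(c')$ for every $c' \in G$. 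Combining these with $\lenUtil(\length(c)) \leq \lenUtil(\length(\beta(c)))$ (monotonicity) yields
\[
\SW(P) \, \leq \, \sum_{c' \in G} \sum_{c \in \beta^{-1}(c')} \length(c) \cdot \lenUtil(\length(c)) \, \leq \, \sum_{c' \in G} \length(c') \cdot k \cdot \lenUtil(\length(c')) \, = \, k \cdot \SW(G),
\]
as required. The main obstacle is the inductive step in the truthfulness lemma---specifically, pinning down that a cycle containing $i$ at index below $j^{*}$ must conflict with a cycle not containing $i$ (and is therefore blocked identically in both runs)---together with controlling the length of whatever cycle $i$ might end up in after step $j^{*}$ under the deviation; the approximation bound is a standard greedy exchange argument once the blame map is in place.
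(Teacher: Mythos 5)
Your proposal is correct, and its two halves relate to the paper differently. The approximation part is essentially the paper's own argument: your blame map $\beta$ is exactly the mapping $\mu$ of \Obs{}~\ref{observation:greedy-mapping} (length-monotonicity from the length-ordered processing, preimage bound $|\beta^{-1}(c')| \leq \length(c')$ from vertex-disjointness), and the chain of inequalities you display matches inequality~\eqref{equation:greedy-proof-cluster-contribution} summed over clusters. The truthfulness part, however, takes a genuinely different route. The paper defines \GreedyAlg{} as the concatenation $\GreedyAlg^{2} \cdots \GreedyAlg^{k}$ of per-length local search algorithms and derives truthfulness modularly from \Obs{}~\ref{observation:greedy-j}, \Lem{}~\ref{lemma:concatenation-truthful} and Corollary~\ref{corollary:concatenation-truthful-algorithms}, i.e., from the loyal/INPA properties of the expansion rule plus the $\succeq$ relation; this establishes the stronger $(k,\lenUtil)$-maxCGIS notion of truthfulness (robustness to hiding arbitrary node subsets) and the same machinery is reused later for $\NonUniformAlg_{q}$ and \IterativeOpt{}. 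You instead couple the truthful and deviating executions directly at the BE level: an induction showing the selected sets coincide up to the index $j^{*}$ of $i$'s truthful cycle, then a case split using that any cycle containing $i$ selected after index $j^{*}$ has length at least $\length(c^{*})$, so monotonicity of $\lenUtil$ closes the argument. This is more elementary and self-contained, but it is tailored to \GreedyAlg{}, relies on the report-independent processing order and the subset wish list assumption, and only yields mechanism-level truthfulness (which is all the theorem asks for). One phrasing nit in your inductive step: a cycle containing $i$ at index below $j^{*}$ need not conflict with a selected cycle --- it may simply be invalid because some \emph{other} agent's reported list omits her successor; but then it is invalid in both runs, so the induction goes through once you split into ``invalid'' versus ``valid but blocked by a selected cycle not containing $i$''.
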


Using the approximation bound of \Thm{}~\ref{theorem:greedy-mechanism} as the
starting point of our journey, this work is guided by the
following two research questions:
Does there exist an efficient truthful mechanism that approximates the
$(k, \lenUtil)$-BE
problem within ratio
$r < k$?
If so, how does the approximation ratio $r$ depend on the length function
$\lenUtil$?
To answer these questions, we first investigate the uniform length function
$\lenUtil = \lenUtil^{u}$.

\begin{theorem} \label{theorem:uniform-upper-bound}
The
$k$-BE
problem admits an efficient truthful mechanism with approximation ratio
$k - 1 + \epsilon$
for every length bound
$k \geq 3$
and constant
$\epsilon > 0$.\footnote{\label{footnote:agent-maximal-exchanges}%
The readers familiar with the work of Ashlagi and Roth \cite{ashlagi2014free}
may wonder whether our \Thm{}~\ref{theorem:uniform-upper-bound} is implied by
\cite[Theorem~1]{ashlagi2014free},
stating that any solution to the $k$-BE problem, which is maximal in terms of
the set of partaking agents, is guaranteed to be a
$(k - 1)$-approximation.
It is certainly not!
The exchanges constructed by the (truthful) algorithm that lies at the heart
of \Thm{}~\ref{theorem:uniform-upper-bound}'s proof are not necessarily
maximal (see \Sect{}~\ref{section:main-algorithm-bad-example} for an example
that demonstrates this fact), which forces us to use a considerably more
involved argument when bounding the approximation ratio (see
\Sect{}~\ref{section:uniform-approximation-ratio}) in comparison to the
$1$-paragraph proof of \cite[Theorem~1]{ashlagi2014free}'s upper bound.
The existence of efficient constructions of (agent-)maximal exchanges, let
alone truthful ones, remains an open question.%
}%
$^{\,}$%
\footnote{%
The runtime of the mechanism promised in this theorem, as well as that of the
one promised in \Thm{}~\ref{theorem:non-uniform-upper-bound}, is up-bounded
by
$n^{O ( k^{2} / \epsilon )}$
which is polynomial in $n$ (only) as long as $k$ and $\epsilon$ are
constants.
In the current paper, we do not focus on improving the dependency of the
runtime bound on $k$ and $\epsilon$.%
}
\end{theorem}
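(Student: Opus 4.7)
The plan is to recast the $k$-BE problem as a weighted independent set problem on a \emph{cycle conflict graph} $\CycGraph_\Wish$ induced by the reported wish list vector $\Wish$: its vertices are the trading cycles in $\Cycles_\Wish^k$, two cycles are adjacent iff they share an agent, and vertex $c$ carries weight $\Weight(c) = \length(c)$. Every exchange $\pi \in \ExcSet_\Wish^k$ then corresponds to an independent set in $\CycGraph_\Wish$ of weight exactly $\SW(\pi)$. I would then run a local-search mechanism $\LocalSearchAlg$ that maintains such an exchange $\pi$ and iteratively performs a \emph{$p$-swap}: remove at most $p$ cycles from $\Cycles_\pi$ and insert a collection of agent-disjoint cycles from $\Cycles_\Wish^k$ so that the total length strictly grows. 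Choosing the swap size $p = \Theta(k/\epsilon)$ and initializing with the $\GreedyAlg$ output will be enough to drive the analysis. Since $\SW$ increases by at least $1$ per iteration and is capped by $n$, at most $O(n)$ swaps occur, each enumerated in $n^{O(pk)} = n^{O(k^2/\epsilon)}$ time, matching the footnote's runtime bound.

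For the approximation guarantee, I would adapt the standard local-optimality argument for weighted $k$-set packing (in the spirit of Arkin--Hassin and Chandra--Halld\'orsson) to cycles whose weights equal their sizes. Fix a local optimum $\pi$ and a social-welfare-maximizing exchange $\pi^{*} \in \ExcSet_\Wish^k$; for each cycle $c^{*} \in \Cycles_{\pi^{*}} \setminus \Cycles_\pi$, I would charge its agents to the cycles of $\pi$ whose agent sets intersect $\Agents(c^{*})$. A potential-style accounting that exploits the fact that no $p$-swap is improving then yields $\SW(\pi^{*}) \leq (k - 1 + \epsilon) \cdot \SW(\pi)$ once $p$ is taken sufficiently large in terms of $k$ and $\epsilon$; the key numerical point is that the ``free'' factor of $k$ coming from cycle length collapses by one because any cycle of $\pi^{*}$ that is disjoint from $\Cycles_\pi$ could itself be used as an improving $0$-swap, so the charging effectively overlaps cycles of $\pi$ with cycles of $\pi^{*}$ in a non-trivial way.

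The main obstacle, and the heart of the proof, is \emph{truthfulness}: a naive local search is manipulable, since an agent $i$ could under-report her wish list in order to alter the swap trajectory toward a terminal $\pi$ in which she participates. To neutralize this, I would invoke the truthful local-search framework developed later in the paper, instantiated here by fixing a canonical, agent-oblivious total order on $\Cycles_n^k$ and, at each step, applying the lexicographically first improving $p$-swap under this order; the swap rule must be engineered so that the set $\Prtk_\pi$ of partaking agents is monotone in each individual wish list under the subset wish list assumption. The formal argument would couple two executions of $\LocalSearchAlg$ on wish list vectors $(\Wish_{-i}, \Wish_i)$ and $(\Wish_{-i}, \Wish^{*}_i)$ with $\Wish_i \subseteq \Wish^{*}_i$, inductively matching their states so that the set of agents partaking in the second run always contains the set of agents partaking in the first. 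Combined with individual rationality (which is built into every local-search step because every cycle considered lies in $\Cycles_\Wish^k$), this monotonicity forces reporting $\Wish^{*}_i$ to be dominant for every agent $i$, completing the proof.
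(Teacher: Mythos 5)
Your overall architecture matches the paper's: reduce to maximum weight independent set in the $k$-cycle graph, run a local-search procedure, and argue truthfulness through a dedicated framework rather than hoping a vanilla local search is incentive compatible. However, there is a genuine gap at the heart of the argument, namely in how truthfulness is actually obtained. Your improvement step is a plain weight-improving $p$-swap (remove up to $p$ cycles, insert disjoint cycles, total length grows), and you propose to make this truthful by fixing a canonical lexicographic order and ``engineering'' monotonicity of $\Prtk_{\pi}$ in each wish list, verified by coupling the executions on $(\Wish_{-i},\Wish_i)$ and $(\Wish_{-i},\Wish^{*}_i)$. No such coupling is available for lexicographically-first improving $p$-swaps: adding nodes can redirect an early swap and the two trajectories diverge, with no containment between the final partaking sets. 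The paper's resolution is structurally different: every improvement rule is required to be \emph{loyal} (the new IS's agent set must contain the old one, so no swap ever evicts a partaking agent) and \emph{INPA} (independent of non-partaking agents). Loyalty makes the partaking set monotone \emph{along a single execution}, and INPA then yields that a non-partaking agent cannot change the output at all by hiding nodes, which is exactly what $\{0,1\}$-utilities require (\Obs{}~\ref{observation:uniform-inpa-implies-truthful}, \Lem{}~\ref{lemma:local-search-algorithm-inpa-efficient}). Indeed, the paper's Section~\ref{section:lower-bound-local-search} shows that ``standard'' local search of the kind you describe — expansion rule plus lexicographically-first small swaps without the loyalty restriction — can be steered by an agent who hides nodes precisely because a swap may drop her node, so your truthfulness plan as stated would fail.

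This gap propagates into the approximation analysis. The charging argument you cite (Arkin--Hassin / Chandra--Halld\'orsson style) applies to a $p$-swap \emph{local optimum}, but the truthful algorithm is not such a local optimum: its second rule (the all-for-$q$ rule) only permits swaps in which the agent set strictly grows, so many weight-improving swaps are forbidden and the output need not even be agent-maximal (see \Sect{}~\ref{section:main-algorithm-bad-example} and footnote~\ref{footnote:agent-maximal-exchanges}). The paper therefore cannot reuse the standard charging and instead proves a tailored bound: it analyzes each connected component $H$ of $G[I\cup O]$, uses the degree bound $\Degree_H(v)\le|\Agents(v)|$ from \Obs{}~\ref{observation:cycle-graph-structure}, and splits into the case where some $v\in I_H$ has $\Degree_H(v)<\Weight(v)$ (handled by the bipartite counting in \Prop{}~\ref{proposition:combinatorial-property}) and the case $\Degree_H(v)=\Weight(v)$ for all $v$, which is exactly when the \emph{loyal} all-for-$q$ swap with $X=O_H$ is available, forcing $|I_H|>q$ and giving the $k-1+1/q$ bound. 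You would need to supply both the loyalty-constrained swap rule and this case analysis; neither follows from the standard set-packing machinery you invoke.
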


To the best of our knowledge, Theorem~\ref{theorem:uniform-upper-bound}
provides the first non-trivial upper bound on the approximation ratio of
efficient truthful $k$-BE mechanisms for any
$k \geq 3$.
Next, we turn our attention to non-uniform length functions
$\lenUtil : \{ 2, 3, \dots, k \} \rightarrow (0, 1]$
and define the parameter
\[\textstyle
\criticalRatio(\lenUtil)
\, = \,
\textstyle\max_{%
2 \leq \ell < \ell' \leq k \, : \, \lenUtil(\ell) > \lenUtil(\ell')%
}
\left\{
\textstyle\max \left\{
\tfrac{\ell' \cdot \lenUtil(\ell')}{\lenUtil(\ell)}, \,
\tfrac{\ell - 1}{\ell}
\cdot
\tfrac{\ell' \cdot \lenUtil(\ell')}{\lenUtil(\ell)}
+ 1
\right\}
\right\}
\, .
\]

\begin{theorem} \label{theorem:non-uniform-lower-bound}
The
$(k, \lenUtil)$-BE
problem does not admit a truthful mechanism with approximation ratio strictly
smaller than $\criticalRatio(\lenUtil)$ for every length bound
$k \geq 3$
and non-uniform length function $\lenUtil$.
\end{theorem}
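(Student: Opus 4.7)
The plan is to prove the lower bound separately for each of the two expressions in the inner $\max$ defining $\criticalRatio(\lenUtil)$, by exhibiting, for any alleged truthful mechanism $f$, a compact family of $(k, \lenUtil)$-BE instances that differ only in a single agent's reported wish list and that together witness the desired ratio. Fix a pair $(\ell, \ell')$ with $2 \leq \ell < \ell' \leq k$ and $\lenUtil(\ell) > \lenUtil(\ell')$ attaining the outer maximum in $\criticalRatio(\lenUtil)$.

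The gadget I would use is built around $\ell$ pairwise disjoint potential $\ell'$-cycles $C_{1}, \dots, C_{\ell}$; each $C_{i}$ has a designated pivot agent $v_{i}$ that, besides participating in $C_{i}$, also has the option of forming an $\ell$-cycle together with a pool of $\ell - 1$ auxiliary agents that is shared across all of the $v_{i}$'s. Because the pool is shared, at most one $\ell$-cycle can be active at a time, yet any individual $v_{i}$ can serve as its pivot. Depending on which expression in $\criticalRatio(\lenUtil)$ dominates, I would optionally tack on a small number of ``captive'' $\ell'$-cycles (on fresh agents, with no $\ell$-cycle alternative) so that the benchmark exchange matches the numerator of the dominant expression; this captive mass is what gives rise to the $+1$ term in the second expression, whereas the first expression (dominating when $\ell' \lenUtil(\ell') \geq \ell \lenUtil(\ell)$) arises from the bare gadget in which the benchmark simply uses all $\ell$ of the $\ell'$-cycles.

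The heart of the argument is to trace what truthfulness forces on the deviation instance in which some pivot $v_{j}$ reports a reduced wish list that blocks her own $\ell'$-cycle. On this deviation instance the welfare-maximizing exchange routes $v_{j}$ into the (unique) $\ell$-cycle and keeps every other pivot in its own $\ell'$-cycle; however, doing so would lift $v_{j}$'s utility from her base-instance value $\lenUtil(\ell')$ to $\lenUtil(\ell)$, contradicting the dominant-strategy requirement under the subset wish list convention. The mechanism's response on the deviation instance must therefore either drop $v_{j}$ from the $\ell$-cycle (forfeiting one $\ell'$-cycle's contribution relative to the benchmark) or abstain from the $\ell$-cycle altogether, and in both subcases the gap between benchmark welfare and the best admissible response hits exactly the dominant expression in $\criticalRatio(\lenUtil)$.

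The main obstacle is that single-agent truthfulness is a relatively weak lever, so the gadget has to be sized and populated very precisely: the shared pool must be large enough that exactly one $\ell$-cycle fits; the number of captive $\ell'$-cycles must be tuned so that the surviving cycles add up on the nose; and we must verify that no clever alternative truthful response on the deviation instance recovers the lost welfare. I expect the bookkeeping to reduce, after a short case split on which of $\ell \lenUtil(\ell)$ and $\ell' \lenUtil(\ell')$ is larger, to the two closed-form expressions in the definition of $\criticalRatio(\lenUtil)$, with only minor attention required for degeneracies such as $\ell = 2$ (where the captive cycles may collapse) and the equality case $\ell \lenUtil(\ell) = \ell' \lenUtil(\ell')$ (where the two expressions coincide).
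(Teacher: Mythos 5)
The core gap is that a single application of truthfulness from an unpinned base instance cannot force the mechanism's hand. Your contradiction assumes that on the base instance the pivot $v_{j}$'s utility is $\lenUtil(\ell')$, but nothing in your setup forces this: the mechanism may assign $v_{j}$ to the $\ell$-cycle on the base instance, or more generally arrange its outputs on the two profiles so that no violation of the dominant-strategy inequality arises, and the approximation requirement on the base instance is far too weak to pin its output down (outputting, say, $\ell-1$ of the long cycles already beats the threshold implied by a ratio of $\tfrac{\ell'\lenUtil(\ell')}{\lenUtil(\ell)}$). The paper's proof avoids this by anchoring at a degenerate profile of the $(h,v)$-comb in which \emph{only} the length-$h$ cycle exists, so that any mechanism with a bounded ratio must select it, and then running an induction over a chain of $h$ profiles, each differing in a single black agent's report, repeatedly applying truthfulness to propagate $f(\Wish^{i}) = \pi_{H}$ all the way back to the fully truthful profile (\Lem{}~\ref{lemma:comb-instance-output}). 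This anchor-and-propagate chain is what forces the mechanism's welfare down to $h \cdot \lenUtil(h)$ against the benchmark $h \cdot v \cdot \lenUtil(v)$; your one-step argument has no analogue of it, and an adversarial truthful mechanism simply evades it.

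In addition, your gadget does not have the right conflict structure, and the ``captive'' cycles cannot produce the $+1$ term. In your construction the $\ell$-cycle consists of one pivot plus a shared auxiliary pool, so it intersects only one of the $\ell$ long cycles; even granting that truthfulness excludes $v_{j}$ from the $\ell$-cycle on the deviation profile, the mechanism can still include the other $\ell-1$ long cycles, so the loss is a single cycle's worth and the resulting ratio is of the form $1 + \tfrac{\ell\,\lenUtil(\ell)}{(\ell-1)\,\ell'\lenUtil(\ell')}$, far below $\tfrac{\ell'\lenUtil(\ell')}{\lenUtil(\ell)}$. In the paper's comb the short cycle is formed by \emph{all} $h$ pivots jointly, so it conflicts with every long cycle and forcing it collapses the entire welfare. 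Likewise, disjoint captive $\ell'$-cycles on fresh agents add the same amount to the benchmark and to what the mechanism can (and, being unconstrained there, will) achieve, pushing the ratio toward $1$ rather than creating the additive $1$. The paper instead obtains the second expression from the double $(h,v)$-comb, where a second length-$h$ cycle $c_{R}$ shares the agent $l_{h}$ with $c_{L}$: one chain of deviations by $r_{1},\dots,r_{h-1}$ shows $c_{R}\notin\Cycles_{f(\Wish^{h-1})}$, a further deviation by the shared agent $l_{h}$ forces $f(\Wish^{h-1})=\pi_{L}$, and the benchmark on $\Wish^{h-1}$ may use $c_{R}$ together with the $h-1$ leftmost vertical cycles, yielding $\tfrac{(h-1)v\lenUtil(v)+h\lenUtil(h)}{h\lenUtil(h)}$. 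Your proposal would need both the joint-short-cycle comb structure and the iterated truthfulness chains to go through.
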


Refer to Figure~\ref{figure:comb-instance} for an intuition regarding (a
slightly weaker version of) the $\criticalRatio(\lenUtil)$-lower bound of
\Thm{}~\ref{theorem:non-uniform-lower-bound};
an elaborate discussion of the properties of $\criticalRatio(\lenUtil)$ is
provided in Appendix~\ref{appendix:critical-ratio}.
We emphasize that this lower bound is universal in the sense that it applies
to \emph{any} non-uniform length function $\lenUtil$.
Moreover, it does not contradict \Thm{}~\ref{theorem:greedy-mechanism} as
$\criticalRatio(\lenUtil) < k$
for every non-uniform length function $\lenUtil$ (see
Appendix~\ref{appendix:critical-ratio}).
Surprisingly, $\criticalRatio(\lenUtil)$ plays a central role in
our approximation ratio upper bound as well.

\begin{theorem} \label{theorem:non-uniform-upper-bound}
The
$(k, \lenUtil)$-BE
problem admits an efficient truthful mechanism with approximation ratio
$\max \{ k - 1 + \epsilon, \criticalRatio(\lenUtil) \}$
for every length bound
$k \geq 3$,
non-uniform length function $\lenUtil$,
and constant
$\epsilon > 0$.
\end{theorem}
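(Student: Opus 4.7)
The plan is to reuse the truthful local search framework that underlies Theorem~\ref{theorem:uniform-upper-bound}, but to drive it by the non-uniform social welfare of \eqref{equation:social-welfare-general} rather than by the uniform objective. Concretely, I would initialize the mechanism with the output of $\GreedyAlg$ and iteratively perform any $\leq k$-exchange swap that multiplies the current $\SW$ value by at least a factor $1+\delta$ (with $\delta$ chosen as a suitable function of $\epsilon$ and $k$), terminating at a locally optimal exchange $\pi$. The truthfulness argument used for Theorem~\ref{theorem:uniform-upper-bound} should transfer essentially as-is, since that framework only requires the improvement criterion to depend on the current collection of trading cycles; the $n^{O(k^2/\epsilon)}$ runtime bound is likewise preserved because each profitable swap increases $\SW$ geometrically while $\SW$ is bounded by $n$.

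For the approximation analysis, I would fix an optimal exchange $\pi^{*}$ and account for the welfare of every $c^{*} \in \Cycles_{\pi^{*}}$ via a two-regime charging scheme that inspects how $c^{*}$ overlaps with the cycles of $\Cycles_{\pi}$. In the ``uniform-style'' regime, where no $c \in \Cycles_{\pi}$ sharing agents with $c^{*}$ has both $\length(c) < \length(c^{*})$ and $\lenUtil(\length(c)) > \lenUtil(\length(c^{*}))$, the charging argument of Theorem~\ref{theorem:uniform-upper-bound}, reweighted by $\lenUtil$, applies and costs a factor of $k-1+\epsilon$. In the complementary ``$\criticalRatio$ regime,'' a short high-value cycle $c \in \Cycles_{\pi}$ of length $\ell$ shares at least one agent with the longer, lower-value $c^{*}$ of length $\ell'$; local optimality of $\pi$ under the non-uniform $\SW$ forbids swapping $c$ out in favor of $c^{*}$. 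Writing out this no-improvement inequality in the two sub-cases, namely (i) $\Agents(c) \subseteq \Agents(c^{*})$ and (ii) $\Agents(c) \not\subseteq \Agents(c^{*})$, produces exactly the two expressions $\tfrac{\ell' \cdot \lenUtil(\ell')}{\lenUtil(\ell)}$ and $\tfrac{\ell - 1}{\ell} \cdot \tfrac{\ell' \cdot \lenUtil(\ell')}{\lenUtil(\ell)} + 1$ appearing in the definition of $\criticalRatio(\lenUtil)$.

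The main obstacle I anticipate lies in combining the two regimes when a single $c^{*} \in \Cycles_{\pi^{*}}$ is partly covered by $\pi$-cycles falling into the uniform-style regime and partly by $\pi$-cycles falling into the $\criticalRatio$ regime; a naive addition of the two bounds would double-count. To avoid this, I would partition $\Agents(c^{*})$ according to the regime of its covering cycle in $\pi$, apportion $\length(c^{*}) \cdot \lenUtil(\length(c^{*}))$ proportionally between the two parts, and verify that the $\leq k$-exchange witnessing local optimality in the $\criticalRatio$ regime is itself a valid member of $\ExcSet_{\Wish}^{k}$ (this is where the constraint $\ell' \leq k$ in the definition of $\criticalRatio$ is actually used). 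Once this partitioning is in place, the two regimes contribute bounds of $k-1+\epsilon$ and $\criticalRatio(\lenUtil)$, respectively, to disjoint portions of the welfare, and taking the worst-case per-unit factor gives the desired approximation ratio $\max\{k-1+\epsilon,\criticalRatio(\lenUtil)\}$.
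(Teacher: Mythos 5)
There is a genuine gap, and it is in the step you dismiss as routine: the claim that the truthfulness argument of \Thm{}~\ref{theorem:uniform-upper-bound} ``transfers essentially as-is'' to a local search driven by the non-uniform social welfare. The paper's framework derives truthfulness from the INPA property, and INPA implies truthfulness only for the \emph{uniform} length function (\Obs{}~\ref{observation:uniform-inpa-implies-truthful}): there the utility is binary, so a partaking agent has nothing to gain and a non-partaking agent cannot change the outcome. With a non-uniform $\lenUtil$, a partaking agent's utility depends on the length of the cycle she ends up in, and neither loyalty nor INPA protects against her manipulations: a weight-improving swap (your $(1+\delta)$-improvement) may legitimately move an agent from a short cycle into a longer one of lower $\lenUtil$-value, so she can profit by hiding nodes/arcs to block exactly that improvement and lock in her short cycle. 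This is the same misalignment the paper points out when explaining why a weight-prioritizing greedy is not truthful (\Sect{}~\ref{section:greedy}), and it is why the paper does \emph{not} run any $\lenUtil$-weighted local search. Instead it defines $\NonUniformAlg_{q} = \GreedyAlg^{\leq \ell^{*}} \cdot \LocalSearchAlg_{q}^{> \ell^{*}}$: a length-ordered greedy phase over the short (high-utility) cycles, concatenated---via the $\succeq$ relation and \Lem{}~\ref{lemma:concatenation-truthful}---with a local search confined to nodes of length $> \ell^{*}$, on which $\lenUtil$ is constant, so that phase is effectively uniform and \Obs{}~\ref{observation:uniform-inpa-implies-truthful} applies. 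Your proposal contains no mechanism-design ingredient playing this role, so as stated the algorithm is not truthful and the theorem does not follow.

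The approximation part of your plan is also organized differently from the paper (and is where your own flagged double-counting worry lives): the paper does not charge per optimal cycle with a two-regime case analysis, but splits the node set into the greedy region $V_{gr}$ and the remainder $V_{ls}$, bounds the optimum inside $V_{gr}$ by $\max\{k-1,\criticalRatio(\lenUtil)\}\cdot\Weight(I_{gr})$ using the greedy mapping of \Obs{}~\ref{observation:greedy-mapping} (\Lem{}~\ref{lemma:non-uniform-approximation-gr}), and bounds the optimum inside $V_{ls}$ by $(k-1+1/q)\cdot\Weight(I_{ls})$ by reduction to the uniform analysis (\Lem{}~\ref{lemma:non-uniform-approximation-ls}). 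Your two sub-case expressions do mirror the two terms in $\criticalRatio(\lenUtil)$, but without a truthful algorithm whose structure supports such a decomposition, the charging scheme cannot be completed.
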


Refer to Appendix~\ref{appendix:critical-ratio} for a discussion of the
conditions under which the lower bound of
\Thm{}~\ref{theorem:non-uniform-lower-bound} matches the upper bound of
\Thm{}~\ref{theorem:non-uniform-upper-bound}.
The following theorem shows that the lower bound of
\Thm{}~\ref{theorem:non-uniform-lower-bound} is tight if we disregard
computational efficiency considerations.

\begin{theorem} \label{theorem:no-time-limit-upper-bound-non-uniform}
The
$(k, \lenUtil)$-BE
problem admits a computationally inefficient truthful mechanism with
approximation ratio $\criticalRatio(\lenUtil)$ for every length bound
$k \geq 3$
and non-uniform length function $\lenUtil$.
\end{theorem}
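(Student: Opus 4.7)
My plan is to obtain the promised mechanism by instantiating the truthful local-search framework used in the proof of Theorem~\ref{theorem:non-uniform-upper-bound}, but with the efficiency constraint dropped. Since Theorem~\ref{theorem:no-time-limit-upper-bound-non-uniform} allows computationally inefficient mechanisms, the local search can be run with unbounded swap size --- effectively, until no further improving swap of any size is possible --- rather than with the bounded swap size of $O(k/\epsilon)$ that is used to keep the runtime polynomial in Theorem~\ref{theorem:non-uniform-upper-bound}. The additive $k - 1 + \epsilon$ term in that theorem's approximation ratio is an artifact of this polynomial-time restriction on swap size, and removing it should leave exactly the term $\criticalRatio(\lenUtil)$.

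The first step is to define the mechanism precisely: starting from the empty exchange, iteratively replace the current $\pi \in \ExcSet_{\Wish}^{k}$ with any exchange $\pi' \in \ExcSet_{\Wish}^{k}$ satisfying the same improvement criterion as in the proof of Theorem~\ref{theorem:non-uniform-upper-bound}, regardless of how many cycles of $\Cycles_{\pi}$ must be replaced. The mechanism halts when no such $\pi'$ exists. The next step is to verify that the mechanism remains truthful: the truthfulness argument for the bounded-swap local search rests on a monotonicity property of the improvement criterion under an agent shrinking her reported wish list, and this monotonicity is preserved regardless of swap size, so the argument transports verbatim.

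The third and crucial step is to analyze the approximation ratio of any fixed point $\pi$ of the unbounded local search. Let $\pi^{*} \in \ExcSet_{\Wish^{*}}^{k}$ be an optimal exchange. The plan is to set up a charging scheme that assigns each cycle $c^{*} \in \Cycles_{\pi^{*}}$ to a subset of ``conflicting'' cycles in $\Cycles_{\pi}$ (those sharing at least one agent with $c^{*}$). Unbounded local-optimality guarantees that for any subset $S \subseteq \Cycles_{\pi^{*}}$, the total welfare of the conflicting cycles in $\Cycles_{\pi}$ is at least a $1/\criticalRatio(\lenUtil)$ fraction of the total welfare of $S$; otherwise, swapping out those conflicting cycles in favor of $S$ would be an improving move, contradicting local-optimality. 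Summing over a carefully chosen partition of $\Cycles_{\pi^{*}}$ then yields the claimed approximation bound.

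The main obstacle lies in the third step, specifically in ensuring that the charging produces the tight constant $\criticalRatio(\lenUtil)$ rather than a weaker bound. The two-term max in the definition of $\criticalRatio(\lenUtil)$ reflects two qualitatively different blocking configurations between short high-$\lenUtil$ cycles and long low-$\lenUtil$ cycles --- one in which a single conflicting cycle in $\Cycles_{\pi}$ blocks the entirety of a cycle in $\Cycles_{\pi^{*}}$, and one in which several shorter conflicting cycles collectively block it while simultaneously preserving part of $\pi$'s welfare. The charging scheme will have to branch on these configurations and invoke local-optimality separately for each, essentially mirroring the comb-instance construction of Theorem~\ref{theorem:non-uniform-lower-bound} in reverse.
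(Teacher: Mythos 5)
There is a genuine gap, and it starts with a misdiagnosis of where the $k-1+\epsilon$ term comes from. In the mechanism behind \Thm{}~\ref{theorem:non-uniform-upper-bound}, the bound $\max\{k-1+\epsilon,\criticalRatio(\lenUtil)\}$ is not produced solely by the bounded swap size of the local-search stage on the long cycles: the first stage, $\GreedyAlg^{\leq \ell^{*}}$, already contributes a factor of up to $\ell^{*}\leq k-1$ against the optimum restricted to its part of the graph (\Lem{}~\ref{lemma:non-uniform-approximation-gr}), because greedy within a constant-utility class only guarantees maximality, not optimality. Since $\criticalRatio(\lenUtil)$ can be arbitrarily close to $1$ (e.g.\ $k=3$ with $\lenUtil(3)/\lenUtil(2)$ tiny, where the hard instances consist only of $2$-cycles and greedy is a plain maximal matching), keeping the greedy stage and merely unbounding the swaps in the second stage cannot yield ratio $\criticalRatio(\lenUtil)$; it stays stuck near $k-1$. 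The paper's computationally inefficient mechanism \IterativeOpt{} therefore does something different: it partitions the lengths into the maximal classes on which $\lenUtil$ is constant, computes an \emph{exact} maximum-weight IS within each class (this is where inefficiency is spent), and concatenates these exact solvers in order of decreasing utility; truthfulness comes from the fact that each class-restricted solver faces an effectively uniform utility (so INPA implies truthfulness) together with the $\succeq$ relation and Corollary~\ref{corollary:concatenation-truthful-algorithms}, and the ratio $\criticalRatio(\lenUtil)$ comes from a per-class charging argument (\Lem{}~\ref{lemma:non-efficient-component-approximation}).

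Your truthfulness and approximation steps also do not go through as stated. If instead you mean a single unbounded, weight-driven local search over all cycle lengths, the claim that the truthfulness argument ``transports verbatim'' fails: \Obs{}~\ref{observation:uniform-inpa-implies-truthful} (INPA $\Rightarrow$ truthful) is specific to the uniform length function, and with mixed lengths an agent's preference for shorter cycles is not aligned with weight maximization, which is exactly why the paper resorts to the level-by-level concatenation. Moreover, your charging argument in the third step is incompatible with the loyalty requirement that the truthfulness machinery needs: the improvement rules must satisfy $\Agents(I)\subseteq\Agents(I')$, so ``swap out the conflicting cycles in favor of $S$'' is generally not a legal improving move (it abandons currently served agents), and local optimality therefore does not imply that the conflicting cycles carry a $1/\criticalRatio(\lenUtil)$ fraction of $S$'s welfare. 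Dropping loyalty to enable such swaps breaks the truthfulness argument, as the paper explicitly notes after \Lem{}~\ref{lemma:local-search-algorithm-inpa-efficient}. The missing idea is thus the exact per-utility-class optimization combined with concatenation, not an unbounded-swap version of the local search.
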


Back to uniform length functions, as proved by Hatfield
\cite{hatfield2005pairwise}, the $k$-BE problem admits a computationally
inefficient optimal truthful mechanism for every length bound
$k \geq 3$,
which means that \Thm{}~\ref{theorem:non-uniform-lower-bound} separates
between the
$k$-BE problem and the
$(k, \lenUtil)$-BE
problem with non-uniform $\lenUtil$.
This leaves us with the following interesting open question:
What is the best possible approximation ratio achievable by efficient truthful
mechanisms for the $k$-BE problem?
The existence of an optimal computationally inefficient truthful mechanism for
this problem implies that any attempt to devise a non-trivial lower bound on
the approximation ratio will have to rely on computational complexity hardness
assumptions.
Approximability lower bounds that employ such hardness assumptions are
discussed in \Sect{}~\ref{section:related-work} for the combinatorial
optimization aspect of the $k$-BE problem, irrespective of truthfulness
considerations.
Whether they can be improved by exploiting both the truthfulness requirement
and the efficiency requirement remains an intriguing, and notoriously
challenging, open question (we are unaware of any lower bound result that
exploits both requirements concurrently in the context of mechanism design
without money).

The mechanisms that lie at the heart of \Thm{}\
\ref{theorem:uniform-upper-bound} and \ref{theorem:non-uniform-upper-bound}
belong to a family that we refer to as \emph{standard local search
mechanisms} (see \Sect{}~\ref{section:local-search-algorithm} for a
definition).
It turns out that among the mechanisms in this family, the ones promised in
\Thm{}\ \ref{theorem:uniform-upper-bound} and
\ref{theorem:non-uniform-upper-bound} are optimal in terms of the dependency
of their approximation ratio on $k$.

\begin{theorem} \label{theorem:local-search-lower-bound}
The 
$(k, \lenUtil)$-BE
problem does not admit a truthful standard local search mechanism with
approximation ratio strictly smaller than
$k - 1$
for every length bound
$k \geq 3$
and (uniform or non-uniform) length function
$\lenUtil$.
\end{theorem}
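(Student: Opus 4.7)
My plan is to build a single family of $(k,\lenUtil)$-BE instances on which every truthful standard local search (LS) mechanism is driven to output an exchange of social welfare at most a $1/(k-1)$ fraction of the optimum. The instance uses $(k-1)k + 1$ agents partitioned into $k-1$ vertex-disjoint directed ``good'' cycles $G_{1}, \ldots, G_{k-1}$, each of length exactly $k$, plus a distinguished agent $b$. The wish list vector $\Wish^{*}$ consists of the arcs of every $G_{i}$ together with the arcs of a single ``blocking'' cycle $B$ of length $k$ that visits $b$ and exactly one designated agent $a_{i} \in \Agents(G_{i})$ from each good cycle. A routine check shows that the trading cycles in $\Cycles_{\Wish^{*}}^{k}$ are precisely the $G_{i}$'s and $B$, so the optimal exchange is $\bigcup_{i} G_{i}$ with social welfare $(k-1) \cdot k \cdot \lenUtil(k)$, while the exchange $\pi_{B}$ whose sole trading cycle is $B$ has welfare only $k \cdot \lenUtil(k)$.

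The combinatorial heart of the lower bound is to argue that, under the standard LS neighborhood formalized in \Sect{}~\ref{section:local-search-algorithm}, the exchange $\pi_{B}$ is a local optimum. The key structural observation is that each designated agent $a_{i}$ appears in exactly two cycles of $\Cycles_{\Wish^{*}}^{k}$, namely $G_{i}$ and $B$, so installing any good cycle $G_{i}$ requires first removing $B$ in its entirety. Once $B$ is removed, a standard LS step is only permitted to reinsert a bounded collection of cycles from the current neighborhood; the best such move inserts a single $G_{i}$, yielding welfare $k \cdot \lenUtil(k)$, which matches that of $\pi_{B}$ and is therefore not a strict improvement. Hence $\pi_{B}$ qualifies as a valid terminal output of the LS dynamics on $\Wish^{*}$, and the approximation ratio on this instance is exactly $(k-1) \cdot k \cdot \lenUtil(k) / (k \cdot \lenUtil(k)) = k - 1$.

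The hard part, where I expect most of the work, is promoting ``$\pi_{B}$ is \emph{a} local optimum'' to ``every truthful standard LS mechanism actually returns $\pi_{B}$ on at least one member of this instance family''. My plan is to exploit the symmetry among the $k$ cycles of $\Cycles_{\Wish^{*}}^{k}$: if on the base instance $\Wish^{*}$ the mechanism returns an output $\pi$ distinct from $\pi_{B}$, say one containing $G_{1}$, then by the subset wish list assumption I can restrict the reported wish lists by deleting the arcs of $G_{j}$ for each $j \neq 1$; a truthfulness-based monotonicity argument shows that the mechanism's output cannot drop $G_{1}$, but after this restriction the remaining wish graph becomes isomorphic, via a consistent relabeling of agents, to the original instance with the roles of $B$ and one of the good cycles swapped. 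Iterating the relabeling across all $k$-fold symmetries then forces at least one member of the family on which the mechanism returns $\pi_{B}$, closing the argument. The main obstacle is ensuring that the truthfulness-driven sub-instance reduction respects the precise tie-breaking rule built into the standard LS neighborhood, which I expect to resolve by a careful unfolding of the definition from \Sect{}~\ref{section:local-search-algorithm}.
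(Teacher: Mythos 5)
Your gadget is essentially the paper's gadget $H$ (your blocking cycle $B$ is the paper's $c_{1}$, your good cycles are $c_{2},\dots,c_{k}$, and your agent $b$ is the one agent of $c_{1}$ not shared with any good cycle), so the benchmark computation giving the ratio $k-1$ is fine. The genuine gap is in your central combinatorial claim that $\pi_{B}$ is a local optimum for \emph{every} standard local search neighborhood. Nothing in the definition of a standard local search algorithm limits an improvement step to ``remove $B$, then reinsert a single cycle'': an improvement rule is any map $(G,I)\mapsto I'$ with $\Weight(I')>\Weight(I)$, and a rule that removes $B$ and inserts all of $G_{1},\dots,G_{k-1}$ in one step (raising the weight from $k\lenUtil(k)$ to $(k-1)k\lenUtil(k)$) is a perfectly legal improvement rule; Property~3 (small symmetric difference) only constrains instances with $n\geq n_{0}(\epsilon)$ agents and hence says nothing on your constant-size gadget, and Property~2 does not forbid such a rule either. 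Standard local search algorithms are \emph{not} required to be loyal, so ruling out this escape is exactly where the truthfulness hypothesis must enter. This is precisely what the paper does and what your proposal is missing: it embeds the gadget into a much larger instance $G_{N}$ containing a long alternating chain terminating in a node $b$, and shows (using Property~1 to pin down the initial greedy phase, Property~2 to propagate the hypothetical escape move from the gadget to $G_{N}$, and Property~3 plus a weight-budget count to show the chain cannot be re-flipped afterwards) that any standard algorithm able to escape $\{c_{1}\}$ on $H$ gives some agent a strict incentive to hide the node $c_{1}$, contradicting truthfulness. Only then does one conclude that a truthful standard algorithm must output $\{c_{1}\}$ on $H$.

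Your fallback plan for ``promoting a local optimum to the actual output'' also does not work as sketched. Deleting the arcs of all good cycles except $G_{1}$ leaves a wish graph with just two intersecting length-$k$ cycles ($G_{1}$ and $B$), which is not isomorphic to the original instance with the roles of $B$ and a good cycle swapped, so the symmetry/relabeling iteration has no base to stand on; moreover, the ``truthfulness-based monotonicity'' you invoke would require controlling simultaneous deviations by several agents, which truthfulness (a single-agent dominant-strategy condition) does not give you directly — the paper's lower-bound arguments of this flavor (Section~9) proceed by careful one-agent-at-a-time inductions precisely for this reason. In short, the instance is right, but the proof needs the paper's large-instance truthfulness argument (or something equivalent exploiting Properties~1--3 at large $n$), and your proposal does not supply it.
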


\paragraph{Forming Small Coalitions.}
The
$(k, \lenUtil)$-BE
problem is related to a family of problems whose outcome is a partition of the
agent set $[n]$ into clusters, a.k.a.\ \emph{coalitions}, of size at most
$k$.
Each agent
$i \in [n]$
is associated with a wish list over the possible coalitions she is willing to
partake in;
her utility is $0$, if she is assigned to a singleton coalition, and it is
$\lenUtil(\ell)$ if she is assigned to a coalition of size
$2 \leq \ell \leq k$
from her wish list.
It turns out that the techniques used to establish our positive results apply
also to this coalition formation setup (see the maxCGIS problem formulated in
\Sect{}~\ref{section:preliminaries}).
Having said that, we subsequently restrict our attention to the BE problem
family as it seems more natural, with more convincing applications, and given
that our negative results are specific for this problem family.
Refer to \Sect{}~\ref{section:related-work} for a further discussion.

\subsection{Paper's Organization}
The remainder of this paper is organized as follows.
In \Sect{}~\ref{section:preliminaries}, we present additional notation and
terminology as well as some fundamental observations.
In \Sect{}~\ref{section:local-search-algorithm}, we introduce the class of
local search mechanisms and establish a sufficient condition for their
truthfulness under the uniform length function.
The (efficient) mechanisms promised in \Thm{}\
\ref{theorem:uniform-upper-bound} and \ref{theorem:non-uniform-upper-bound} are
developed in \Sect{}\ \ref{section:uniform-algorithm} and
\ref{section:non-uniform-algorithm}, respectively;
the latter mechanism relies on a technique for mechanism concatenation
developed in \Sect{}~\ref{section:concatenation-truthful-algorithms}.
In between, we analyze \GreedyAlg{} in \Sect{}~\ref{section:greedy} and prove
\Thm{}~\ref{theorem:greedy-mechanism} using, again, the mechanism
concatenation technique of
\Sect{}~\ref{section:concatenation-truthful-algorithms}.
While we are unaware of any general approximability lower bounds for the
uniform case (as discussed in \Sect{}~\ref{section:contribution}, this is an
important open question), in \Sect{}~\ref{section:main-algorithm-bad-example},
we show that the approximation ratio analysis conducted in
\Sect{}~\ref{section:uniform-algorithm} is (nearly) tight for the specific
mechanism of that section.
\Sect{}~\ref{section:no-running-time-limit} is then dedicated to the design
of the computationally inefficient mechanism promised in
\Thm{}~\ref{theorem:no-time-limit-upper-bound-non-uniform}, whereas the lower
bounds promised in \Thm{}\ \ref{theorem:non-uniform-lower-bound} and
\ref{theorem:local-search-lower-bound} are established in \Sect{}\
\ref{section:lower-bound-non-uniform} and
\ref{section:lower-bound-local-search}, respectively.
We conclude in \Sect{}~\ref{section:related-work} with an extended discussion
of the current work's place within the existing literature.

\section{Preliminaries}
\label{section:preliminaries}

\paragraph{Graph Theoretic Definitions.}
Throughout, the term graph is reserved for undirected graphs, whereas directed
graphs are referred to as digraphs.
To avoid confusion, the terms vertex and arc refer to the basic elements of a
digraph, whereas the terms node and edge refer to the basic elements of an
(undirected) graph.

Consider an (undirected) graph
$G = (V, E)$.
For a node
$v \in V$,
let
$\Neighbors_{G}(v) = \{ u \in V \mid \{ u, v \} \in E \}$
denote the set of $v$'s neighbors in $G$ and let
$\Degree_{G}(v) = |\Neighbors_{G}(v)|$
denote $v$'s degree in $G$.
The former notation is extended to node subsets
$U \subseteq V$,
denoting
$\Neighbors_{G}(U) = \bigcup_{v \in U} \Neighbors_{G}(v)$.

Given  a node subset
$U \subseteq V$,
the subgraph \emph{induced} by $U$ on $G$, denoted by $G[U]$, is the graph
whose node set is $U$ and whose edge set is
$\{ e \in E \mid e \subseteq U \}$.
A \emph{$d$-claw} is a complete bipartite graph with a single node on one side
and $d$ nodes on the other size.
A graph is said to be \emph{$d$-claw free} if it does not admit
a $d$-claw as a node induced subgraph.

A node subset
$I \subseteq V$
is an \emph{independent set (IS)} in $G$ if
$\{ u, v \} \notin E$
for any
$u, v \in I$.
An IS
$I \subseteq V$
is \emph{maximal} if $U$ is not an IS for any
$I \subset U \subseteq V$.
Alternatively, $I$ is a maximal IS if and only if it is both an IS and a
dominating set in $G$.

\paragraph{Cycle Graphs.}
Consider a wish list vector
$\Wish \in \WishVecSpace_{n}$,
$n \in \Integers_{> 0}$,
and recall that
$\Cycles_{\Wish}^{k} = \bigcup_{\pi \in \ExcSet_{\Wish}^{k}} \Cycles_{\pi}$
is the set of trading cycles over $[n]$ of length at most $k$ that respect
$\Wish$.
The
\emph{$k$-cycle graph} of $W$ is an (undirected) graph
$G = (V, E)$
defined over the trading cycle set
$V = \Cycles_{\Wish}^{k}$
so that nodes
$u, v \in V$
are adjacent under $E$ if and only if their corresponding trading cycles share
some (at least one) agents;
that is,
$\{ u, v \} \in E$
if and only if
$\Agents(u) \cap \Agents(v) \neq \emptyset$,
recalling that $\Agents(u)$ and $\Agents(v)$ are the sets of agents that form
the trading cycles $u$ and $v$, respectively (see
Figure~\ref{figure:example} for an illustration).

\begin{figure}
\centering
\begin{subfigure}{0.5\textwidth}%
\centering%
\includegraphics[scale=0.8]{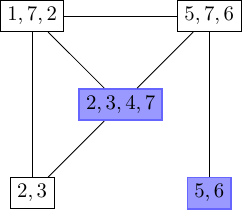}%
\caption{\label{figure:example:cycle-graph-marked}}%
\end{subfigure}%
\begin{subfigure}{0.5\textwidth}%
\centering%
\includegraphics[scale=0.8]{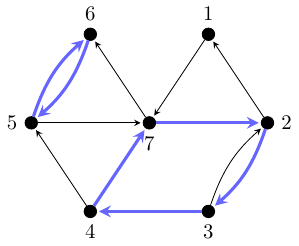}%
\caption{\label{figure:example:cycle-digraph-marked}}%
\end{subfigure}%
\caption{\label{figure:example}%
The $k$-cycle graph depicted in Figure~\ref{figure:example:cycle-graph-marked}
(where the agents in each node are listed according to their cyclic order)
corresponds to the wish list vector represented by the digraph depicted in
Figure~\ref{figure:example:cycle-digraph-marked} under the length bound
$k = 4$.
The nodes marked in Figure~\ref{figure:example:cycle-graph-marked} form an
IS that corresponds to the exchange (i.e., set of disjoint trading cycles)
marked in Figure~\ref{figure:example:cycle-digraph-marked}.
This IS turns out to be an optimal solution for the $k$-maxCGIS problem, hence
the corresponding exchange is an optimal solution for the $k$-BE problem.
}
\end{figure}

We extend the operator $\Agents$ from single nodes to node subsets
$U \subseteq V$,
defining
$\Agents(U) = \bigcup_{v \in U} \Agents(v)$.
In the converse direction, for an agent
$i \in [n]$,
let
$\Agents^{-1}(i) = \{ v \in V \mid i \in \Agents(v) \}$
be the set of nodes in which $i$ \emph{partakes};
notice that by definition, the nodes in $\Agents^{-1}(i)$ form a clique in
$G$ and every edge
$e \in E$
is internal to the clique corresponding to (at least) one of the
agents.
We define the \emph{length} of a node
$v \in V$
to be the length of the corresponding trading cycle, i.e.,
$|\Agents(v)|$.
Given a (uniform or non-uniform) length function
$\lenUtil : \{ 2, 3, \dots, k \} \rightarrow (0, 1]$,
we define the \emph{weight}
$\Weight(v) \in \Reals_{> 0}$
of $v$ to be
$\Weight(v) = |\Agents(v)| \cdot \lenUtil(|\Agents(v)|)$;
the weight operator is also extended to node subsets
$U \subseteq V$,
defining
$\Weight(U) = \sum_{v \in U} \Weight(v)$.

\begin{observation} \label{observation:cycle-graph}
Consider a wish list vector
$\Wish \in \WishVecSpace_{n}$
and let
$G = (V, E)$
be its corresponding $k$-cycle graph.
There is a one-to-one correspondence between the exchanges in
$\ExcSet_{\Wish}^{k}$ and the ISs in $G$.
Moreover, if
$I \subseteq V$
is the IS in $G$ that corresponds to an exchange
$\pi \in \ExcSet_{\Wish}^{k}$,
then
$\Weight(I) = \SW(\pi)$.
\end{observation}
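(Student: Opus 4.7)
The plan is to exhibit the bijection explicitly by mapping each exchange $\pi \in \ExcSet_{\Wish}^{k}$ to its trading cycle collection $\Cycles_{\pi}$, and then showing that this map is inverse to the natural reconstruction of an exchange from any IS in $G$. The weight identity then follows directly from the social welfare expression in \eqref{equation:social-welfare-general}.

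First I would verify that $\Cycles_{\pi} \subseteq V = \Cycles_{\Wish}^{k}$ for every $\pi \in \ExcSet_{\Wish}^{k}$, which is just the definition of $\Cycles_{\Wish}^{k}$. Next, I would argue that $\Cycles_{\pi}$ is an IS in $G$: because the trading cycles induced by the bijection $\pi$ are vertex-disjoint cycles of its functional digraph, any two distinct $c, c' \in \Cycles_{\pi}$ satisfy $\Agents(c) \cap \Agents(c') = \emptyset$, so $\{c, c'\} \notin E$ by the defining adjacency rule of the $k$-cycle graph. For the reverse direction, given an IS $I \subseteq V$, I would define $\pi_{I} : [n] \to [n]$ by setting $\pi_{I}(i)$ to be the successor of $i$ along the unique $c \in I$ with $i \in \Agents(c)$ if such $c$ exists, and $\pi_{I}(i) = i$ otherwise. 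The IS property guarantees that at most one $c \in I$ contains any given agent, so $\pi_{I}$ is well defined; it is a bijection because each $c \in I$ acts as a permutation on $\Agents(c)$ while the identity acts on the complement $[n] \setminus \Agents(I)$. Since every $c \in I$ has length at most $k$ and respects $\Wish$, we obtain $\pi_{I} \in \ExcSet_{\Wish}^{k}$, and by construction $\Cycles_{\pi_{I}} = I$ and $\pi_{\Cycles_{\pi}} = \pi$, which establishes the one-to-one correspondence.

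For the weight equality, once the bijection is in place, it is a one-line calculation from the definitions:
\[
\Weight(I) \,=\, \sum_{v \in I} |\Agents(v)| \cdot \lenUtil(|\Agents(v)|) \,=\, \sum_{c \in \Cycles_{\pi}} \length(c) \cdot \lenUtil(\length(c)) \,=\, \SW(\pi),
\]
where the final equality invokes \eqref{equation:social-welfare-general}. There is no real obstacle here: the entire observation is a translation of definitions, and the only care needed is to check that the IS condition in $G$ precisely captures pairwise agent-disjointness of the selected trading cycles, so that the two natural maps between $\ExcSet_{\Wish}^{k}$ and the ISs of $G$ are genuine mutual inverses.
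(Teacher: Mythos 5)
Your proof is correct and follows exactly the definitional unwinding that the paper has in mind: it states this observation without proof precisely because the IS condition in the $k$-cycle graph is, by construction, pairwise agent-disjointness of trading cycles, and the weight of a node equals the total utility contributed by its cycle. Spelling out the two mutually inverse maps $\pi \mapsto \Cycles_{\pi}$ and $I \mapsto \pi_{I}$ and invoking \eqref{equation:social-welfare-general} is precisely the intended argument, so nothing is missing.
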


Owing to \Obs{}~\ref{observation:cycle-graph}, we shall implement our
$(k, \lenUtil)$-BE
mechanisms by means of algorithms that construct an IS for a given $k$-cycle
graph with the objective of maximizing its weight.
It will be convenient to extend the input domain of these algorithms, allowing
them to be invoked on $k$-cycle graphs that do not necessarily correspond to
any particular wish list vector.
Formally, we say that a graph
$G = (V, E)$,
equipped with an operator
$\Agents : V \rightarrow 2^{[n]}$,
is a \emph{$k$-cycle graph} if
(1)
each node
$v \in V$
is identified with a unique cyclic order over the agents in $\Agents(v)$ of
length
$|\Agents(v)| \leq k$;\footnote{%
The cyclic orders are introduced only for the sake of distinguishing between
the nodes (and ultimately, bounding their total number).}
and
(2)
$E = \{ \{ u, v \} \subseteq V
\mid
\Agents(u) \cap \Agents(v) \neq \emptyset \}$.
Let $\CycleGraphSet_{n}^{k}$ denote the family of $k$-cycle graphs over $n$
agents.
Notice that in general, the cyclic orders identified with the nodes of a
$k$-cycle graph may not be compatible with the cycles in
$\Cycles_{\Wish}^{k}$ for any wish list vector
$\Wish \in \WishVecSpace_{n}$;
refer to Appendix~\ref{appendix:cycle-graph-with-no-wish-list-vector} for an
example of such a $k$-cycle graph.

\paragraph{The $(k, \lenUtil)$-maxCGIS Problem.}
We say that an algorithm \Alg{} admits \emph{approximation ratio}
$r \geq 1$
for the maximum weight IS problem in $k$-cycle graphs under the length
function $\lenUtil$, abbreviated as the
\emph{$(k, \lenUtil)$-maxCGIS}
problem, if for every
$n \in \Integers_{> 0}$
and for every $k$-cycle graph
$G = (V, E) \in \CycleGraphSet_{n}^{k}$,
it is guaranteed that $\Alg(G)$ is an IS in $G$ and that
$\Weight(\Alg(G)) \geq \frac{1}{r} \Weight(I)$
for every IS $I$ in $G$.
The \emph{$k$-maxCGIS} problem refers to the special case of
$(k, \lenUtil)$-maxCGIS
where
$\lenUtil = \lenUtil^{u}$
is the uniform length function.

To account for the game theoretic aspects of the
$(k, \lenUtil)$-BE
problem, we introduce the notions of agent utility and truthfulness also in
the context of the
$(k, \lenUtil)$-maxCGIS
problem.
Given a $k$-cycle graph
$G = (V, E) \in \CycleGraphSet_{n}^{k}$,
we define the \emph{utility} of an agent
$i \in [n]$
from an IS
$I \subseteq V$
in $G$, denoted by
$\utility(i, I)$,
to be
\[\textstyle
\utility(i, I)
\, = \,
\begin{cases}
0, & I \cap \Agents^{-1}(i) = \emptyset
\\
\lenUtil(|\Agents(v)|), & I \cap \Agents^{-1}(i) = \{ v \}
\end{cases} \, ;
\]
this is well defined as the intersection of $\Agents^{-1}(i)$ with any IS in
$G$ contains at most one node.
A
$(k, \lenUtil)$-maxCGIS
algorithm \Alg{} is said to be \emph{truthful} if for every
$n \in \Integers_{> 0}$,
$k$-cycle graph
$G = (V, E) \in \CycleGraphSet_{n}^{k}$,
agent
$i \in [n]$,
and node subset
$S \subseteq \Agents^{-1}(i)$,
it is guaranteed that
$\utility(i, \Alg(G[V - S]))
\leq
\utility(i, \Alg(G))$.

\begin{observation} \label{observation:algorithm-implies-mechanism}
An (efficient) truthful
$(k, \lenUtil)$-maxCGIS
algorithm implies an (efficient) truthful
$(k, \lenUtil)$-BE
mechanism with the same approximation ratio.
\end{observation}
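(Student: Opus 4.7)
The plan is to lift a $(k, \lenUtil)$-maxCGIS algorithm \Alg{} to a $(k, \lenUtil)$-BE mechanism $f$ in the most natural way: given a reported wish list vector $\Wish \in \WishVecSpace_{n}$, build the $k$-cycle graph $G_{\Wish}$ of $\Wish$, invoke $\Alg(G_{\Wish})$ to obtain an IS $I$, and return the exchange $f(\Wish) \in \ExcSet_{\Wish}^{k}$ that corresponds to $I$ under the bijection of \Obs{}~\ref{observation:cycle-graph}. Since each node of $G_{\Wish}$ is identified with a cyclic order of at most $k$ agents, there are at most $O(n^{k})$ nodes and $O(n^{2k})$ edges, so $G_{\Wish}$ itself can be constructed in time polynomial in $n$ when $k$ is constant; combined with the polynomial running time of \Alg{}, this establishes efficiency.

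The approximation ratio transfers immediately from \Obs{}~\ref{observation:cycle-graph}: the map $\pi \mapsto I_{\pi}$ between $\ExcSet_{\Wish}^{k}$ and the ISs of $G_{\Wish}$ is a weight-preserving bijection (i.e., $\SW(\pi) = \Weight(I_{\pi})$). Hence an IS output by \Alg{} whose weight is within factor $r$ of the maximum IS weight in $G_{\Wish}$ corresponds to an exchange whose social welfare is within factor $r$ of $\max\{\SW(\pi) : \pi \in \ExcSet_{\Wish}^{k}\}$. Individual rationality is automatic since $f(\Wish) \in \ExcSet_{\Wish}^{k}$, which precludes the $-\infty$ case in \eqref{equation:utility-function-general}. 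Matching the two utility definitions is also direct: if $\pi = f(\Wish)$ corresponds to $I$, then agent $i$ partakes in some $c \in \Cycles_{\pi}$ iff $I \cap \Agents^{-1}(i) = \{v_{c}\}$, and the utility in both definitions equals $\lenUtil(\length(c)) = \lenUtil(|\Agents(v_{c})|)$.

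The heart of the argument is truthfulness, and the key observation is that restricting an agent's wish list corresponds exactly to deleting nodes from her partake-clique in the cycle graph. Concretely, fix $\Wish^{*}$ and consider any report $\Wish_{i} \subseteq \Wish_{i}^{*}$ by agent $i$ (with $\Wish_{-i}$ fixed). A trading cycle $c \in \Cycles_{\Wish}^{k}$ that does \emph{not} contain $i$ is unaffected by this change, while a trading cycle with $i \in \Agents(c)$ survives the change iff the successor of $i$ along $c$ lies in $\Wish_{i}$. Therefore $G_{\Wish}$ is exactly the induced subgraph $G_{(\Wish_{-i},\Wish_{i}^{*})}[V - S]$, where $S \subseteq \Agents^{-1}(i)$ is the set of nodes whose cycles use an arc out of $i$ into $\Wish_{i}^{*} - \Wish_{i}$. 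Invoking the truthfulness of \Alg{} with this particular $S$ yields
\[
\utility(i, \Alg(G_{\Wish}))
\, = \,
\utility(i, \Alg(G_{(\Wish_{-i},\Wish_{i}^{*})}[V - S]))
\, \leq \,
\utility(i, \Alg(G_{(\Wish_{-i},\Wish_{i}^{*})})) \, ,
\]
and translating both sides back to exchanges via \Obs{}~\ref{observation:cycle-graph} gives $\utility(i, f(\Wish)) \leq \utility(i, f(\Wish_{-i}, \Wish_{i}^{*}))$, as required.

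The only potentially subtle step is the identification $G_{\Wish} = G_{(\Wish_{-i},\Wish_{i}^{*})}[V - S]$ with $S \subseteq \Agents^{-1}(i)$; I would verify it explicitly in both directions. The ``$\subseteq$'' containment uses the fact that every cycle respecting $\Wish$ also respects $(\Wish_{-i}, \Wish_{i}^{*})$ (since $\Wish_{i} \subseteq \Wish_{i}^{*}$ and $\Wish_{-i}$ is the same), and the ``$\supseteq$'' containment uses the fact that for any cycle $c$ with $i \notin \Agents(c)$, the constraints defining membership in $\Cycles_{\Wish}^{k}$ and in $\Cycles_{(\Wish_{-i}, \Wish_{i}^{*})}^{k}$ are literally identical, so such a $c$ belongs to one set iff it belongs to the other. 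Once this set equality is in place, the rest of the proof is a mechanical translation between the BE and maxCGIS utility definitions.
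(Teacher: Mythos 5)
Your proposal is correct and follows essentially the same route the paper intends for this observation: build the $k$-cycle graph of the reported wish lists (polynomial in $n$ by \Obs{}~\ref{observation:cycle-graph-structure}), transfer weights/social welfare via \Obs{}~\ref{observation:cycle-graph}, and note that under the subset wish list assumption an agent's deviation corresponds exactly to deleting a node subset $S \subseteq \Agents^{-1}(i)$, which is a special case of the (stronger) maxCGIS truthfulness requirement discussed right after the observation. Your explicit verification of the identity $G_{\Wish} = G_{(\Wish_{-i},\Wish^{*}_{i})}[V - S]$ is exactly the point the paper leaves implicit, and it is argued correctly.
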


It is important to point out that the truthfulness requirement from a
$(k, \lenUtil)$-BE
mechanism invoked on a wish list vector $\Wish$ is weaker than the
truthfulness requirement from a
$(k, \lenUtil)$-maxCGIS
algorithm invoked on the corresponding $k$-cycle graph
$G = (V, E)$
as the strategy space of the agents in the latter is more extensive
than that of the former.
Indeed, in the context of the
$(k, \lenUtil)$-maxCGIS
problem, an agent
$i \in [n]$
may choose to ``hide'' from the algorithm a node subset
$S \subseteq \Agents^{-1}(i)$
that does not correspond to omitting any agent subset from $i$'s wish list
$\Wish_{i}$.
We emphasize that while the stronger requirement from
$(k, \lenUtil)$-maxCGIS
algorithms makes our upper bounds ``harder'', we do not impose this stronger
requirement in our impossibility results that merely exploit the requirements
from a truthful mechanism as presented in \Sect{}~\ref{section:model}.

The following structural observation regarding the family of $k$-cycle graphs
plays a key role in the design of our approximation algorithms.

\begin{observation} \label{observation:cycle-graph-structure}
Consider a $k$-cycle graph
$G = (V, E) \in \CycleGraphSet_{n}^{k}$.
The number of nodes in $G$ satisfies
$|V| \leq \sum_{h = 2}^{k} \binom{n}{h} (h - 1)! \leq \operatorname{poly}(n)$.
Moreover, the number of nodes in any IS $I$ in $G$ satisfies
$|I| \leq n / 2$
and
$|I \cap \Neighbors_{G}(v)| \leq |\Agents(v)|$
for every node
$v \in V$.
In particular, the graph $G$ is
$(k + 1)$-claw
free.
\end{observation}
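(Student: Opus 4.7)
The proof naturally splits into three short combinatorial observations that stack on top of one another. The plan is to first bound $|V|$ by a direct count of labeled directed cycles, then isolate a single disjointness property of independent sets from which both the $|I| \leq n/2$ bound and the neighborhood bound $|I \cap \Neighbors_{G}(v)| \leq |\Agents(v)|$ fall out, and finally read off $(k+1)$-claw freeness as an immediate corollary of the neighborhood bound.

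For the vertex count, I would use the definition of a $k$-cycle graph directly: each $v \in V$ is identified with a cyclic order over $\Agents(v)$, where $2 \leq |\Agents(v)| \leq k$. A cyclic order on a set of $h$ distinct agents can be encoded by fixing the smallest-indexed agent at a distinguished position and listing the remaining $h-1$ agents in one of $(h-1)!$ orders, so for each $h$-subset of $[n]$ there are exactly $(h-1)!$ admissible nodes. Summing over $h = 2, \dots, k$ yields the claimed bound $|V| \leq \sum_{h=2}^{k} \binom{n}{h} (h-1)!$, which is $\operatorname{poly}(n)$ for constant $k$.

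The conceptual core of the argument is the following disjointness observation for independent sets, which I would state up front: if $u, v \in I$ are distinct nodes of an IS, then by definition of the edge set of a $k$-cycle graph we must have $\Agents(u) \cap \Agents(v) = \emptyset$ (otherwise $\{u,v\} \in E$). From this I would derive both remaining claims at once. Summing $|\Agents(v)|$ over $v \in I$ gives $2|I| \leq \sum_{v \in I} |\Agents(v)| = |\Agents(I)| \leq n$, so $|I| \leq n/2$. For the neighborhood bound, fix $v \in V$ and note that every $u \in I \cap \Neighbors_{G}(v)$ contributes a nonempty $\Agents(u) \cap \Agents(v)$; by the disjointness observation applied to the pairwise-independent nodes in $I \cap \Neighbors_{G}(v)$, these intersections are themselves pairwise disjoint subsets of $\Agents(v)$, hence $|I \cap \Neighbors_{G}(v)| \leq |\Agents(v)|$.

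Finally, $(k+1)$-claw freeness is immediate: the leaves of any induced $(k+1)$-claw with center $v$ form an IS of size $k+1$ inside $\Neighbors_{G}(v)$, contradicting the neighborhood bound since $|\Agents(v)| \leq k$. I do not anticipate any real obstacle here — the only step requiring mild care is the cyclic-order count, where one must be precise about identifying rotations of the same directed cycle; everything else is bookkeeping once the $\Agents$-disjointness of nodes in an IS is articulated.
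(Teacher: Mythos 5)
Your proposal is correct, and the paper in fact states this observation without proof, treating it as immediate; your argument supplies exactly the intended reasoning (cyclic-order counting of $(h-1)!$ per $h$-subset, pairwise disjointness of the $\Agents$-sets of nodes in an IS together with $|\Agents(v)| \geq 2$, and the claw-freeness read off from the neighborhood bound). No gaps.
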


We note that the reduction from the (combinatorial aspect of the) BE problem
with length bound
$k \geq 3$
to the problem of constructing a maximum weight IS in
$(k + 1)$-claw
free graphs, particularly line graphs of hypergraphs, is used also by Biro et
al.\ \cite{biro2009maximum};
refer to \Sect{}~\ref{section:related-work} for further discussion.

\paragraph{INPA Algorithms.}
An algorithm \Alg{} for the
$(k, \lenUtil)$-maxCGIS
problem is said to be \emph{independent of non-partaking agents (INPA)} if
for every
$n \in \Integers_{> 0}$,
$k$-cycle graph
$G = (V, E) \in \CycleGraphSet_{n}^{k}$,
agent
$i \in [n] - \Agents(\Alg(G))$,
and node subset
$S \subseteq \Agents^{-1}(i)$,
it is guaranteed that
$\Alg(G[V - S]) = \Alg(G)$.
(Refer to \Sect{}~\ref{section:related-work} for a discussion of related game
theoretic notions.)
The importance of the INPA property of maxCGIS algorithms, specifically in the
context of the uniform length function, is demonstrated by the following
observation.

\begin{observation} \label{observation:uniform-inpa-implies-truthful}
For every length bound
$k \geq 3$,
any INPA algorithm for the $k$-maxCGIS problem is truthful.
\end{observation}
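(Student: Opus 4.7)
The plan is to observe that under the uniform length function $\lenUtil^{u}$, the agent utility takes only two possible values, $0$ and $1$, and that the INPA property directly controls what happens in the ``bad'' case where the agent's utility is $0$. So the argument should reduce to a short case analysis rather than require any elaborate combinatorial reasoning.

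Concretely, I would fix an INPA $k$-maxCGIS algorithm $\Alg$, a $k$-cycle graph $G = (V, E) \in \CycleGraphSet_{n}^{k}$, an agent $i \in [n]$, and an arbitrary node subset $S \subseteq \Agents^{-1}(i)$ (this is the full space of deviations available to $i$ in the maxCGIS setting). The goal is to verify the truthfulness inequality
\[
\utility(i, \Alg(G[V - S])) \, \leq \, \utility(i, \Alg(G)) \, .
\]
I would then split on whether $i$ partakes in $\Alg(G)$.

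In the first case, assume $i \in \Agents(\Alg(G))$. Then by the definition of agent utility in the $k$-maxCGIS problem, combined with the fact that $\lenUtil^{u}(\ell) = 1$ for every $2 \leq \ell \leq k$, we have $\utility(i, \Alg(G)) = 1$. Since $\utility(i, \cdot)$ is bounded from above by $1$ on every IS (by the same definition and uniformity), the inequality holds trivially, regardless of what $\Alg(G[V - S])$ looks like.

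In the second case, assume $i \notin \Agents(\Alg(G))$. This is exactly the hypothesis of the INPA property, which yields $\Alg(G[V - S]) = \Alg(G)$ for every $S \subseteq \Agents^{-1}(i)$. Consequently $i$ does not partake in $\Alg(G[V - S])$ either, so $\utility(i, \Alg(G[V - S])) = 0 = \utility(i, \Alg(G))$, and the inequality holds with equality. There is no real obstacle here; the only subtle point is recognizing that uniformity collapses the utility range to $\{0, 1\}$, so that the only way a deviation could strictly help agent $i$ is by moving her from non-partaking to partaking, which is precisely the transition ruled out by INPA.
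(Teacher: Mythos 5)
Your proof is correct and matches the paper's own argument essentially verbatim: the same case split on whether $i \in \Agents(\Alg(G))$, with uniformity capping the utility at $1$ in the first case and the INPA property forcing $\Alg(G[V - S]) = \Alg(G)$ (hence equal zero utilities) in the second.
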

\begin{proof}%
Let \Alg{} be an INPA algorithm for the
$k$-maxCGIS
problem.
Fix some
$n \in \Integers_{> 0}$
and consider a $k$-cycle graph
$G = (V, E) \in \CycleGraphSet_{n}^{k}$
and an agent
$i \in [n]$.
If
$i \in \Agents(\Alg(G))$,
then
$\utility(i, \Alg(G)) = 1$
and agent $i$ cannot increase her utility by omitting any nodes from
$\Agents^{-1}(i)$.
So, assume that
$i \notin \Agents(\Alg(G))$
and consider a node subset
$S \subseteq \Agents^{-1}(i)$.
Since $\Alg$ is INPA, it follows that
$\Alg(G[V - S]) = \Alg(G)$,
hence
$i \notin \Agents(\Alg(G[V - S]))$.
Therefore,
$\utility(i, \Alg(G[V - S]))
=
0
=
\utility(i, \Alg(G))$
which means that agent $i$ does not improve her utility by omitting $S$ from
$\Agents^{-1}(i)$.
\end{proof}

\section{Local Search Algorithms}
\label{section:local-search-algorithm}
In this section, we introduce the class of local search
$(k, \lenUtil)$-maxCGIS
algorithms that is pivotal to our work.
We then establish a sufficient condition for the truthfulness of
algorithms in this class in the context of the $k$-maxCGIS problem (i.e., the
special case of
$\lenUtil = \lenUtil^{u}$).

\begin{definition*}[improvement rule, loyal, INPA, efficient]
An \emph{improvement rule} $r$ for the
$(k, \lenUtil)$-maxCGIS
problem is a function that maps a pair
$(G, I)$,
consisting of a $k$-cycle graph
$G = (V, E) \in \CycleGraphSet_{n}^{k}$,
$n \in \Integers_{> 0}$,
and an IS $I$ in $G$,
either to an IS $I'$ in $G$ satisfying
$\Weight(I') > \Weight(I)$
or to $\bot$.
The improvement rule $r$ is said to be \emph{loyal} if
$r(G, I) = I' \neq \bot$
implies that
$\Agents(I) \subseteq \Agents(I')$.
The improvement rule $r$ is said to be \emph{independent of non-partaking
agents (INPA)} if
(1)
$r(G, I) = I' \neq \bot$
implies that
$r(G[V - S], I) = I'$
for every agent
$i \in [n] - (\Agents(I) \cup \Agents(I'))$
and node subset
$S \subseteq \Agents^{-1}(i)$;
and
(2)
$r(G, I) = \bot$
implies that
$r(G[V - S], I) = \bot$
for every agent
$i \in [n] - \Agents(I)$
and node subset
$S \subseteq \Agents^{-1}(i)$.
Finally, the improvement rule $r$ is said to be \emph{efficient} if it admits
a computationally efficient implementation.
\end{definition*}

In words, the loyal property means that any agent that partakes in the
original IS $I$ must also partake in the new IS $I'$.
The INPA property means that agents that do not partake in $I$ nor in $I'$
have no impact on the improvement rule's outcome.
We can now define a local search algorithm over a list of improvement rules.

\begin{definition*}[local search algorithm]
A \emph{local search} algorithm \Alg{} for the
$(k, \lenUtil)$-maxCGIS
problem is characterized by an (ordered) list
$R = (r_{1}, \dots, r_{|R|})$
of improvement rules.
Given a $k$-cycle graph
$G = (V, E) \in \CycleGraphSet_{n}^{k}$,
$n \in \Integers_{> 0}$,
the (deterministic) algorithm returns the IS
$I = \Alg(G)$
constructed via the following iterative process:
\\
(1)
set
$t \gets 0$
and
$I^{t} \gets \emptyset$;
\\
(2)
if
$r_{j}(G, I^{t}) = \bot$
for all
$1 \leq j \leq |R|$,
then return the set
$I \gets I^{t}$
and halt;
\\
(3)
let $j^{t}$ be the smallest
$1 \leq j \leq |R|$
such that
$r_{j}(G, I^{t}) \neq \bot$;
\\
(4)
set
$I^{t + 1} \gets r_{j^{t}}(G, I^{t})$,
set
$t \gets t + 1$,
and go to step (2).
\\
Let $\Alg^{t}(G)$ denote the IS $I^{t}$ associated with iteration $t$ of this
process.
\end{definition*}

It turns out that improvement rules which are loyal and INPA lead to an INPA
(local search) algorithm.

\begin{lemma}
\label{lemma:local-search-algorithm-inpa-efficient}
Consider a local search algorithm \Alg{} for the
$(k, \lenUtil)$-maxCGIS
problem characterized by a list
$R = (r_{1}, \dots, r_{|R|})$
of improvement rules.
If all improvement rules in $R$ are loyal and INPA, then \Alg{} is INPA.
Moreover, if $|R|$ is fixed and all improvement rules in $R$ are efficient,
then \Alg{} is efficient.
\end{lemma}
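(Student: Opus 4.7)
The plan is to handle the two claims separately. For the INPA claim, I fix a $k$-cycle graph $G = (V, E)$, an agent $i \in [n] - \Agents(\Alg(G))$, and a node subset $S \subseteq \Agents^{-1}(i)$, and aim to show that the runs of \Alg{} on $G$ and on $G[V - S]$ generate identical sequences of intermediate ISs $I^0, I^1, \dots, I^T$ and, in particular, produce the same output. The crucial invariant is that $i \notin \Agents(I^t)$ for every iteration $t$ of the run on $G$. This follows immediately from loyalty of each applied rule: loyalty yields the chain $\emptyset = \Agents(I^0) \subseteq \Agents(I^1) \subseteq \cdots \subseteq \Agents(I^T) = \Agents(\Alg(G))$, and $i$ is absent from the last set by assumption, so it is absent from every intermediate one.

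With this invariant in hand, I proceed by induction on $t$. The base case $I^0 = \emptyset$ is trivial. For the inductive step, assume that both runs have produced the same $I^t$, and let $j^t$ be the smallest rule index with $r_{j^t}(G, I^t) \neq \bot$ in the run on $G$. For every $j < j^t$, part (2) of rule INPA, invoked with $i \notin \Agents(I^t)$, gives $r_j(G[V - S], I^t) = \bot$. For $j = j^t$, writing $I^{t + 1} = r_{j^t}(G, I^t)$, both $\Agents(I^t)$ and $\Agents(I^{t + 1})$ are subsets of $\Agents(\Alg(G))$, so $i$ lies outside their union; part (1) of rule INPA then yields $r_{j^t}(G[V - S], I^t) = I^{t + 1}$. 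Hence the run on $G[V - S]$ selects the same $j^t$ and transitions to the same $I^{t + 1}$. Applying part (2) once more at the terminal iteration $T$ shows that both runs halt simultaneously with the same output, proving $\Alg(G[V - S]) = \Alg(G)$.

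For the efficiency claim, each iteration evaluates at most $|R|$ rules, each efficient, and $|R|$ is fixed; the task therefore reduces to bounding the number of iterations. Since each iteration strictly increases $\Weight(I^t)$, it suffices to bound the number of distinct weights attainable by ISs of $G$. Writing $m_\ell(I) = |\{v \in I : |\Agents(v)| = \ell\}|$, one has $\Weight(I) = \sum_{\ell = 2}^{k} m_\ell(I) \cdot \ell \, \lenUtil(\ell)$, where the nonnegative integer tuple $(m_2, \dots, m_k)$ satisfies $\sum_\ell \ell m_\ell \leq n$. There are at most $(n + 1)^{k - 1}$ such tuples, which is polynomial in $n$ for fixed $k$, so the weight takes at most polynomially many values and the algorithm terminates within polynomially many iterations. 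I expect this runtime bound to be the main obstacle: a priori the weight differences between consecutive iterates could be arbitrarily small reals, and one must sidestep a direct accounting of increments by bounding the image of $\Weight$ via the constant-size cycle-length alphabet.
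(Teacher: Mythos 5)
Your proof is correct. For the INPA claim you take essentially the paper's route: the paper argues by taking a minimal iteration $t$ at which the two runs diverge and deriving a contradiction, while you run the same argument as a direct induction on $t$; in both cases the engine is identical — loyalty gives the invariant $i \notin \Agents(I^{t})$ for every intermediate IS, part (2) of the rules' INPA property forces the lower-indexed rules to return $\bot$ on $G[V-S]$ as well, and part (1) forces the selected rule to return the same improved IS, with one more application of part (2) at the terminal IS to make both runs halt together. For the efficiency claim your argument genuinely differs from the paper's. The paper bounds the number of iterations by asserting that every weight is an integer multiple of a constant in $(0,1]$, so consecutive iterates gain $\Omega(1)$ weight and the iteration count is $O(nk)$; this step is delicate for general real-valued $\lenUtil$, since sums $\sum_{\ell} m_{\ell}\,\ell\,\lenUtil(\ell)$ need not live on a fixed lattice and consecutive gains could in principle be tiny. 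You instead observe that the weight of an iterate is determined by the tuple $(m_{2},\dots,m_{k})$ of node counts per length, that there are at most $(n+1)^{k-1}$ such tuples, and that strict monotonicity of the weights therefore caps the number of iterations at a polynomial in $n$ for fixed $k$. This buys a cleaner and more robust termination bound that does not depend on any commensurability of the values $\lenUtil(2),\dots,\lenUtil(k)$, at the (negligible) cost of a polynomially larger, but still polynomial, iteration bound.
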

\begin{proof}
Suppose that all improvement rules in $R$ are loyal and INPA and assume
towards contradiction that \Alg{} is not INPA which means that there exist
$n \in \Integers_{> 0}$,
a $k$-cycle graph
$G = (V, E) \in \CycleGraphSet_{n}^{k}$,
an agent
$i \in [n] - \Agents(\Alg(G))$,
and a node subset
$S \subseteq \Agents^{-1}(i)$
such that
$\Alg(G[V - S]) \neq \Alg(G)$.
Let $t$ be the smallest integer such that
$\Alg^{t}(G[V - S]) \neq \Alg^{t}(G)$
($t$ is well defined since
$\Alg(G[V - S]) \neq \Alg(G)$).
Notice that
$t > 0$
as
$\Alg^{0}(G[V - S]) = \emptyset = \Alg^{0}(G)$;
let $r_{j}$,
$1 \leq j \leq |R|$,
be the improvement rule whose application to
$(G, \Alg^{t - 1}(G))$
results in $\Alg^{t}(G)$ during the application of \Alg{} to $G$.

Recalling that all improvement rules in $R$ are loyal, we know that
$\Agents(\Alg^{s}(G)) \subseteq \Agents(\Alg^{s + 1}(G))$
for all $s$.
Since
$i \notin \Agents(\Alg(G))$,
it follows that
$i \notin \Agents(\Alg^{s}(G))$
for all $s$.
In particular, we deduce that
$i \notin \Agents(\Alg^{t - 1}(G)) = \Agents(\Alg^{t - 1}(G[V - S]))$.
By the definition of a local search algorithm, we know that
$r_{h}(G, \Alg^{t - 1}(G)) = \bot$
for every
$1 \leq h < j$.
Since the improvement rules $r_{h}$,
$1 \leq h < j$,
are INPA, it follows that
\[\textstyle
r_{h}(G[V - S], \Alg^{t - 1}(G[V - S]))
\, = \,
r_{h}(G[V - S], \Alg^{t - 1}(G))
\, = \,
\bot
\]
for every
$1 \leq h < j$.
As $r_{j}$ is also INPA and recalling that
$i \notin \Agents(\Alg^{t - 1}(G)) \cup \Agents(\Alg^{t}(G))$,
we conclude that
\[\textstyle
r_{j}(G[V - S], \Alg^{t - 1}(G[V - S]))
\, = \,
r_{j}(G[V - S], \Alg^{t - 1}(G))
\, = \,
r_{j}(G, \Alg^{t - 1}(G))
\, = \,
\Alg^{t}(G) \, .
\]
But this means that
$\Alg^{t}(G[V - S]) = \Alg^{t}(G)$,
deriving a contradiction.
Therefore, \Alg{} is INPA.

Now, suppose that $|R|$ is fixed and all improvement rules in $R$ are
efficient.
To deduce that \Alg{} is efficient, we argue that the application
of \Alg{} to any $k$-cycle graph
$G = (V, E) \in \CycleGraphSet_{n}^{k}$
includes $\operatorname{poly}(n)$ iterations.
To this end, recall that by the definition of improvement rules, we know that
$\Weight(\Alg^{t + 1}(G)) > \Weight(\Alg^{t}(G))$
for all $t$.
Since the weight of any IS $I$ in $G$ can be expressed as
$\Weight(I) = x \cdot c$,
where $x$ is a positive integer and $c$ is a constant in
$(0, 1]$,
it follows that
$\Weight(\Alg^{t + 1}(G)) - \Weight(\Alg^{t}(G)) \geq \Omega (1)$
for all $t$.
The argument is completed since
$\Weight(\Alg^{0}(G)) = 0$
and since \Obs{}~\ref{observation:cycle-graph-structure} implies that
$\Weight(\Alg(G)) \leq n k$.
\end{proof}

Note that if we omit the requirement that all improvement rules in $R$ are
loyal, then there may exist an agent that partakes in one of the intermediate
ISs maintained by \Alg{}, but does not partake in \Alg{}'s output.
By deleting some of the nodes this agent partakes in, one can steer the
execution of \Alg{}, leading to a different output.
Recalling that INPA implies truthfulness in the context of the $k$-maxCGIS
problem, we obtain the following corollary from
\Obs{}~\ref{observation:uniform-inpa-implies-truthful} combined with
\Lem{}~\ref{lemma:local-search-algorithm-inpa-efficient}.

\begin{corollary}
\label{corollary:local-search-uniform-algorithm-truthful-efficient}
Consider a local search algorithm \Alg{} for the
$k$-maxCGIS
problem characterized by a fixed length list $R$ of improvement rules.
If all improvement rules in $R$ are loyal, INPA, and efficient, then \Alg{} is
truthful and efficient.
\end{corollary}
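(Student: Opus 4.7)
The plan is to observe that this corollary is essentially an immediate composition of the two results already established just above it, so the proof will be very short and consist solely in invoking them in the right order. Specifically, I would first apply Lemma~\ref{lemma:local-search-algorithm-inpa-efficient} to the local search algorithm \Alg{}: since its list $R$ has fixed length and every improvement rule in $R$ is assumed to be loyal, INPA, and efficient, that lemma yields (in its first conclusion) that \Alg{} itself is INPA and (in its second conclusion) that \Alg{} is computationally efficient.

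Next, I would specialize to the uniform case. Because the corollary concerns the $k$-maxCGIS problem (that is, $\lenUtil=\lenUtil^u$), Observation~\ref{observation:uniform-inpa-implies-truthful} applies and tells us that any INPA algorithm for $k$-maxCGIS is truthful. Applying this to the INPA algorithm \Alg{} obtained in the previous step gives that \Alg{} is truthful. Combining the two conclusions, \Alg{} is both truthful and efficient, which is exactly what the corollary asserts.

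There is no genuine obstacle here: all the substantive content lives in Lemma~\ref{lemma:local-search-algorithm-inpa-efficient} (whose proof tracks INPA through the local search iterations and uses a weight-increase argument for efficiency) and in Observation~\ref{observation:uniform-inpa-implies-truthful} (whose proof uses that in the uniform setting an agent either partakes, with utility $1$, or not, with utility $0$, so under INPA she cannot improve by hiding nodes). The only thing one might want to emphasize in the write-up is that the hypotheses of the corollary exactly match the hypotheses of Lemma~\ref{lemma:local-search-algorithm-inpa-efficient}, and that the restriction $\lenUtil=\lenUtil^u$ is precisely what is needed to invoke Observation~\ref{observation:uniform-inpa-implies-truthful}. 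Accordingly, I would keep the proof to a single sentence or two, citing these two results explicitly and noting that non-uniform length functions require the separate treatment developed in later sections (in particular \Sect{}~\ref{section:concatenation-truthful-algorithms} and \Sect{}~\ref{section:non-uniform-algorithm}).
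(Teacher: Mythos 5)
Your proposal is correct and is exactly the paper's own argument: the corollary is stated there as an immediate consequence of combining Observation~\ref{observation:uniform-inpa-implies-truthful} (INPA implies truthfulness for $k$-maxCGIS) with Lemma~\ref{lemma:local-search-algorithm-inpa-efficient} (loyal, INPA, efficient rules with fixed $|R|$ give an INPA, efficient local search algorithm). Nothing is missing.
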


\section{Upper Bound for the Uniform Length Function}
\label{section:uniform-algorithm}
We now turn to develop an efficient truthful algorithm for the $k$-maxCGIS
problem,
$k \geq 3$.
Our algorithm is presented as a local search algorithm and its truthfulness is
established based on the infrastructure developed in
\Sect{}~\ref{section:local-search-algorithm}.
Following that, in \Sect{}~\ref{section:uniform-approximation-ratio}, we
analyze the guaranteed approximation ratio of this algorithm and prove
\Thm{}~\ref{theorem:uniform-upper-bound} (by virtue of
\Obs{}~\ref{observation:algorithm-implies-mechanism}).
We start with presenting the improvement rules based on which our algorithm is
designed.

\paragraph{Expansion Improvement Rule.}
Given a $k$-cycle graph
$G = (V, E) \in \CycleGraphSet_{n}^{k}$
and an IS $I$ in $G$, the \emph{expansion} improvement rule $r_{E}$ returns
the lexicographically first IS
$I' = r_{E}(G, I)$
that satisfies
(1)
$I' \supset I$;
and
(2)
$|I'| = |I| + 1$.
If no such IS $I'$ exists, then $r_{E}$ returns
$r_{E}(G, I) = \bot$.
We refer to any IS that satisfies conditions (1) and (2) as a \emph{candidate}
of $r_{E}$.
Refer to Figures~\ref{figure:expansion1}--\ref{figure:expansion3} for an
illustration of $r_{E}$.

\begin{figure}
\centering
\begin{subfigure}{.33\textwidth}%
\centering%
\includegraphics[scale=0.35]{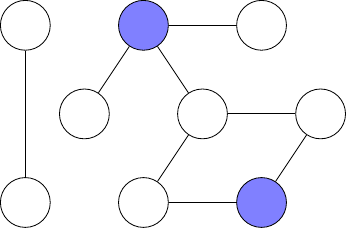}%
\caption{\label{figure:expansion1}%
Initial IS $I$}%
\end{subfigure}%
\begin{subfigure}{.33\textwidth}%
\centering%
\includegraphics[scale=0.35]{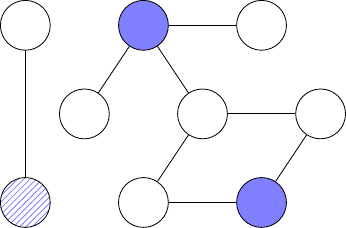}%
\caption{\label{figure:expansion2}%
Improvement attempt}%
\end{subfigure}%
\begin{subfigure}{.33\textwidth}%
\centering%
\includegraphics[scale=0.35]{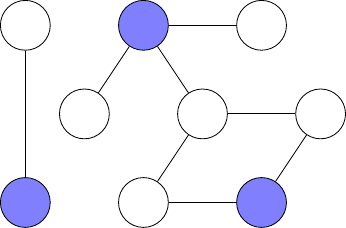}%
\caption{\label{figure:expansion3}%
New IS $I'$}%
\end{subfigure}%
\vskip 5mm
\begin{subfigure}{.33\textwidth}%
\centering%
\includegraphics[scale=0.35]{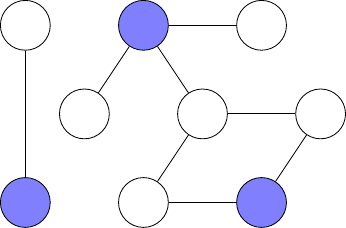}%
\caption{\label{figure:all-for-q1}%
Initial IS $I$}%
\end{subfigure}%
\begin{subfigure}{.33\textwidth}%
\centering%
\includegraphics[scale=0.35]{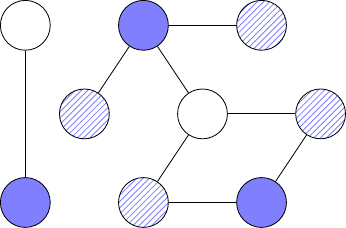}%
\caption{\label{figure:all-for-q2}%
Improvement attempt}%
\end{subfigure}%
\begin{subfigure}{.33\textwidth}%
\centering%
\includegraphics[scale=0.35]{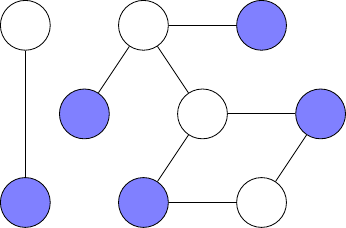}%
\caption{\label{figure:all-for-q3}%
New IS $I'$}%
\end{subfigure}%
\caption{\label{figure:improvement-rules}%
The expansion improvement rule $r_{E}$
(\ref{figure:expansion1}--\ref{figure:expansion3}) and the all-for-$q$
improvement rule $r_{q}$
(\ref{figure:all-for-q1}--\ref{figure:all-for-q3}).}%
\end{figure}

\begin{observation} \label{observation:uniform-expansion-rule}
The expansion improvement rule $r_{E}$ is loyal, INPA, and efficient.
\end{observation}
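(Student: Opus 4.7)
The plan is to verify the three properties separately, with loyalty being essentially by construction, INPA following from a lex-minimality preservation argument under subgraph restriction, and efficiency following from the polynomial bound on the size of a $k$-cycle graph given by \Obs{}~\ref{observation:cycle-graph-structure}.

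For loyalty, I would observe that any candidate $I'$ of $r_{E}$ satisfies $I' \supset I$ by definition, so $\Agents(I) \subseteq \Agents(I')$ holds trivially. This holds regardless of whether $r_E$ returns $\bot$ or an actual IS.

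For INPA, I would argue as follows. Fix a $k$-cycle graph $G = (V, E)$, an IS $I$ in $G$, an agent $i$, and a node subset $S \subseteq \Agents^{-1}(i)$. Split into the two cases required by the definition. First, suppose $r_E(G, I) = I' \neq \bot$ and $i \notin \Agents(I) \cup \Agents(I')$. Then $I \cap S = \emptyset$ and $I' \cap S = \emptyset$, since no node of $\Agents^{-1}(i)$ participates in either IS. Hence both $I$ and $I'$ remain contained in $V - S$, so $I$ is still an IS in $G[V-S]$ and $I'$ is still a candidate of $r_E$ in $G[V-S]$. To show $I'$ is still the lex-first candidate, I would argue that any candidate $I''$ in $G[V-S]$ that is lex-smaller than $I'$ would also be a candidate of $r_E$ in $G$ (since $V - S \subseteq V$ and an IS in $G[V-S]$ is an IS in $G$), contradicting the choice of $I'$ in $G$. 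The second case, $r_E(G, I) = \bot$ with $i \notin \Agents(I)$, is analogous: again $I \subseteq V - S$, and any candidate in $G[V-S]$ would be a candidate in $G$, so the emptiness of candidates transfers from $G$ to $G[V-S]$.

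For efficiency, I would unwind the definition: a candidate of $r_E$ on $(G, I)$ has the form $I \cup \{v\}$ for some $v \in V \setminus I$ that is not adjacent to any node of $I$, so finding the lexicographically first candidate reduces to scanning nodes $v \in V$ in lex order and returning the first $v$ whose addition to $I$ preserves independence (or returning $\bot$ if none exists). By \Obs{}~\ref{observation:cycle-graph-structure}, $|V| \leq \operatorname{poly}(n)$, so this scan runs in polynomial time.

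I do not expect a serious obstacle here; the most delicate point is the lex-minimality argument in the INPA case, but it is immediate once one notes that the restriction $G[V-S]$ can only remove candidates, never create new ones or reorder existing ones. The proof is essentially a bookkeeping verification that the three definitions are satisfied.
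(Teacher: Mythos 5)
Your proof is correct and follows essentially the same route as the paper: loyalty from the requirement $I' \supset I$, INPA from the fact that deleting $S \subseteq \Agents^{-1}(i)$ for a non-partaking agent $i$ can only shrink the candidate set (so the lexicographically first candidate, or $\bot$, is preserved), and efficiency from scanning $V - I$ together with the $\operatorname{poly}(n)$ bound of \Obs{}~\ref{observation:cycle-graph-structure}. Your write-up merely spells out the lex-minimality preservation and the fact that $I$ and $I'$ survive in $G[V-S]$, details the paper leaves implicit.
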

\begin{proof}
The fact that $r_{E}$ is loyal follows from the requirement that
$I' \supset I$.
To see that $r_{E}$ is INPA, observe that if
$i \notin \Agents(I)$,
then reducing $\Agents^{-1}(i)$ to
$\Agents^{-1}(i) - S$
for some node subset
$S \subseteq \Agents^{-1}(i)$
only decrease the collection of candidates from which
$r_{E}(G, I)$
is selected.
Therefore,
(1)
if
$r_{E}(G, I) = I' \neq \bot$
and
$i \notin \Agents(I')$,
then
$r_{E}(G[V - S], I) = I'$;
and
(2)
if
$r_{E}(G, I) = \bot$,
then
$r_{E}(G[V - S], I) = \bot$.
Finally, to see that $r_{E}$ is efficient, notice that it can be
implemented simply by inspecting all nodes in
$V - I$.
The assertion follows from the $\operatorname{poly}(n)$ size of $G$ as
promised in \Obs{}~\ref{observation:cycle-graph-structure}.
\end{proof}

\paragraph{All-For-$q$ Improvement Rule.}
Given a $k$-cycle graph
$G = (V, E) \in \CycleGraphSet_{n}^{k}$
and an IS $I$ in $G$, the \emph{all-for-$q$} improvement rule $r_{q}$
for a constant
$q \in \Integers_{> 0}$
returns the lexicographically first IS
$I' = r_{q}(G, I)$
that satisfies
(1)
$\Agents(I') \supset \Agents(I)$;
and
(2)
$I' = (I \cup X) - (\Neighbors_{G}(X) \cap I)$,
where
$X \subseteq \Neighbors_{G}(I)$
is an IS in $G$ and
$|\Neighbors_{G}(X) \cap I| \leq q$.
If no such IS $I'$ exists, then $r_{q}$ returns
$r_{q}(G, I) = \bot$.
We refer to any IS $I'$ that satisfies conditions (1) and (2) as a
\emph{candidate} of $r_{q}$.
Refer to Figures~\ref{figure:all-for-q1}--\ref{figure:all-for-q3} for an
illustration of $r_{q}$.

\begin{observation} \label{observation:uniform-all-for-q-rule}
The all-for-$q$ improvement rule is loyal, INPA, and efficient for any
constant
$q \in \Integers_{> 0}$.
\end{observation}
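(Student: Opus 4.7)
The plan is to verify each of the three properties — loyalty, INPA, and efficiency — in turn, following the template set by \Obs{}~\ref{observation:uniform-expansion-rule} for the expansion rule. Loyalty is immediate from the definition: condition~(1) of $r_{q}$ explicitly stipulates $\Agents(I') \supset \Agents(I)$, which in particular yields $\Agents(I) \subseteq \Agents(I')$.

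For INPA, my argument rests on the following structural observation. Fix an agent $i \in [n] - \Agents(I)$ and a node subset $S \subseteq \Agents^{-1}(i)$; since $i \notin \Agents(I)$, we have $I \cap S = \emptyset$. I claim that any candidate $(I \cup X) - (\Neighbors_{G}(X) \cap I)$ of $r_{q}(G[V - S], I)$ is also a candidate of $r_{q}(G, I)$: indeed, $X$ remains an IS in $G$ (as $G[V - S]$ is an induced subgraph), and the equality $\Neighbors_{G}(X) \cap I = \Neighbors_{G[V - S]}(X) \cap I$ holds because $I \cap S = \emptyset$. Case~(2) of the INPA definition is then immediate: if $r_{q}(G, I) = \bot$, there are no candidates in $G$ to begin with, so there are none in $G[V - S]$ either, hence $r_{q}(G[V - S], I) = \bot$. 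For Case~(1), further assume $r_{q}(G, I) = I' \neq \bot$ and $i \notin \Agents(I')$, and write $I' = (I \cup X) - (\Neighbors_{G}(X) \cap I)$ for the witnessing $X$. Since no node of $I'$ lies in $\Agents^{-1}(i)$, we get $I' \cap S = \emptyset$; as $X \subseteq I'$, also $X \cap S = \emptyset$. Thus the witnessing construction for $I'$ survives intact in $G[V - S]$, so $I'$ is a candidate of $r_{q}(G[V - S], I)$. Being the lexicographically first candidate in $G$ among a superset of the candidates available in $G[V - S]$, $I'$ remains the lexicographically first candidate in $G[V - S]$, so $r_{q}(G[V - S], I) = I'$.

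The main obstacle is efficiency, which requires a polynomial-size search space for the set $X$ used in condition~(2). I would bound $|X| \leq q \cdot k$ using \Obs{}~\ref{observation:cycle-graph-structure} as follows: each $x \in X$ is adjacent to some node of $Y := \Neighbors_{G}(X) \cap I$, and $|Y| \leq q$ by condition~(2); for each $v \in Y$, the IS $X$ contains at most $|\Agents(v)| \leq k$ neighbors of $v$; combining these two bounds gives $|X| \leq q \cdot k$. Since $|V| = \operatorname{poly}(n)$ by the same observation, enumerating all subsets $X \subseteq V$ of size at most $q k$ takes $|V|^{O(q k)} = \operatorname{poly}(n)$ time (treating $q$ and $k$ as constants); for each such $X$, checking that $X$ is an IS, that $|\Neighbors_{G}(X) \cap I| \leq q$, and that the resulting $I'$ satisfies the strictness condition $\Agents(I') \supset \Agents(I)$ takes polynomial time, after which we output the lexicographically first valid $I'$.
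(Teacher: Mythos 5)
Your proposal is correct and follows essentially the same route as the paper's proof: loyalty from condition (1), INPA via the fact that deleting nodes of a non-partaking agent can only shrink the candidate collection while leaving the witnessing $I'$ (and its defining $X$) intact, and efficiency via the bound $|X| \leq q k$ from \Obs{}~\ref{observation:cycle-graph-structure} plus brute-force enumeration over $\operatorname{poly}(n)$ many subsets of constant size. Your write-up is simply a more explicit version of the paper's terser argument (e.g., spelling out why $I'$ remains lexicographically first in the subgraph, and deriving $|X| \leq qk$ from the degree bound rather than quoting $(k+1)$-claw freeness), so no substantive difference.
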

\begin{proof}
The fact that $r_{q}$ is loyal follows from requirement (1).
To see that $r_{q}$ is INPA, observe that if
$i \notin \Agents(I)$,
then reducing $\Agents^{-1}(i)$ to
$\Agents^{-1}(i) - S$
for some node subset
$S \subseteq \Agents^{-1}(i)$
can only decrease the collection of candidates from which
$r_{q}(G, I)$
is selected.
Therefore,
(1)
if
$r_{q}(G, I) = I' \neq \bot$
and
$i \notin \Agents(I')$,
then
$r_{q}(G[V - S], I) = I'$;
and
(2)
if
$r_{q}(G, I) = \bot$,
then
$r_{q}(G[V - S], I) = \bot$.
Finally, to see that $r_{q}$ is efficient, recall that
\Obs{}~\ref{observation:cycle-graph-structure} guarantees that the
$k$-cycle graph $G$ is
$(k + 1)$-claw
free.
Therefore, if
$X \subseteq \Neighbors_{G}(I)$
is an IS in $G$ and
$|\Neighbors_{G}(X) \cap I| \leq q$,
then
$|X| \leq q k$.
This means that one can implement the all-for-$q$ improvement rule by
inspecting all subsets of
$V - I$
of size at most
$q k$
which is a constant by the assumption that $k$ and $q$ are constants.
The assertion follows from the $\operatorname{poly}(n)$ size of $G$ as
promised in \Obs{}~\ref{observation:cycle-graph-structure}.
\end{proof}

\paragraph{The $\LocalSearchAlg_{q}$ Algorithm.}
Given a constant parameter
$q \in \Integers_{> 0}$,
our main algorithm for the $k$-maxCGIS problem, denoted by
$\LocalSearchAlg_{q}$, is defined to be the local search algorithm
characterized by the improvement rule list
$(r_{E}, r_{q})$.
By combining \Obs{}\ \ref{observation:uniform-expansion-rule} and
\ref{observation:uniform-all-for-q-rule} with
Corollary~\ref{corollary:local-search-uniform-algorithm-truthful-efficient},
we obtain the following corollary.

\begin{corollary} \label{corollary:uniform-main-algorithm-truthful}
$\LocalSearchAlg_{q}$ is truthful and efficient.
\end{corollary}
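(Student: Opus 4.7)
The plan is to simply bundle together the preceding results of \Sect{}~\ref{section:local-search-algorithm} and \Sect{}~\ref{section:uniform-algorithm}. By definition, $\LocalSearchAlg_{q}$ is the local search algorithm characterized by the improvement rule list $R = (r_{E}, r_{q})$, which has (fixed) length $|R| = 2$. To invoke Corollary~\ref{corollary:local-search-uniform-algorithm-truthful-efficient} (which is tailored exactly to the $k$-maxCGIS setting we are in), it suffices to verify that every improvement rule in this list is loyal, INPA, and efficient.

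Both prerequisites are already in hand: \Obs{}~\ref{observation:uniform-expansion-rule} establishes these three properties for $r_{E}$, and \Obs{}~\ref{observation:uniform-all-for-q-rule} establishes them for $r_{q}$ under the hypothesis that $q$ is a constant (which is part of the corollary's setup). Feeding these facts into Corollary~\ref{corollary:local-search-uniform-algorithm-truthful-efficient}, one concludes that $\LocalSearchAlg_{q}$ is both truthful and efficient.

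There is essentially no technical obstacle at this stage, since the substantive work has been carried out beforehand. The loyalty of the two rules is what prevents an agent from ``dropping out'' in the middle of the local search (so the INPA property lifts cleanly from a single rule to the whole process, via \Lem{}~\ref{lemma:local-search-algorithm-inpa-efficient}); the INPA property of the rules, combined with the passage from INPA to truthfulness in the uniform case (\Obs{}~\ref{observation:uniform-inpa-implies-truthful}), gives truthfulness; and the efficiency of each rule, together with the potential-style argument in \Lem{}~\ref{lemma:local-search-algorithm-inpa-efficient} bounding the number of local search iterations by $\operatorname{poly}(n)$, gives efficiency. Hence the corollary follows directly.
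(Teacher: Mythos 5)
Your proposal is correct and follows exactly the paper's own route: the paper obtains the corollary by combining \Obs{}~\ref{observation:uniform-expansion-rule} and \Obs{}~\ref{observation:uniform-all-for-q-rule} (loyalty, INPA, and efficiency of $r_{E}$ and $r_{q}$) with Corollary~\ref{corollary:local-search-uniform-algorithm-truthful-efficient}, applied to the fixed-length rule list $(r_{E}, r_{q})$. Your additional remarks on why loyalty, INPA, and the iteration bound interact are accurate but not needed beyond the cited results.
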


\subsection{Bounding the Approximation Ratio}
\label{section:uniform-approximation-ratio} 
Our goal in this section is to analyze the approximation ratio of
$\LocalSearchAlg_{q}$ and establish
\Thm{}~\ref{theorem:uniform-main-algorithm-approximation} that, combined with
\Obs{}~\ref{observation:algorithm-implies-mechanism} and
Corollary~\ref{corollary:uniform-main-algorithm-truthful}, yields
\Thm{}~\ref{theorem:uniform-upper-bound} by choosing
$q \geq 1 / \epsilon$.
In \Sect{}~\ref{section:main-algorithm-bad-example}, we show that the
upper bound established here on the approximation ratio of
$\LocalSearchAlg_{q}$ is nearly tight.

\begin{theorem} \label{theorem:uniform-main-algorithm-approximation}
$\LocalSearchAlg_{q}$ approximates the $k$-maxCGIS problem within ratio
$k - 1 + 1 / q$.
\end{theorem}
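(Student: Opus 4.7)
The plan is to show $\Weight(I^*) \leq (k-1+1/q)\Weight(A)$, where $A = \LocalSearchAlg_q(G)$ is the algorithm's output IS and $I^*$ is an optimal IS in $G$. Decomposing $A = B \sqcup C$ and $I^* = C \sqcup D$ with $B = A \setminus I^*$, $C = A \cap I^*$, and $D = I^* \setminus A$, and noting that $k-1+1/q \geq 1$, it suffices to establish the localized inequality $\Weight(D) \leq (k-1+1/q)\Weight(B)$; the $C$-contributions then cancel.

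I would begin by recording a handful of structural properties of $(A, I^*)$. Since $r_{E}$ is stuck at $A$, every $u \in D$ has a neighbor in $A$; and since $C \cup D = I^*$ is an IS with $C \subseteq A$, no $u \in D$ can be adjacent to any $v \in C$, forcing $N_{G}(u) \cap A \subseteq B$. By \Obs{}~\ref{observation:cycle-graph-structure}, every $v \in B$ satisfies $|N_{G}(v) \cap D| \leq |\Agents(v)|$. Because $B$ is an IS, distinct $B$-neighbors of any $u \in D$ share pairwise distinct agents with $u$, so the number $a_{\mathrm{new}}(u) := |\Agents(u) \setminus \Agents(A)|$ of ``new'' agents of $u$ satisfies $a_{\mathrm{new}}(u) \leq |\Agents(u)| - |N_{G}(u) \cap B|$. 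Moreover, since each agent of $B$ is shared with at most one $u \in D$ (by independence of $D$), $|\Agents(D) \cap \Agents(A)| \leq \Weight(B)$, so $\Weight(D) \leq \Weight(B) + \sum_{u \in D} a_{\mathrm{new}}(u)$ and the target inequality reduces to $\sum_{u \in D} a_{\mathrm{new}}(u) \leq (k-2+1/q)\Weight(B)$.

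I would then partition $D = D_{\mathrm{hi}} \sqcup D_{\mathrm{lo}}$, where $D_{\mathrm{hi}} = \{u \in D : |N_{G}(u) \cap B| \geq q+1\}$. A double-counting argument $(q+1)|D_{\mathrm{hi}}| \leq \sum_{u \in D}|N_{G}(u) \cap B| = \sum_{v \in B}|N_{G}(v) \cap D| \leq \Weight(B)$, combined with $a_{\mathrm{new}}(u) \leq k-q-1$ for $u \in D_{\mathrm{hi}}$, handles the high-degree part. For $D_{\mathrm{lo}}$ I would exploit the termination of $r_{q}$: for any IS $X \subseteq D_{\mathrm{lo}}$ whose $B$-neighborhood $Y := N_{G}(X) \cap A$ satisfies $|Y| \leq q$, the rule forbids $(A \setminus Y) \cup X$ from strictly enlarging the agent set, which (since $A$ is independent and $\Agents(X) \cap \Agents(A \setminus Y) = \emptyset$) translates to the dichotomy $\Agents(Y) \not\subseteq \Agents(X)$ or $\Agents(X) \subseteq \Agents(A)$. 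Applied systematically --- the single-vertex case $X = \{u\}$ for $u \in D_{\mathrm{lo}}$ with $a_{\mathrm{new}}(u) \geq 1$ already forces some $v \in N_{G}(u) \cap B$ to possess an agent outside $\Agents(u)$, while larger groupings around common witnesses $Y$ tighten the per-$v$ accounting --- this yields the required bound on $\sum_{u \in D_{\mathrm{lo}}} a_{\mathrm{new}}(u)$.

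The main obstacle is the $D_{\mathrm{lo}}$ analysis: one must choose the right groupings $X \subseteq D_{\mathrm{lo}}$ and distribute the ``loss'' witnesses produced by the $r_{q}$ dichotomy among the $B$-nodes so that the per-$v$ savings aggregate cleanly. Matching the slack left over from the $D_{\mathrm{hi}}$ claw-free estimate (namely $(k-q-1)/(q+1)\cdot\Weight(B)$) against exactly what $r_{q}$ buys on the $D_{\mathrm{lo}}$ side --- so that the two contributions collapse precisely to $(k-2+1/q)\Weight(B)$ --- is the delicate bookkeeping at the heart of the proof.
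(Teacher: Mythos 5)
Your setup is sound as far as it goes: the reduction to $\Weight(D) \leq (k-1+1/q)\,\Weight(B)$, the observation that $\Neighbors_{G}(u)\cap A$ is a nonempty subset of $B$ for every $u\in D$ (maximality via $r_{E}$ plus independence of $I^{*}$), the bound $a_{\mathrm{new}}(u) \leq |\Agents(u)| - |\Neighbors_{G}(u)\cap B|$, and the $D_{\mathrm{hi}}$ estimate via \Obs{}~\ref{observation:cycle-graph-structure} are all correct. The gap is that the only step in which the all-for-$q$ rule actually enters --- bounding $\sum_{u\in D_{\mathrm{lo}}} a_{\mathrm{new}}(u)$ --- is never proved: you state the correct termination dichotomy (for every admissible $X$, either $\Agents(Y)\not\subseteq\Agents(X)$ or $\Agents(X)\subseteq\Agents(A)$), record its consequence for $X=\{u\}$, and then defer everything else to unspecified ``groupings'' and ``delicate bookkeeping''. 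The single-vertex consequence is far too weak on its own: it yields one witness $v\in \Neighbors_{G}(u)\cap B$ with a spare agent per $u$, but the same $v$ (indeed the same spare agent) can witness many different $u$'s, so these savings cannot simply be summed over $D_{\mathrm{lo}}$; and the instance of \Sect{}~\ref{section:main-algorithm-bad-example} shows the target inequality is essentially tight (ratio $k-1+\frac{1}{q+1}$), so no loose charging can bridge the remaining gap between the trivial $(k-1)\Weight(B)$ bound and the required $(k-2+1/q)\Weight(B)$, nor is it clear how your $D_{\mathrm{hi}}$ and $D_{\mathrm{lo}}$ charges share the same claw-free capacity of $B$. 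Since this is exactly the step you yourself call ``the heart of the proof'', the proposal as written does not establish the theorem.

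For comparison, the paper resolves this step by analyzing each connected component $H$ of $G[I\cup O]$ separately: a tree edge-count (\Prop{}~\ref{proposition:combinatorial-property}) gives $\Weight(O_{H})\le (k-1+1/|I_{H}|)\,\Weight(I_{H})$, improving to $(k-1)\Weight(I_{H})$ as soon as one node of $I_{H}$ has degree strictly below its length; and if every node of $I_{H}$ is fully saturated and $|I_{H}|\le q$, then $X=O_{H}$ is itself a legal all-for-$q$ move (saturation plus agent-disjointness of its $O_{H}$-neighbors implies every agent of $I_{H}$ is covered by $O_{H}$, and $\Neighbors_{G}(O_{H})\cap I = I_{H}$ has size at most $q$), contradicting termination (\Lem{}~\ref{lemma:uniform-deg=weight}). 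In other words, the ``right grouping'' you are searching for is precisely the connected components of $G[A\cup I^{*}]$, and making your $D_{\mathrm{lo}}$ step rigorous essentially amounts to reconstructing this counting argument; what is missing is not routine bookkeeping but the proof's main lemma.
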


Consider a $k$-cycle graph
$G = (V, E) \in \CycleGraphSet_{n}^{k}$
input to $\LocalSearchAlg_{q}$.
Let
$I = \LocalSearchAlg_{q}(G)$
be the IS output by $\LocalSearchAlg_{q}$ and let $O$ be an optimal
(maximum weight) IS in $G$.
To establish \Thm{}~\ref{theorem:uniform-main-algorithm-approximation}, it
suffices to show that if
$H = (V_{H}, E_{H})$
is a connected component of
$G[I \cup O]$
with
$I_{H} = V_{H} \cap I$
and
$O_{H} = V_{H} \cap O$,
then
\begin{equation}\textstyle
\label{equation:connected-component-ratio}
\Weight(O_{H})
\, \leq \,
\left( k - 1 + \frac{1}{q} \right) \Weight(I_{H}) \, ;
\end{equation}
indeed, summing this inequality over all connected components $H$ of
$G[I \cup O]$
yields the desired bound
$\Weight(O)
\leq
\left( k - 1 + \frac{1}{q} \right) \Weight(I)$.

Consider a connected component
$H = (V_{H}, E_{H})$
of
$G[I \cup O]$
with
$I_{H} = V_{H} \cap I$
and
$O_{H} = V_{H} \cap O$.
If
$V_{H} = \{ v \}$
is a singleton set, then the inclusion of the expansion improvement rule
$r_{E}$ in $\LocalSearchAlg_{q}$ ensures that
$I_{H} = \{ v \}$,
hence inequality~(\ref{equation:connected-component-ratio}) clearly holds.
Assume hereafter that
$|V_{H}| > 1$
and notice that in this case, $H$ must be a bipartite graph with $I_{H}$ on
one side and $O_{H}$ on the other.

The construction of $H$ implies that
$(O \cup I_{H}) - O_{H}$
is an independent set in $G$, thus the optimality of $O$ ensures that
$\Weight(I_{H}) \leq \Weight(O_{H})$.
If
$\Weight(I_{H}) = \Weight(O_{H})$,
then inequality~(\ref{equation:connected-component-ratio}) holds trivially, so
assume hereafter that
$\Weight(I_{H}) < \Weight(O_{H})$.
By \Obs{}~\ref{observation:cycle-graph-structure}, we know that
$\Degree_{H}(v) \leq |\Agents(v)| = \Weight(v)$
for every
$v \in I_{H}$.
We shall establish inequality~(\ref{equation:connected-component-ratio}) by
proving \Lem{}\ \ref{lemma:uniform-deg=weight} and
\ref{lemma:uniform-deg<weight} for which we need the following combinatorial
proposition.

\begin{proposition} \label{proposition:combinatorial-property}
Let
$P = (A \uplus B, E)$
be a connected bipartite graph and let
$w : A \cup B \rightarrow \{ 2, 3, \dots, k \}$
be a weight function that satisfies
$w(a) \geq \Degree_{P}(a)$ 
for each node
$a \in A$.
Then,
$w(B) \leq w(A) \left( k - 1 + \frac{1}{|A|} \right)$.
Moreover, if there exists a node
$a \in A$
with
$w(a) > \Degree_{P}(a)$,
then
$w(B) \leq w(A) (k - 1)$.
\end{proposition}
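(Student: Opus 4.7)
The plan is to exploit bipartite connectivity together with the per-node bound $w(a)\ge\Degree_{P}(a)$ to derive a linear relationship between $w(A)$ and $|A|+|B|$, and then combine it with the trivial per-node bound $w(\cdot)\le k$. The key observation is that in a bipartite graph, $\sum_{a\in A}\Degree_{P}(a)=|E(P)|$, so the degree hypothesis is really a disguised lower bound $w(A)\ge|E(P)|$.

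First, I would note that connectivity of $P$ over $|A|+|B|$ nodes yields $|E(P)|\ge|A|+|B|-1$, hence $w(A)\ge|A|+|B|-1$, i.e., $|B|\le w(A)-|A|+1$. Plugging this into $w(B)\le k|B|$ gives $w(B)\le k\bigl(w(A)-|A|+1\bigr)$. A short rearrangement reduces the target inequality $w(B)\le w(A)\bigl(k-1+\tfrac{1}{|A|}\bigr)$ to $(|A|-1)\,w(A)\le k|A|(|A|-1)$, which is immediate from the trivial $w(A)\le k|A|$ (since each $w(a)\le k$); the case $|A|=1$ is trivial as the rearranged inequality collapses to $0\le 0$.

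For the strengthened inequality, the hypothesis $w(a^{*})>\Degree_{P}(a^{*})$ for some $a^{*}\in A$, combined with the fact that the weights are integers, sharpens the first step to $w(A)\ge|E(P)|+1\ge|A|+|B|$, hence $|B|\le w(A)-|A|$. Then $w(B)\le k|B|\le k\,w(A)-k|A|$, and the desired $w(B)\le(k-1)\,w(A)$ reduces once more to $w(A)\le k|A|$, which holds.

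There is no real obstacle here: once the reduction $w(A)\ge|E(P)|$ is in hand, the entire argument amounts to a few lines of integer arithmetic, with only the minor technicality of separately handling $|A|=1$.
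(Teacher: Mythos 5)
Your proof is correct and follows essentially the same route as the paper's: combine $w(A)\ge\sum_{a\in A}\Degree_{P}(a)=|E|$ with the connectivity bound $|E|\ge|A|+|B|-1$ (sharpened to $|E|+1$ when some $w(a)>\Degree_{P}(a)$, using integrality) and the trivial bounds $w(A)\le k|A|$, $w(B)\le k|B|$. The only difference is cosmetic: the paper first deletes cycle edges to reduce to a tree and work with the equality $|E|=|A|+|B|-1$, whereas you observe that the inequality from connectivity already suffices.
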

\begin{proof}
Assume that the bipartite graph
$P = (A \uplus B , E)$ 
is cycle free;
this assumption is without loss of generality since the removal of an edge
from a cycle does not break the graph's connectivity nor does it violate the
condition
$\Degree_{P}(a) \leq w(a)$
for any node
$a \in A$.
This means that $P$ is a tree, hence
\[
|E|
\, = \,
|A| + |B| - 1 \, .
\]
On the other hand, as
$\Degree_{P}(a) \leq w(a)$
for all nodes
$a \in A$, we deduce that
\[
|E|
\, = \,
\sum_{a \in A} \Degree_{P}(a)
\, \leq \,
\sum_{a \in A} w(a)
\, = \,
w(A) \,.
\]
Recalling that the images of $w$ are up-bounded by $k$, we know that
$w(A) \leq k \cdot |A|$
and
$w(B) \leq k \cdot |B|$.
Put together, we conclude that
\[
\frac{w(A)}{k} \left( 1 - \frac{1}{|A|} \right) + \frac{w(B)}{k}
\, \leq \,
|A| \left( 1 - \frac{1}{|A|} \right) + |B|
\, = \,
|E|
\, \leq \,
w(A) \, ,
\]
thus
\[
w(B)
\, \leq \,
w(A) \left( k - 1 +  \frac{1}{|A|}\right) \, .
\]

Now, assume that there exists a node
$a \in A$
such that
$\Degree_{P}(a) < w(a)$.
Since $\Degree_{P}(a)$ and $w(a)$ are integers, it follows that
$\Degree_{P}(a) \leq w(a) - 1$,
hence
\[
|E|
\, = \,
\sum_{a' \in A} \Degree_{P}(a')
\, \leq \,
\left( \sum_{a' \in A} w(a') \right) - 1
\, = \,
w(A) - 1 \, .
\]
Therefore,
\[
\frac{w(A)}{k} + \frac{w(B)}{k} - 1 \leq
|A| + |B| - 1 = 
|E| \leq
w(A) - 1 \, ,
\]
which implies that
\[
w(B) \leq (k - 1) \cdot w(A) \, ,
\]
thus completing the proof.
\end{proof}

\begin{lemma} \label{lemma:uniform-deg=weight}
If
$\Degree_{H}(v) = \Weight(v)$
for every
$v \in I_{H}$, 
then
$\Weight(O_{H}) < \left( k - 1 + \frac{1}{q} \right) \Weight(I_{H})$.
\end{lemma}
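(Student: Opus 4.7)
The plan is to apply \Prop{}~\ref{proposition:combinatorial-property} to the connected bipartite graph $H$ with $A := I_H$, $B := O_H$, and $w := \Weight$; the hypothesis $\Degree_H(v) = \Weight(v)$ for every $v \in I_H$ matches exactly the side condition $w(a) \geq \Degree_P(a)$ required by the proposition, and it yields $\Weight(O_H) \leq \Weight(I_H)(k-1+1/|I_H|)$. I would then split on the size of $I_H$. If $|I_H| \geq q+1$, the strict inequality $\Weight(O_H) < \Weight(I_H)(k-1+1/q)$ follows immediately (using $\Weight(I_H) > 0$, which holds because $|V_H|>1$ and $H$ is connected, so both sides of its bipartition must be nonempty). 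So the interesting case is $|I_H| \leq q$, where I plan to derive a contradiction with the termination of $\LocalSearchAlg_q$.

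For that contradiction, the strategy is to exhibit $X := O_H$ as a valid candidate of the all-for-$q$ improvement rule applied to $(G, I)$, producing $I' := (I \setminus I_H) \cup O_H$. The routine checks go through quickly: $O_H \subseteq \Neighbors_G(I)$ (each $u \in O_H$ has a neighbor in $I_H$ by connectedness and bipartiteness of $H$); $O_H$ is itself an IS; and any $I$-node adjacent to $O_H$ lies in the same connected component $H$ of $G[I \cup O]$, so $\Neighbors_G(O_H) \cap I \subseteq I_H$ and hence has size at most $q$. The remaining requirement---that $\Agents(I') \supsetneq \Agents(I)$---is the step I expect to be the main obstacle, since it is where the lemma's hypothesis must be exploited.

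To handle it, I would combine \Obs{}~\ref{observation:cycle-graph-structure}, which gives $|\Neighbors_G(v) \cap O| \leq |\Agents(v)|$ for each $v \in I_H$, with the uniform-length identity $\Weight(v) = |\Agents(v)|$ and the hypothesis $\Degree_H(v) = \Weight(v)$. Equality then forces the map sending each agent of $v$ to the (at most one) $O$-node containing her to be a bijection onto $\Neighbors_G(v) \cap O_H$, so $\Agents(v) \subseteq \Agents(O_H)$ for every $v \in I_H$, and therefore $\Agents(I_H) \subseteq \Agents(O_H)$. Combined with the background assumption $\Weight(I_H) < \Weight(O_H)$ and the uniform-setting identity $|\Agents(J)| = \Weight(J)$ for every IS $J$, the inclusion is strict, so $\Agents(I') \supsetneq \Agents(I)$. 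This turns $I'$ into a legitimate candidate of $r_q$, contradicting $r_q(G, I) = \bot$ at termination and closing the case.
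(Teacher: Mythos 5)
Your proposal is correct and follows essentially the same route as the paper: the case $|I_H|>q$ is handled by Proposition~\ref{proposition:combinatorial-property} with $A=I_H$, $B=O_H$, and the case $|I_H|\leq q$ by exhibiting $I'=(I\cup O_H)-I_H$ as a candidate of the all-for-$q$ rule, using the equality $\Degree_H(v)=|\Agents(v)|$ together with the agent-disjointness of $\Neighbors_H(v)$ to get $\Agents(I_H)\subseteq\Agents(O_H)$ and the weight gap $\Weight(I_H)<\Weight(O_H)$ to get strictness. The minor details you flag (e.g., $\Neighbors_G(O_H)\cap I=I_H$) are handled the same way, implicitly, in the paper's proof.
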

\begin{proof}
If
$|I_{H}| > q$,
then by plugging
$A = I_{H}$
and
$B = O_{H}$
into \Prop{}~\ref{proposition:combinatorial-property},
we conclude that
$\Weight(O_{H})
\leq
\left( k - 1 + \frac{1}{|I_{H}|} \right) \Weight(I_{H})
<
\left( k - 1 + \frac{1}{q} \right) \Weight(I_{H})$,
thus proving the assertion.
So, assume towards contradiction that
$|I_{H}| \leq q$,
and recall that
$I = \LocalSearchAlg_{q}(G)$
is the IS returned by $\LocalSearchAlg_{q}$ on $G$.
By the construction of $H$, we know that
$I' = (I \cup O_{H}) - I_{H}$
is an IS in $G$.
We shall reach a contradiction by showing that $I'$ is a candidate of
$r_{q}$ when applied to $G$ and $I$, which means that the application
of $\LocalSearchAlg_{q}$ to $G$ should not have returned $I$.

To this end, consider a node
$v \in I_{H}$.
The assumption that
$\Degree_{H}(v) = \Weight(v)$
implies that
$|\Neighbors_{H}(v)| = |\Agents(v)|$.
As $\Neighbors_{H}(v)$ is an IS
(recall that
$\Neighbors_{H}(v) \subseteq O_{H}$),
we know that
$\Agents(u) \cap \Agents(u') = \emptyset$
for every
$u, u' \in \Neighbors_{H}(v)$.
This means that for every agent
$i \in \Agents(v)$,
there exists a unique node
$u = u(i) \in \Neighbors_{H}(v)$
such that
$i \in \Agents(u)$.
In particular, we conclude that
$\Agents(v) \subseteq \Agents(\Neighbors_{H}(v)) \subseteq \Agents(O_{H})$, 
hence
$\Agents(I_{H}) \subseteq \Agents(O_{H})$.
Therefore,
$\Agents(I) \subseteq \Agents(I')$
and the assumption that
$\Weight(I_{H}) < \Weight(O_{H})$
guarantees that
$\Agents(I) \subset \Agents(I')$.
Since
$|I_{H}| \leq q$
and since
$O_{H} \subseteq \Neighbors_{G}(I)$,
it follows that $I'$ is indeed a candidate of $r_{q}$, realized by taking
$X = O_{H}$.
\end{proof}

\begin{lemma} \label{lemma:uniform-deg<weight}
If there exists a node
$v \in I_{H}$
such that
$\Degree_{H}(v) < \Weight(v)$,
then
$\Weight(O_{H}) \leq (k - 1) \Weight(I_{H})$.
\end{lemma}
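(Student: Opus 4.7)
The plan is to derive Lemma~\ref{lemma:uniform-deg<weight} as a direct application of the second (stronger) clause of Proposition~\ref{proposition:combinatorial-property}. All the hypotheses of that proposition are essentially already on the table; the proof is simply a matter of verifying them and invoking the proposition with the right substitution.

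First, I would set $P = H$, $A = I_H$, $B = O_H$, and $w = \Weight$. Recall that in the uniform length setting, $\Weight(v) = |\Agents(v)| \cdot \lenUtil^{u}(|\Agents(v)|) = |\Agents(v)|$, so $\Weight$ takes integer values in $\{2, 3, \dots, k\}$, as required by the proposition. The graph $H$ is connected by construction (as a connected component of $G[I \cup O]$) and bipartite with bipartition $(I_H, O_H)$, as already noted before the statement of the lemma (the case $|V_H| = 1$ was disposed of earlier, and the case $|V_H| > 1$ forces bipartiteness because $I$ and $O$ are independent sets).

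Second, I would check the degree-weight inequality on the $A$-side. The paragraph preceding the lemma invokes Observation~\ref{observation:cycle-graph-structure} to conclude that $\Degree_H(v) \leq |\Agents(v)| = \Weight(v)$ for every $v \in I_H$, which is precisely the condition $w(a) \geq \Degree_P(a)$ for all $a \in A$. Finally, the hypothesis of the lemma provides a node $v \in I_H$ with $\Degree_H(v) < \Weight(v)$, which gives the additional strict-inequality condition needed to apply the second clause of the proposition.

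With these verifications in hand, Proposition~\ref{proposition:combinatorial-property} yields $\Weight(O_H) = w(B) \leq w(A)(k-1) = (k-1)\Weight(I_H)$, which is exactly the claim. There is no real obstacle here: the combinatorial core of the argument has been packaged into Proposition~\ref{proposition:combinatorial-property}, and the lemma is the specialization of that proposition to the bipartite graph $H$ arising from the symmetric difference analysis of $\LocalSearchAlg_{q}$'s output against the optimum. The only thing to be a bit careful about is to confirm that $H$ is genuinely bipartite (i.e., we are not in the degenerate singleton-component case) so that the substitution $A = I_H$, $B = O_H$ is well-defined; this has already been arranged in the discussion leading up to the lemma.
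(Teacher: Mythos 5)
Your proposal is correct and matches the paper's own argument, which likewise proves the lemma by plugging $A = I_{H}$ and $B = O_{H}$ into the second clause of \Prop{}~\ref{proposition:combinatorial-property}; your additional verification of the hypotheses (connectedness, bipartiteness, integer weights, and the degree bound from \Obs{}~\ref{observation:cycle-graph-structure}) is exactly what the paper leaves implicit.
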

\begin{proof} 
Follows directly from \Prop{}~\ref{proposition:combinatorial-property} by
plugging
$A = I_{H}$
and
$B = O_{H}$.
\end{proof}

\section{Concatenation of Truthful Algorithms} 
\label{section:concatenation-truthful-algorithms}
In this section, we introduce the notion of algorithm concatenation that
facilitates the construction of our main algorithm for non-uniform length
functions, presented in \Sect{}~\ref{section:non-uniform-algorithm}.
We also introduce a sufficient condition for the truthfulness of concatenated
algorithms;
the usefulness of this condition is demonstrated already in
\Sect{}~\ref{section:greedy}, where it allows us to establish the truthfulness
of \GreedyAlg{}.
All algorithms addressed in this section are
$(k, \lenUtil)$-maxCGIS
algorithms for a length bound
$k \geq 2$
and a (uniform or non-uniform) length function $\lenUtil$.

\begin{definition*}[algorithm concatenation]
The \emph{concatenation} of algorithms $\Alg_{1}$ and $\Alg_{2}$, denoted by
$\Alg_{1} \cdot \Alg_{2}$,
is the algorithm that given an input $k$-cycle graph
$G = (V, E) \in \CycleGraphSet_{n}^{k}$,
outputs the IS
\[\textstyle
\left( \Alg_{1} \cdot \Alg_{2} \right) (G)
\, = \,
\Alg_{1}(G)
\cup
\Alg_{2} \left(
G \left[
V - \left( \Alg_{1}(G) \cup \Neighbors_{G} \left( \Alg_{1}(G) \right) \right)
\right]
\right) \, .
\]
\end{definition*}

Intuitively, the concatenated algorithm
$\Alg_{1} \cdot \Alg_{2}$
works as follows:
first, run $\Alg_{1}$ on $G$ to get the IS $\Alg_{1}(G)$;
then, expand $\Alg_{1}(G)$ with the IS constructed by running $\Alg_{2}$ on
the graph obtained from $G$ by removing the nodes in $\Alg_{1}(G)$ and their
neighbors.
The following definition provides a sufficient condition for the agents to
favor partaking in the output of $\Alg_{1}$ over that of $\Alg_{2}$.

\begin{definition*}[the $\succeq$ relation]
Algorithms $\Alg_{1}$ and $\Alg_{2}$ satisfy the relation
$\Alg_{1} \succeq \Alg_{2}$
if the maximum length of a node included in any IS output by $\Alg_{1}$ is
up-bounded by the minimum length of a node included in any IS output by
$\Alg_{2}$, namely,
\[
\max_{n \in \Integers_{> 0}} \,
\max_{G = (V, E) \in \CycleGraphSet_{n}^{k}} \,
\max_{v \in \Alg_{1}(G)} \,
|\Agents(v)|
\, \leq \,
\min_{n \in \Integers_{> 0}} \,
\min_{G = (V, E) \in \CycleGraphSet_{n}^{k}} \,
\min_{v \in \Alg_{2}(G)} \,
|\Agents(v)| \, .
\]
\end{definition*}

Based on the definitions of algorithm concatenation and the $\succeq$
relation, we can now state the following important lemma.

\begin{lemma} \label{lemma:concatenation-truthful}
Let $\Alg_{1}$ be a truthful INPA algorithm and let $\Alg_{2}$ be a truthful
algorithm such that
$\Alg_{1} \succeq \Alg_{2}$.
Then, the concatenated algorithm
$\Alg_{1} \cdot \Alg_{2}$
is truthful.
\end{lemma}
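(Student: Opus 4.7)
The plan is to unpack both invocations of the concatenated algorithm and perform a case analysis on whether the targeted agent $i$ partakes in the output of $\Alg_1$ on the original graph. Write $I_1 = \Alg_1(G)$, $G_2 = G[V - (I_1 \cup \Neighbors_G(I_1))]$, and $I_2 = \Alg_2(G_2)$; analogously, set $I_1' = \Alg_1(G[V-S])$, let $G_2'$ be the graph fed into $\Alg_2$ in the modified run, and set $I_2' = \Alg_2(G_2')$. Note that $I_1 \cup I_2$ (respectively $I_1' \cup I_2'$) is an IS in $G$ (resp.\ $G[V-S]$) by construction of $G_2$ (resp.\ $G_2'$), and since $\Agents^{-1}(i)$ forms a clique in any $k$-cycle graph, agent $i$ partakes in at most one node of each union.

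\emph{Case 1: $i \in \Agents(I_1)$.} Let $v$ be the unique node in $I_1 \cap \Agents^{-1}(i)$, so that the original utility of $i$ equals $\lenUtil(|\Agents(v)|)$. In the modified run, $i$ either partakes in some node $w \in I_1'$, in which case the truthfulness of $\Alg_1$ (applied to agent $i$ with the hidden set $S \subseteq \Agents^{-1}(i)$) yields $\lenUtil(|\Agents(w)|) = \utility(i, I_1') \leq \utility(i, I_1) = \lenUtil(|\Agents(v)|)$; or in some node $w \in I_2'$, in which case $\Alg_1 \succeq \Alg_2$ forces $|\Agents(w)| \geq |\Agents(v)|$, and the monotonicity of $\lenUtil$ again bounds the modified utility by the original one.

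\emph{Case 2: $i \notin \Agents(I_1)$.} The INPA property of $\Alg_1$ forces $I_1' = I_1$, so $i$ does not partake in $\Alg_1$'s output on either side, and it suffices to compare $\utility(i, I_2')$ with $\utility(i, I_2)$. Since $S \subseteq \Agents^{-1}(i)$ and $i \notin \Agents(I_1)$, we have $S \cap I_1 = \emptyset$, and a short set calculation gives $V(G_2') = V(G_2) - (S \cap V(G_2))$. Hence $G_2'$ is precisely $G_2$ with a subset of $\Agents^{-1}(i) \cap V(G_2)$ hidden, so the truthfulness of $\Alg_2$ delivers $\utility(i, I_2') \leq \utility(i, I_2)$, as required.

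The delicate step will be the set identity $V(G_2') = V(G_2) - (S \cap V(G_2))$ in Case 2. It hinges on the INPA conclusion $I_1' = I_1$ (so the ``neighborhood to be removed'' coincides up to $S$) together with $S \cap I_1 = \emptyset$ (so that no removed neighbor is accidentally reinstated). Once that identity is verified, the remainder of the argument is a mechanical assembly of the three hypotheses: truthfulness of $\Alg_1$, truthfulness of $\Alg_2$, and the length domination $\Alg_1 \succeq \Alg_2$ paired with the monotonicity of $\lenUtil$.
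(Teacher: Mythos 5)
Your proposal is correct and follows essentially the same argument as the paper: a case split on whether $i \in \Agents(\Alg_1(G))$, using truthfulness of $\Alg_1$ in the first case together with $\Alg_1 \succeq \Alg_2$ and monotonicity of $\lenUtil$, and in the second case the INPA property to identify the two residual graphs (the set identity you flag is exactly the computation the paper performs) followed by truthfulness of $\Alg_2$. The only cosmetic omission is the trivial sub-case of Case 1 where $i$ partakes in neither $I_1'$ nor $I_2'$, in which her modified utility is $0$ and the inequality holds vacuously.
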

\begin{proof}
Consider a $k$-cycle graph
$G = (V, E) \in \CycleGraphSet_{n}^{k}$
and an agent
$i \in [n]$
and fix a node subset
$S \subseteq \Agents^{-1}(i)$.
We establish the assertion by proving that
\[\textstyle
\utility(i, (\Alg_{1} \cdot \Alg_{2})(G[V - S]))
\, \leq \,
\utility(i, (\Alg_{1} \cdot \Alg_{2})(G)) \, .
\]
To avoid cumbersome expressions, we denote
$G' = G[V - S]$
and
\[\textstyle
I_{1} = \Alg_{1}(G) \, , \quad
I_{2} = \Alg_{2}(G) \, , \quad
I'_{1} = \Alg_{1}(G') \, , \quad \text{and} \quad
I'_{2} = \Alg_{2}(G') \, .
\]
The proof is divided into two cases depending on whether agent $i$ partakes in
$I_{1}$.

Suppose first that
$i \in \Agents(I_{1})$
which means, by the definition of algorithm concatenation, that
$\utility(i, (\Alg_{1} \cdot \Alg_{2})(G))
=
\utility(i, I_{1})$.
If
$i \in \Agents(I'_{1})$,
then $i$'s utility from
$(\Alg_{1} \cdot \Alg_{2})(G')$
is equal to her utility from
$I'_{1}$.
Since $\Alg_{1}$ is truthful, it follows that 
$\utility(i, I'_{1}) \leq \utility(i, I_{1})$,
hence
\[\textstyle
\utility(i, (\Alg_{1} \cdot \Alg_{2})(G'))
\, = \,
\utility(i, I'_{1})
\, \leq \,
\utility(i, I_{1})
\, = \,
\utility(i, (\Alg_{1} \cdot \Alg_{2})(G)) \, .
\]
If
$i \notin \Agents(I'_{1})$,
then $i$'s utility from
$(\Alg_{1} \cdot \Alg_{2})(G')$
is equal to her utility from
$\Alg_{2}(G'[(V - S) - (I'_{1} \cup \Neighbors_{G'}(I'_{1}))])$.
Recalling that  
$\Alg_{1} \succeq \Alg_{2}$,
we conclude that the utility of $i$ from $I_{1}$ is at least as high as her
utility from any IS output by $\Alg_{2}$, thus
\begin{align*}
\utility(i, (\Alg_{1} \cdot \Alg_{2})(G'))
\, = \, &
\utility(i, \Alg_{2}(G'[(V - S) - (I'_{1} \cup \Neighbors_{G'}(I'_{1}))]))
\\
\leq \, &
\utility(i, I_{1})
\, = \,
\utility(i, (\Alg_{1} \cdot \Alg_{2})(G)) \, .
\end{align*}
Therefore, the assertion holds when
$i \in \Agents(I_{1})$.

So, suppose that
$i \notin \Agents(I_{1})$
which means, by the definition of algorithm concatenation, that 
$\utility(i, (\Alg_{1} \cdot \Alg_{2})(G))
=
\utility(i, \Alg_{2}(G[V - (I_{1} \cup \Neighbors_{G}(I_{1}))]))$.
By the truthfulness of $\Alg_{1}$, we deduce that
$i \notin \Agents(I'_{1})$,
thus
\[\textstyle
\utility(i, (\Alg_{1} \cdot \Alg_{2})(G'))
\, = \,
\utility(i, \Alg_{2}(G'[(V - S) - (I'_{1} \cup \Neighbors_{G'}(I'_{1}))])) \, .
\]
Since $\Alg_{1}$ is INPA, it follows that 
$I'_{1} = I_{1}$,
therefore
\[\textstyle
G'[(V - S) - (I'_{1} \cup \Neighbors_{G'}(I'_{1}))]
\, = \,
G'[(V - S) - (I_{1} \cup \Neighbors_{G'}(I_{1}))]
\, = \,
G[(V - (I_{1} \cup \Neighbors_{G}(I_{1}))) - S] \, .
\]
The truthfulness of $\Alg_{2}$ ensures that
\[\textstyle
\utility(i, \Alg_{2}(G[(V - (I_{1} \cup \Neighbors_{G}(I_{1}))) - S]))
\, \leq \,
\utility(i, \Alg_{2}(G[V - (I_{1} \cup \Neighbors_{G}(I_{1}))])) \, ,
\]
hence
\[\textstyle
\utility(i, (\Alg_{1} \cdot \Alg_{2})(G'))
\, \leq \,
\utility(i, \Alg_{2}(G[V - (I_{1} \cup \Neighbors_{G}(I_{1}))]))
\, = \,
\utility(i, (\Alg_{1} \cdot \Alg_{2})(G)) \, ,
\]
where the second transition holds by the definition of algorithm
concatenation.
This means that the assertion holds also when
$i \notin \Agents(I_{1})$
which completes the proof.
\end{proof}

A repeated application of \Lem{}~\ref{lemma:concatenation-truthful} leads to
the following corollary.

\begin{corollary} \label{corollary:concatenation-truthful-algorithms}
Let
$\Alg_{1}, \dots, \Alg_{r - 1}$
be truthful INPA algorithms, let $\Alg_{r}$ be a truthful algorithm, and
assume that
$\Alg_{j} \succeq \Alg_{j + 1}$
for every
$1 \leq j \leq r - 1$.
Then, the concatenated algorithm
$\Alg_{1} \cdots \Alg_{r}$
is truthful.
\end{corollary}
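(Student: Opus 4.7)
The plan is to prove the statement by induction on $r$, with the base case $r = 2$ being exactly Lemma~\ref{lemma:concatenation-truthful}. For the inductive step, the natural way to collapse an $r$-fold concatenation is to view $\Alg_{1} \cdots \Alg_{r}$ as $\left( \Alg_{1} \cdots \Alg_{r - 1} \right) \cdot \Alg_{r}$ and then appeal to Lemma~\ref{lemma:concatenation-truthful} with $\Alg_{1} \cdots \Alg_{r - 1}$ playing the role of the first algorithm. To do this, I need to verify three things about $\Alg_{1} \cdots \Alg_{r - 1}$: it is truthful, it is INPA, and it satisfies $\Alg_{1} \cdots \Alg_{r - 1} \succeq \Alg_{r}$.

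Truthfulness of $\Alg_{1} \cdots \Alg_{r - 1}$ would follow inductively if the assertion is stated with the stronger conclusion ``truthful and INPA'' for the intermediate concatenations. So I would first strengthen the inductive hypothesis accordingly: if $\Alg_{1}, \dots, \Alg_{r - 1}$ are all truthful and INPA and $\Alg_{j} \succeq \Alg_{j + 1}$ for every $1 \leq j \leq r - 2$, then $\Alg_{1} \cdots \Alg_{r - 1}$ is both truthful and INPA. Truthfulness is given by induction (applying Lemma~\ref{lemma:concatenation-truthful} to the step $r - 1$). For the INPA part, I would prove a separate, self-contained lemma stating that the concatenation of two INPA algorithms is INPA. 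The verification here is routine chasing of definitions: if an agent $i$ does not partake in $(\Alg_{1} \cdot \Alg_{2})(G)$, then she does not partake in $\Alg_{1}(G)$, so the INPA property of $\Alg_{1}$ yields $\Alg_{1}(G[V - S]) = \Alg_{1}(G)$ for any $S \subseteq \Agents^{-1}(i)$; this in turn shows that the residual graph passed to $\Alg_{2}$ in $(\Alg_{1} \cdot \Alg_{2})(G[V - S])$ is obtained from the residual graph passed to $\Alg_{2}$ in $(\Alg_{1} \cdot \Alg_{2})(G)$ by removing (a subset of) nodes containing $i$, and since $i$ does not partake in $\Alg_{2}$'s output there either, the INPA property of $\Alg_{2}$ finishes the argument.

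The remaining ingredient is the relation $\Alg_{1} \cdots \Alg_{r - 1} \succeq \Alg_{r}$. This follows from transitivity of $\succeq$: since $\Alg_{j} \succeq \Alg_{j + 1}$ for every $j$, the chain $\Alg_{1} \succeq \Alg_{2} \succeq \cdots \succeq \Alg_{r}$ yields $\Alg_{j} \succeq \Alg_{r}$ for every $j < r$. The output of $\Alg_{1} \cdots \Alg_{r - 1}$ on any $k$-cycle graph is, by the definition of concatenation, a union of ISs each returned by some $\Alg_{j}$, $j < r$, on an induced subgraph (which is itself a $k$-cycle graph). Hence the maximum length of a node appearing in any such output is at most $\max_{j < r}$ of the length-maxima of $\Alg_{j}$, which in turn is at most the length-minimum over outputs of $\Alg_{r}$.

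With these three facts in hand, Lemma~\ref{lemma:concatenation-truthful} applies and delivers truthfulness of $\Alg_{1} \cdots \Alg_{r}$. I expect the mildest obstacle to be bookkeeping in the INPA preservation argument, in particular checking that restricting the ground set and then forming the residual graph for $\Alg_{2}$ commutes (up to the deletion of nodes in $S$) with forming the residual graph first and then restricting, which ultimately boils down to noting that $S$ is disjoint from $\Alg_{1}(G)$ because $i \notin \Agents(\Alg_{1}(G))$. Everything else is a straightforward unrolling of the induction.
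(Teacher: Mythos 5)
Your proof is correct, and in spirit it is the same "repeated application of \Lem{}~\ref{lemma:concatenation-truthful}" that the paper itself invokes without spelling out; the difference is only in how you bracket the induction. The paper's one-line argument is most naturally read as the right-associated induction $\Alg_{1} \cdot (\Alg_{2} \cdots \Alg_{r})$, where the first argument of \Lem{}~\ref{lemma:concatenation-truthful} is always a single algorithm $\Alg_{j}$, so the INPA hypothesis needed there is exactly the one given and no closure property of INPA under concatenation is required. You bracket to the left, $(\Alg_{1} \cdots \Alg_{r-1}) \cdot \Alg_{r}$, which forces you to strengthen the induction hypothesis to ``truthful and INPA'' and to prove separately that the concatenation of two INPA algorithms is INPA; your sketch of that lemma (the key point being that $S$ is disjoint from $\Alg_{1}(G)$ when $i \notin \Agents(\Alg_{1}(G))$, so the two residual graphs coincide) is correct, and it is a reusable fact the paper never states. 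Both routes also need your remaining observation that the outputs of the partial concatenation consist only of nodes produced by the constituent algorithms on induced subgraphs (which are again $k$-cycle graphs), combined with the chained relation $\Alg_{j} \succeq \Alg_{r}$; your appeal to ``transitivity'' of $\succeq$ is fine in every non-degenerate reading (it implicitly uses that each intermediate $\Alg_{j}$ returns a nonempty IS on at least one input, a vacuous-case caveat equally present in the paper's own statement). In short: your argument is complete and a bit heavier than necessary; the right-associated bracketing would shave off the extra INPA-preservation lemma, while your version buys that closure property as a standalone byproduct.
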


\section{The Greedy Approach}
\label{section:greedy}
Before moving to the more involved 
$(k, \lenUtil)$-maxCGIS
algorithm of \Sect{}~\ref{section:non-uniform-algorithm}, let us analyze what
is perhaps the simplest approach for designing efficient truthful
$(k, \lenUtil)$-maxCGIS
algorithms.
Given a $k$-cycle graph
$G = (V, E)$,
the greedy approach iterates over all nodes in $V$ in a predefined order and
adds a node
$v \in V$
to the constructed IS $I$, setting
$I \gets I \cup \{ v \}$,
if $v$ is not adjacent to any node
$v' \in I$
(i.e., $I$ remains an IS in $G$).
It is not difficult to show that a greedy algorithm that prioritizes the nodes
$v \in V$
according to their weight
$\Weight(v) = |\Agents(v)| \cdot \lenUtil(|\Agents(v)|)$
admits an approximation ratio $k$.
However, this algorithm is not necessarily truthful as the objective of the
individual agent
$i \in [n]$
is to minimize the length $|\Agents(v)|$ of (the unique) node
$v \in \Agents^{-1}(i) \cap I$
(recall that the length function $\lenUtil$ is monotonically non-increasing)
which is not necessarily aligned with maximizing
$|\Agents(v)| \cdot \lenUtil(|\Agents(v)|)$.

To overcome this difficulty, we propose a greedy algorithm, denoted by
\GreedyAlg{}, that prioritizes the nodes
$v \in V$
according to their length $|\Agents(v)|$, thus adding the nodes $v$ in
non-decreasing order of their contribution to the utility of the individual
agents included in $\Agents(v)$.
Formally, for
$j = 2, \dots, k$,
let $r_{E}^{j}$ be the improvement rule derived from the expansion
improvement rule $r_{E}$, introduced in \Sect{}~\ref{section:uniform-algorithm},
by augmenting it with the requirement that the nodes $v$ included in any
candidate IS $I$ are of length
$|\Agents(v)| = j$;
let $\GreedyAlg^{j}$ be the local search algorithm characterized by the lone
improvement rule $r_{E}^{j}$.
The greedy algorithm is taken to be the concatenation
$\GreedyAlg = \GreedyAlg^{2} \cdots \GreedyAlg^{k}$.

\begin{observation} \label{observation:greedy-j}
Algorithm $\GreedyAlg^{j}$ is truthful and INPA for each
$2 \leq j \leq k$.
Moreover,
$\GreedyAlg^{j} \succeq \GreedyAlg^{j + 1}$
for each
$2 \leq j \leq k - 1$.
\end{observation}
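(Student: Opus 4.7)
The plan is to reduce everything to the infrastructure already assembled. First I would verify that the restricted expansion rule $r_E^j$, obtained from $r_E$ by insisting that the added node $v$ satisfies $|\Agents(v)| = j$, is still loyal, INPA, and efficient. Loyalty is inherited because $r_E^j$-candidates are still required to satisfy $I' \supset I$. Efficiency follows since restricting the candidate set to length-$j$ nodes only shrinks the search space. For INPA, the argument mirrors the one in \Obs{}~\ref{observation:uniform-expansion-rule}: if $i \notin \Agents(I)$, then removing some $S \subseteq \Agents^{-1}(i)$ can only remove length-$j$ candidate nodes from consideration (without affecting the lexicographic order among survivors), so the lexicographically first candidate either stays the same or disappears exactly as in the unrestricted rule. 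Having established this, \Lem{}~\ref{lemma:local-search-algorithm-inpa-efficient} applied to the singleton list $(r_E^j)$ immediately yields that $\GreedyAlg^j$ is INPA and efficient.

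Next I would upgrade INPA to truthfulness. The key observation is that every node ever added by $r_E^j$ has length exactly $j$, so every IS in the range of $\GreedyAlg^j$ consists solely of length-$j$ nodes. Consequently, for any agent $i$ and any $k$-cycle graph $G$, we have $\utility(i, \GreedyAlg^j(G)) \in \{0, \lenUtil(j)\}$, with the value $\lenUtil(j)$ attained iff $i \in \Agents(\GreedyAlg^j(G))$. This mirrors the $\{0,1\}$-utility situation treated in \Obs{}~\ref{observation:uniform-inpa-implies-truthful}: if $i$ already partakes, her utility is already at the maximum possible value $\lenUtil(j)$ for this algorithm and cannot be increased by hiding nodes; if $i$ does not partake, INPA guarantees $\GreedyAlg^j(G[V-S]) = \GreedyAlg^j(G)$ for every $S \subseteq \Agents^{-1}(i)$, so her utility remains $0$. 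Either way she has no profitable deviation, so $\GreedyAlg^j$ is truthful.

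Finally, the relation $\GreedyAlg^j \succeq \GreedyAlg^{j+1}$ falls out of the same structural observation: every node in any output of $\GreedyAlg^j$ has length exactly $j$, while every node in any output of $\GreedyAlg^{j+1}$ has length exactly $j+1$. Hence
\[
\max_{n, G, v \in \GreedyAlg^j(G)} |\Agents(v)| \,\leq\, j \,\leq\, j+1 \,\leq\, \min_{n, G, v \in \GreedyAlg^{j+1}(G)} |\Agents(v)|,
\]
matching the definition of $\succeq$ (with the usual convention that maxima/minima over empty outputs are $-\infty$/$+\infty$ and thus vacuously cooperate). I do not anticipate any serious technical obstacle here: all the heavy lifting was done in \Sect{}~\ref{section:local-search-algorithm}, and the only subtlety worth flagging is the need to bridge INPA to truthfulness for the non-uniform setting, which works precisely because $\GreedyAlg^j$'s outputs are ``monochromatic'' in node length.
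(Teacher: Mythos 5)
Your proposal is correct and follows essentially the same route as the paper: the paper likewise establishes that $r_{E}^{j}$ inherits loyalty, INPA, and efficiency from $r_{E}$, invokes the local-search infrastructure for INPA, and derives $\GreedyAlg^{j} \succeq \GreedyAlg^{j+1}$ from the fact that the outputs consist only of length-$j$ (resp.\ length-$(j+1)$) nodes. The one point you spell out more explicitly than the paper's one-line proof — that INPA yields truthfulness even under non-uniform $\lenUtil$ because $\GreedyAlg^{j}$'s outputs are length-monochromatic — is exactly the bridging argument the paper itself uses later for $\LocalSearchAlg_{q}^{> \ell^{*}}$, so flagging it is a welcome clarification rather than a deviation.
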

\begin{proof}
The fact that $\GreedyAlg^{j}$ is truthful and INPA follows from the same line
of arguments as the one used in \Sect~\ref{section:uniform-algorithm}
regarding the expansion improvement rule $r_{E}$.
The fact that
$\GreedyAlg^{j} \succeq \GreedyAlg^{j + 1}$
holds as $\GreedyAlg^{j}$ outputs ISs with nodes of length $j$ and 
$\GreedyAlg^{j + 1}$ 
outputs ISs with nodes of length
$j + 1$.
\end{proof}

The truthfulness of \GreedyAlg{} follows immediately from
Corollary~\ref{corollary:concatenation-truthful-algorithms}.

\begin{corollary} \label{corollary:greedy-truthful}
\GreedyAlg{} is truthful.
\end{corollary}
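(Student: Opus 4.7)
The plan is to deduce the truthfulness of $\GreedyAlg$ directly from Corollary~\ref{corollary:concatenation-truthful-algorithms}, exploiting the definition $\GreedyAlg = \GreedyAlg^{2} \cdots \GreedyAlg^{k}$ and the properties of the constituent algorithms $\GreedyAlg^{j}$ already recorded in Observation~\ref{observation:greedy-j}. The heavy lifting has been done in Section~\ref{section:concatenation-truthful-algorithms} (truthfulness of concatenations via the $\succeq$ relation) and in the analysis of the expansion improvement rule from Section~\ref{section:uniform-algorithm} (truthfulness and INPA-ness of each $r_{E}^{j}$), so all that remains is bookkeeping.

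Concretely, I would first unpack Observation~\ref{observation:greedy-j} into the two facts required downstream: (i) for every $2 \leq j \leq k$ the algorithm $\GreedyAlg^{j}$ is truthful and INPA; and (ii) for every $2 \leq j \leq k - 1$ we have $\GreedyAlg^{j} \succeq \GreedyAlg^{j + 1}$. I would then set the stage for applying Corollary~\ref{corollary:concatenation-truthful-algorithms} by aligning indices: in the notation of that corollary, take $r = k - 1$, let $\Alg_{j} = \GreedyAlg^{j + 1}$ for $1 \leq j \leq r$, and identify the sublist $\Alg_{1}, \dots, \Alg_{r-1}$ with $\GreedyAlg^{2}, \dots, \GreedyAlg^{k-1}$, and $\Alg_{r}$ with $\GreedyAlg^{k}$. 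Under this identification, fact~(i) supplies the truthful-INPA hypothesis for $\Alg_{1}, \dots, \Alg_{r-1}$ and (trivially) the truthfulness hypothesis for $\Alg_{r}$, while fact~(ii) supplies the chain of $\succeq$-relations $\Alg_{j} \succeq \Alg_{j + 1}$.

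Applying Corollary~\ref{corollary:concatenation-truthful-algorithms} then yields that the concatenation $\Alg_{1} \cdots \Alg_{r} = \GreedyAlg^{2} \cdots \GreedyAlg^{k}$ is truthful, which is exactly the definition of $\GreedyAlg$. Since the proof is a one-shot invocation of prior results, no real obstacle arises; the only point that warrants a moment of care is the indexing convention in Corollary~\ref{corollary:concatenation-truthful-algorithms} (first $r - 1$ algorithms truthful and INPA, last algorithm just truthful), which is easily verified here since $\GreedyAlg^{k}$ is in fact both truthful and INPA, comfortably satisfying the weaker requirement on the final algorithm of the chain.
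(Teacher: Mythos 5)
Your proposal is correct and follows exactly the paper's route: the paper likewise derives the truthfulness of $\GreedyAlg = \GreedyAlg^{2} \cdots \GreedyAlg^{k}$ as an immediate consequence of Corollary~\ref{corollary:concatenation-truthful-algorithms}, using the facts from Observation~\ref{observation:greedy-j} that each $\GreedyAlg^{j}$ is truthful and INPA and that $\GreedyAlg^{j} \succeq \GreedyAlg^{j+1}$. Your explicit index alignment is just a more detailed spelling-out of the same one-line argument.
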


In the remainder of this section, we show that \GreedyAlg{} approximates the
$(k, \lenUtil)$-maxCGIS
problem within ratio $k$, thus establishing
\Thm{}~\ref{theorem:greedy-mechanism}.
This relies on the following observation that serves us in
\Sect{}~\ref{section:non-uniform-algorithm} as well.

\begin{observation} \label{observation:greedy-mapping}
Consider a $k$-cycle graph
$G = (V, E) \in \CycleGraphSet_{n}^{k}$
and let
$I = \GreedyAlg(G)$.
For every IS $J$ in $G$, there exists a function
$\mu : J \rightarrow I$
such that
(1)
$|\Agents(u)| \geq |\Agents(\mu(u))|$
for every node
$u \in J$;
and
(2)
$|\mu^{-1}(v)| \leq |\Agents(v)|$
for every node
$v \in I$,
where
$\mu^{-1}(v) = \{ u \in J : \mu(u) = v \}$.
\end{observation}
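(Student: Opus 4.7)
The plan is to define the mapping $\mu$ by exploiting the scheduling of $\GreedyAlg = \GreedyAlg^{2} \cdots \GreedyAlg^{k}$ according to non-decreasing node length. Let $I_{j} \subseteq I$ denote the set of length-$j$ nodes added during the stage $\GreedyAlg^{j}$, so that $I = I_{2} \cup I_{3} \cup \cdots \cup I_{k}$. For every $u \in J$, I would set $\mu(u) = u$ if $u \in I$, and otherwise choose $\mu(u)$ to be any node $v \in I$ that is adjacent to $u$ in $G$ and satisfies $|\Agents(v)| \leq |\Agents(u)|$ (existence of such a $v$ is the content of the next paragraph). With this definition, property~(1) holds by construction.

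Now I would justify that the choice of $v$ is always possible for $u \in J - I$ of length $j = |\Agents(u)|$. By the concatenation semantics, the input to $\GreedyAlg^{j}$ is the graph obtained from $G$ by deleting $I_{2} \cup \cdots \cup I_{j - 1}$ together with their $G$-neighbors. If $u$ was already deleted at this point, then $u$ is adjacent in $G$ to some node of $I_{2} \cup \cdots \cup I_{j - 1}$, all of whose nodes have length strictly less than $j$. Otherwise $u$ survives into the input of $\GreedyAlg^{j}$; since the local search $\GreedyAlg^{j}$ halts with $r_{E}^{j}$ returning $\bot$, meaning no length-$j$ node can be added, $u$ must be adjacent to some node in $I_{j}$, a node of length exactly $j$. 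Either way, a valid $v$ exists and we pick one arbitrarily.

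It remains to verify property~(2). Fix $v \in I$. If $v \in J$, then by the IS property of $J$, no other node of $J$ is adjacent to $v$ in $G$, so $\mu^{-1}(v) \subseteq \{v\}$ and the bound is trivial. If $v \notin J$, then every $u \in \mu^{-1}(v)$ satisfies $u \in J$ and $u \in \Neighbors_{G}(v)$, meaning $\Agents(u) \cap \Agents(v) \neq \emptyset$. For any two distinct $u_{1}, u_{2} \in \mu^{-1}(v)$, the agents of $v$ they hit must differ: if some $i \in \Agents(v)$ lay in both $\Agents(u_{1})$ and $\Agents(u_{2})$, then $u_{1}$ and $u_{2}$ would be adjacent in $G$, contradicting the IS property of $J$. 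Hence the map $u \mapsto$ (any shared agent with $v$) embeds $\mu^{-1}(v)$ into $\Agents(v)$, giving $|\mu^{-1}(v)| \leq |\Agents(v)|$.

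The main subtlety is not in the counting argument, which is routine once $\mu$ is in place, but rather in making sure the right $v$ exists for every $u \in J - I$; handling this uniformly across the two cases (blocked before $\GreedyAlg^{|\Agents(u)|}$ runs versus blocked during it) is where the non-decreasing length schedule of $\GreedyAlg$ is essential.
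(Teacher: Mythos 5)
Your proposal is correct and follows essentially the same route as the paper: map each $u \in J \cap I$ to itself and each $u \in J - I$ to a shorter-or-equal-length neighbor in $I$, then bound $|\mu^{-1}(v)|$ by $|\Agents(v)|$ via the fact that distinct nodes of the IS $J$ meeting $v$ must meet it in distinct agents. The only differences are cosmetic: you spell out why the blocking neighbor exists (which the paper attributes to "the design of \GreedyAlg{}") and you reprove the bound $|J \cap \Neighbors_{G}(v)| \leq |\Agents(v)|$ directly instead of citing \Obs{}~\ref{observation:cycle-graph-structure}.
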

\begin{proof}
We construct the function $\mu$ by setting $\mu(u)$ as follows for each node
$u \in J$:
If
$u \in I$,
then set 
$\mu(u) = u$.
Otherwise
($u \notin I$),
the design of \GreedyAlg{} ensures that there exists a node
$v \in \Neighbors_{G}(u) \cap I$
whose length low-bounds that of $u$, i.e.,
$|\Agents(v)| \leq |\Agents(u)|$,
so we set
$\mu(u) = v$.
Condition (1) follows directly by the construction of $\mu$.
To see that condition (2) also holds, we employ
\Obs{}~\ref{observation:cycle-graph-structure} to deduce that since $J$
is an IS, it follows that
$|J \cap \Neighbors_{G}(v)| \leq |\Agents(v)|$
for every node
$v \in V$
and in particular, for every node $v$ in the image of $\mu$.
Condition (2) follows as $\mu$ maps each node
$u \in J$
either to $u$ itself or to a node in $\Neighbors_{G}(u)$.
\end{proof}

To bound the approximation ratio of \GreedyAlg{}, consider a $k$-cycle graph
$G = (V, E) \in \CycleGraphSet_{n}^{k}$
and let
$I = \GreedyAlg(G)$
be the IS output by \GreedyAlg{}.
Let
$O \subseteq V$
be an optimal (maximum weight) IS in $G$ and let
$\mu : O \rightarrow I$
be the function promised in \Obs{}~\ref{observation:greedy-mapping} when
plugging
$J = O$.
Let
$v_{1}, \dots, v_{r} \in I$
be the images of $\mu$ and partition $O$ into the (pairwise disjoint) clusters
$O_{1}, \dots, O_{r}$
defined so that
$O_{h} = \mu^{-1}(v_{h})$
for each
$h \in [r]$.
We argue that
\begin{equation}\textstyle
\label{equation:greedy-proof-cluster-contribution}
\Weight(O_{h})
\, \leq \,
k \cdot \Weight(v_{h})
\end{equation}
for each
$h \in [r]$.
This implies that
\[\textstyle
\Weight(I)
\, \geq \,
\sum_{h = 1}^{r} \Weight(v_{h})
\, \geq \,
\frac{1}{k} \sum_{h = 1}^{r} \Weight(O_{h})
\, = \,
\frac{1}{k} \Weight(O) \, ,
\]
thus establishing the assertion.

To show that \eqref{equation:greedy-proof-cluster-contribution} holds,
consider some
$h \in [r]$
and recall that the construction of $\mu$ guarantees that
$|\Agents(v_{h})| \leq |\Agents(u)|$
for each
$u \in O_{j}$.
Since $\lenUtil$ is non-increasing, it follows that
\[\textstyle
\lenUtil(|\Agents(u)|)
\, \leq \,
\lenUtil(|\Agents(v_{h})|) \, .
\]
Moreover, \Obs{}~\ref{observation:greedy-mapping} ensures that
\[\textstyle
|O_{h}|
\, \leq \,
|\Agents(v_{h})| \, .
\]
Recalling that the length of every node in $V$ is at most $k$, we deduce that
\begin{align*}
\Weight(O_{h})
\, = \, &
\sum_{u \in O_{h}} \Weight(u)
\, = \,
\sum_{u \in O_{h}} |\Agents(u)| \cdot \lenUtil(|\Agents(u)|)
\\
\leq \, &
\sum_{u \in O_{h}} k \cdot \lenUtil(|\Agents(u)|)
\, \leq \,
|\Agents(v_{h})| \cdot k \cdot \lenUtil(|\Agents(v_{h})|)
\, = \,
k \cdot \Weight(v_{h}) \, ,
\end{align*}
thus establishing (\ref{equation:greedy-proof-cluster-contribution}).

\section{Upper Bound for Non-Uniform Length Functions} 
\label{section:non-uniform-algorithm}
Consider the
$(k, \lenUtil)$-maxCGIS
problem for a length bound
$k \geq 3$
and a non-uniform length function $\lenUtil$.
In the current section, we develop an efficient truthful algorithm for this
problem and establish \Thm{}~\ref{theorem:non-uniform-upper-bound} (by virtue
of \Obs{}~\ref{observation:algorithm-implies-mechanism}).
Intuitively, this algorithm operates by first running \GreedyAlg{} until all
remaining nodes are of of ``sufficiently large length'' and then, running
$\LocalSearchAlg_{q}$ on the remaining subgraph, treating it as in the
uniform case.

\paragraph{The $\NonUniformAlg_{q}$ Algorithm.}
Let $\ell^{*}$ be the largest
$2 \leq \ell \leq k - 1$
that satisfies
$\lenUtil(\ell) > \lenUtil(k)$
(notice that $\ell^{*}$ is well defined as $\lenUtil$ is non-uniform).
Let
$\GreedyAlg^{\leq \ell^{*}}$
be the algorithm defined as the concatenation
$\GreedyAlg^{\leq \ell^{*}} = \GreedyAlg^{2} \cdots \GreedyAlg^{\ell^{*}}$.
Analogously to the definition of the improvement rule $r_{E}^{k}$ (see
\Sect{}~\ref{section:greedy}), let
$r_{E}^{> \ell^{*}}$
(resp.,
$r_{q}^{> \ell^{*}}$)
be the improvement rule derived from the expansion improvement rule $r_{E}$
(resp., all-for-$q$ improvement rule $r_{q}$), introduced in
\Sect{}~\ref{section:uniform-algorithm}, by augmenting it with the requirement
that the nodes $v$ included in any candidate IS $I$ are of length
$|\Agents(v)| > \ell^{*}$.
Let 
$\LocalSearchAlg_{q}^{> \ell^{*}}$ 
be the local search algorithm characterized
by the improvement rule list
$(r_{E}^{> \ell^{*}}, r_{q}^{> \ell^{*}})$.
Given a constant parameter
$q \in \Integers_{> 0}$,
our algorithm for the non-uniform case, denoted by $\NonUniformAlg_{q}$, is
defined to be the concatenation of
$\GreedyAlg^{\leq \ell^{*}}$
and $\LocalSearchAlg_{q}^{> \ell^{*}}$, namely,
$\NonUniformAlg_{q}
=
\GreedyAlg^{\leq \ell^{*}} \cdot \LocalSearchAlg_{q}^{> \ell^{*}}$.

\begin{lemma} \label{lemma:non-uniform-algorithm-truthful}
$\NonUniformAlg_{q}$ is efficient and truthful.
\end{lemma}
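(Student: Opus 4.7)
The plan is to establish the lemma by writing $\NonUniformAlg_{q}$ as the concatenation $\GreedyAlg^{2} \cdot \GreedyAlg^{3} \cdots \GreedyAlg^{\ell^{*}} \cdot \LocalSearchAlg_{q}^{> \ell^{*}}$ and then invoking Corollary~\ref{corollary:concatenation-truthful-algorithms}. This reduces the task to three verifications: (i) each $\GreedyAlg^{j}$ is truthful and INPA, (ii) $\LocalSearchAlg_{q}^{> \ell^{*}}$ is truthful, and (iii) the $\succeq$ relation holds along the entire chain. Efficiency will follow by a separate, very short argument, since each constituent is efficient and the concatenation operator performs each call on a subgraph of polynomial size.

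For (i) and for the part of (iii) internal to the greedy block, I would cite Observation~\ref{observation:greedy-j} directly: it already states that each $\GreedyAlg^{j}$ is truthful and INPA, and that $\GreedyAlg^{j} \succeq \GreedyAlg^{j+1}$ for $2 \leq j \leq k - 1$, which in particular covers $2 \leq j \leq \ell^{*} - 1$.

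The main work is (ii), together with the single ``boundary'' $\succeq$ step $\GreedyAlg^{\ell^{*}} \succeq \LocalSearchAlg_{q}^{> \ell^{*}}$. The boundary step is immediate by construction of the improvement rules: every node in an IS output by $\GreedyAlg^{\ell^{*}}$ has length exactly $\ell^{*}$, while every node in an IS output by $\LocalSearchAlg_{q}^{> \ell^{*}}$ has length strictly greater than $\ell^{*}$. For the truthfulness of $\LocalSearchAlg_{q}^{> \ell^{*}}$, I would replay the proof pattern from \Sect{}~\ref{section:uniform-algorithm}: the restricted improvement rules $r_{E}^{> \ell^{*}}$ and $r_{q}^{> \ell^{*}}$ are loyal, INPA, and efficient by exactly the same arguments as Observations~\ref{observation:uniform-expansion-rule} and~\ref{observation:uniform-all-for-q-rule} (the extra length restriction only shrinks the candidate set, preserving loyalty, INPA, and efficiency). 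Then Lemma~\ref{lemma:local-search-algorithm-inpa-efficient} yields that $\LocalSearchAlg_{q}^{> \ell^{*}}$ is INPA and efficient. The key observation that converts INPA into truthfulness here is that, by the maximality of $\ell^{*}$ and the monotonicity of $\lenUtil$, we have $\lenUtil(\ell) = \lenUtil(k)$ for every $\ell^{*} < \ell \leq k$. Consequently, any agent partaking in an IS output by $\LocalSearchAlg_{q}^{> \ell^{*}}$ receives the constant utility $\lenUtil(k)$, so an agent not partaking has utility $0$ and, by INPA, cannot alter the output by hiding nodes — exactly the argument behind Observation~\ref{observation:uniform-inpa-implies-truthful} carried over to this restricted regime.

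The part I expect to require the most care is the subtle interaction in the boundary step: when $\LocalSearchAlg_{q}^{> \ell^{*}}$ is invoked on the residual graph $G[V - (I_{1} \cup \Neighbors_{G}(I_{1}))]$, where $I_{1} = \GreedyAlg^{\leq \ell^{*}}(G)$, its candidate pool consists only of nodes of length $> \ell^{*}$ (since nodes of length $\leq \ell^{*}$ are still present in the residual graph but are rejected by the length-restricted improvement rules). This means that on the residual graph, the algorithm effectively operates under the uniform length function $\lenUtil(k)$ on the relevant nodes, so the truthfulness-via-INPA argument applies cleanly. With (i), (ii), and (iii) in hand, Corollary~\ref{corollary:concatenation-truthful-algorithms} delivers truthfulness of $\NonUniformAlg_{q}$, and efficiency is a short additional remark: each $\GreedyAlg^{j}$ and $\LocalSearchAlg_{q}^{> \ell^{*}}$ runs in polynomial time (by Lemma~\ref{lemma:local-search-algorithm-inpa-efficient}), and the concatenation merely chains a constant number ($k-1$) of such calls.
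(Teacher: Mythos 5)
Your proposal is correct and follows essentially the same route as the paper's proof: decompose $\NonUniformAlg_{q}$ as a concatenation, reuse the arguments for $r_{E}$ and $r_{q}$ verbatim for the length-restricted rules, obtain INPA and efficiency from \Lem{}~\ref{lemma:local-search-algorithm-inpa-efficient}, convert INPA to truthfulness for $\LocalSearchAlg_{q}^{> \ell^{*}}$ via the constancy of $\lenUtil$ on lengths above $\ell^{*}$ (the paper's ``oblivious to $\lenUtil$'' step, i.e.\ \Obs{}~\ref{observation:uniform-inpa-implies-truthful}), and conclude with \Obs{}~\ref{observation:greedy-j} and Corollary~\ref{corollary:concatenation-truthful-algorithms}. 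Your explicit remark that $\lenUtil(\ell) = \lenUtil(k)$ for all $\ell^{*} < \ell \leq k$ and your handling of the residual graph are faithful elaborations of the same argument, not a different proof.
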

\begin{proof}
Notice first that the proofs of \Obs{}\
\ref{observation:uniform-expansion-rule} and
\ref{observation:uniform-all-for-q-rule} can be applied, verbatim, to
$r_{E}^{> \ell^{*}}$
and
$r_{q}^{> \ell^{*}}$,
respectively, concluding that these improvement rules are loyal, INPA, and
efficient (just like their original counterparts).
By \Lem{}~\ref{lemma:local-search-algorithm-inpa-efficient}, we deduce
that
$\LocalSearchAlg_{q}^{> \ell^{*}}$
is INPA and efficient.
The efficiency of $\NonUniformAlg_{q}$ follows as
$\GreedyAlg^{\leq \ell^{*}}$
is clearly efficient as well.

Since algorithm
$\LocalSearchAlg_{q}^{> \ell^{*}}$ 
is designed so that all nodes $v$
included in its output ISs are of length
$|\Agents(v)| > \ell^{*}$,
it follows that
$\LocalSearchAlg_{q}^{> \ell^{*}}$
is oblivious to the length function $\lenUtil$.
In particular, we can employ
\Obs{}~\ref{observation:uniform-inpa-implies-truthful}, whose scope is
restricted (originally) to  the uniform length function, to deduce that the
fact that
$\LocalSearchAlg_{q}^{> \ell^{*}}$ 
is INPA implies its truthfulness also
under non-uniform length functions.

Notice that
$\GreedyAlg^{\leq \ell^{*}} \succeq \LocalSearchAlg_{q}^{> \ell^{*}}$
as 
$\GreedyAlg^{\leq \ell^{*}}$ 
outputs ISs with nodes of length at most
$\ell^{*}$
and $\LocalSearchAlg_{q}^{> \ell^{*}}$ output ISs with nodes of length (strictly) greater that $\ell^{*}$.
Hence, the truthfulness of $\NonUniformAlg_{q}$ follows by
\Obs{}~\ref{observation:greedy-j} and
Corollary~\ref{corollary:concatenation-truthful-algorithms}.
\end{proof}

\subsection{Bounding the Approximation Ratio} 
Our goal in this section is to establish the following theorem that, combined
with \Obs{}~\ref{observation:algorithm-implies-mechanism} and
\Lem{}~\ref{lemma:non-uniform-algorithm-truthful}, yields
\Thm{}~\ref{theorem:non-uniform-upper-bound} by plugging
$q = \lceil 1 / \epsilon \rceil$.

\begin{theorem} \label{theorem:non-uniform-main-algorithm-approximation}
$\NonUniformAlg_{q}$ approximates the
$(k, \lenUtil)$-maxCGIS
problem within ratio
$\max \{ k - 1 + 1 / q, \criticalRatio(\lenUtil) \}$.
\end{theorem}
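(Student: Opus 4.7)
The plan is to fix an input $k$-cycle graph $G = (V, E) \in \CycleGraphSet_{n}^{k}$ and compare $I = \NonUniformAlg_{q}(G)$, written as $I = I_{1} \cup I_{2}$ with $I_{1} = \GreedyAlg^{\leq \ell^{*}}(G)$ and $I_{2} = \LocalSearchAlg_{q}^{> \ell^{*}}(G')$ for $G' = G[V - I_{1} - \Neighbors_{G}(I_{1})]$, against an optimal IS $O$. The first ingredient is a structural observation: because each $\GreedyAlg^{j}$ returns a maximal IS among the length-$j$ nodes of its current subgraph, every node of length at most $\ell^{*}$ in $G$ is either in $I_{1}$ or adjacent in $G$ to some node of $I_{1}$. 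Hence $G'$ contains only nodes of length strictly greater than $\ell^{*}$, and by the maximality built into the definition of $\ell^{*}$, the length function $\lenUtil$ takes the constant value $\lenUtil(k)$ on $\{\ell^{*} + 1, \dots, k\}$, so on $G'$ the weight function is simply $\lenUtil(k)$ times the node length.

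I then partition $O = O_{1} \uplus O_{2}$ with $O_{1} = O \cap (I_{1} \cup \Neighbors_{G}(I_{1}))$ and $O_{2} = O - O_{1}$. Since $O_{2}$ is an IS in $G'$, and on $G'$ weights are a constant scaling of length, the algorithm $\LocalSearchAlg_{q}^{> \ell^{*}}$ acts on $G'$ exactly as $\LocalSearchAlg_{q}$ would on a $k$-maxCGIS instance. Applying \Thm{}~\ref{theorem:uniform-main-algorithm-approximation} (whose analysis is invariant under a uniform scaling of weights) then yields $\Weight(O_{2}) \leq (k - 1 + 1/q)\Weight(I_{2})$.

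The heart of the argument is bounding $\Weight(O_{1})$. I would construct a mapping $\mu : O_{1} \to I_{1}$ that sends each $u \in O \cap I_{1}$ to itself and each $u \in O_{1} - I_{1}$ to some $v \in I_{1} \cap \Neighbors_{G}(u)$ satisfying $|\Agents(v)| \leq |\Agents(u)|$. Existence of such $v$ is immediate when $|\Agents(u)| > \ell^{*}$ since any adjacent $v \in I_{1}$ already has $|\Agents(v)| \leq \ell^{*} < |\Agents(u)|$; for $|\Agents(u)| \leq \ell^{*}$ it follows from tracing the greedy concatenation, where $u$ is either blocked at its own length level by a picked node of equal length or was removed at an earlier level using a strictly shorter blocker. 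For each $v \in I_{1}$ of length $\ell$, if $v \in O$ the IS property of $O$ forces $\mu^{-1}(v) = \{v\}$ and the ratio is $1$; otherwise $|\mu^{-1}(v)| \leq \ell$ by \Obs{}~\ref{observation:cycle-graph-structure}, each $u \in \mu^{-1}(v)$ satisfies $|\Agents(u)| \geq \ell$, and averaging gives
\[
\frac{\Weight(\mu^{-1}(v))}{\Weight(v)} \, \leq \, \max_{u \in \mu^{-1}(v)} \frac{|\Agents(u)| \, \lenUtil(|\Agents(u)|)}{\lenUtil(\ell)}.
\]
A case split finishes this step: if $\lenUtil(\ell) > \lenUtil(|\Agents(u)|)$ the right-hand side is a first-term instance of $\criticalRatio(\lenUtil)$; if $\lenUtil(\ell) = \lenUtil(|\Agents(u)|)$ then the definition of $\ell^{*}$ rules out $|\Agents(u)| > \ell^{*}$ (else $\lenUtil(\ell) = \lenUtil(k)$ while $\ell \leq \ell^{*}$ forces $\lenUtil(\ell) > \lenUtil(k)$), so $|\Agents(u)| \leq \ell^{*} \leq k - 1$.

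Summing the two inequalities produces $\Weight(O) \leq \max\{k - 1, \criticalRatio(\lenUtil)\}\Weight(I_{1}) + (k - 1 + 1/q)\Weight(I_{2}) \leq \max\{k - 1 + 1/q, \criticalRatio(\lenUtil)\}\Weight(I)$, which is the claimed bound. I expect the subtlest step to be the verification of the length-respecting blocker used to define $\mu$ on $O_{1} - I_{1}$: it requires walking through all stages of the greedy concatenation and using that every $\GreedyAlg^{j}$ leaves behind a maximal length-$j$ IS in its current subgraph, and it is this property, together with the precise choice of $\ell^{*}$, that excludes the problematic case $\lenUtil(\ell) = \lenUtil(|\Agents(u)|)$ with $|\Agents(u)| > \ell^{*}$.
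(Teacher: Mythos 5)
Your proposal is correct and follows essentially the same route as the paper's proof: split $V$ into the part $I_{1} \cup \Neighbors_{G}(I_{1})$ handled by the greedy stages and the remainder handled by the local search, bound the greedy part via a length-respecting mapping $\mu$ with $|\mu^{-1}(v)| \leq |\Agents(v)|$ (the paper packages this as \Obs{}~\ref{observation:greedy-mapping}) together with the case split $\lenUtil(\ell) > \lenUtil(\ell')$ versus $\lenUtil(\ell) = \lenUtil(\ell')$ yielding $\max\{k-1,\criticalRatio(\lenUtil)\}$, and bound the remainder by reducing to \Thm{}~\ref{theorem:uniform-main-algorithm-approximation} since $\lenUtil$ is constant on lengths above $\ell^{*}$. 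The only (harmless) difference is that you apply the mapping directly to $O \cap (I_{1} \cup \Neighbors_{G}(I_{1}))$ rather than to an optimal IS of the induced subgraph as the paper does.
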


Consider a $k$-cycle graph
$G = (V, E) \in \CycleGraphSet_{n}^{k}$
input to $\NonUniformAlg_{q}$.
Let
$I = \NonUniformAlg_{q}(G)$
be the IS output by $\NonUniformAlg_{q}$ and let $O$ be an optimal
(maximum weight) IS in $G$.
To avoid cumbersome expressions, we define
\[\textstyle
I_{gr}
\, = \,
\GreedyAlg^{\leq \ell^{*}}(G) , \quad
V_{gr}
\, = \,
I_{gr} \cup \Neighbors_{G}(I_{gr}) , \quad
V_{ls}
\, = \,
V - V_{gr} , \quad \text{and} \quad
I_{ls}
\, = \,
\LocalSearchAlg_{q}^{> \ell^{*}}(G[V_{ls}]) \, ,
\]
observing that
$I = I_{gr} \cup I_{ls}$.
Furthermore, let $O_{gr}$ and $O_{ls}$ be optimal (maximum weight) ISs in
$G[V_{gr}]$ and $G[V_{ls}]$, respectively.
Since
$O \cap V_{gr}$
and
$O \cap V_{ls}$
are ISs in $G[V_{gr}]$ and $G[V_{ls}]$, respectively, it follows that
$\Weight(O)
=
\Weight(O \cap V_{gr})
+
\Weight(O \cap V_{ls})
\leq
\Weight(O_{gr}) + \Weight(O_{ls})$.
We establish \Thm{}~\ref{theorem:non-uniform-main-algorithm-approximation} by
proving \Lem{}\ \ref{lemma:non-uniform-approximation-gr} and
\ref{lemma:non-uniform-approximation-ls}, implying that
\[\textstyle
\Weight(O)
\, \leq \,
\Weight(I_{gr}) \cdot \max \{ k - 1, \criticalRatio(\lenUtil) \}
+
\Weight(I_{ls}) \cdot (k - 1 + 1 / q)
\, \leq \,
\Weight(I) \cdot \max \{ k - 1 + 1 / q, \criticalRatio(\lenUtil) \} \, .
\]

\begin{lemma} \label{lemma:non-uniform-approximation-gr}
$\Weight(O_{gr})
\leq
\Weight(I_{gr}) \cdot \max \{ k - 1, \criticalRatio(\lenUtil) \}$.
\end{lemma}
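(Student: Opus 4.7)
The plan is to adapt the greedy mapping argument of Observation~\ref{observation:greedy-mapping} to the restricted greedy $\GreedyAlg^{\leq \ell^{*}}$ producing $I_{gr}$, and then carefully account for the non-uniform length function $\lenUtil$ when bounding the weight of each cluster.

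First, I would construct a function $\mu : O_{gr} \to I_{gr}$ exactly as in the proof of Observation~\ref{observation:greedy-mapping}: for $u \in O_{gr} \cap I_{gr}$ set $\mu(u) = u$, and otherwise pick some $v \in \Neighbors_{G}(u) \cap I_{gr}$. Such a $v$ exists since $O_{gr} \subseteq V_{gr} = I_{gr} \cup \Neighbors_{G}(I_{gr})$. The length condition $|\Agents(u)| \geq |\Agents(\mu(u))|$ holds: if $|\Agents(u)| \leq \ell^{*}$, then $u$ was considered by $\GreedyAlg^{\leq \ell^{*}}$ and the neighbor that blocked its inclusion has length at most $|\Agents(u)|$; if $|\Agents(u)| > \ell^{*}$, then every $v \in I_{gr}$ has length at most $\ell^{*} < |\Agents(u)|$ by the design of $\GreedyAlg^{\leq \ell^{*}}$. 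The cardinality condition $|\mu^{-1}(v)| \leq |\Agents(v)|$ follows from the IS property of $O_{gr}$ together with Observation~\ref{observation:cycle-graph-structure}, exactly as in the original proof.

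Next, I would partition $O_{gr}$ into clusters $O_{h} = \mu^{-1}(v_{h})$ indexed by $v_{h}$ in the image of $\mu$, aiming to prove the cluster-wise bound
\[
\Weight(O_{h}) \, \leq \, \max\{k - 1, \, \criticalRatio(\lenUtil)\} \cdot \Weight(v_{h}) \, ,
\]
which, summed over $h$, yields the lemma. Letting $\ell_{h} = |\Agents(v_{h})|$, observe that $\ell_{h} \leq \ell^{*}$ and that $|\Agents(u)| \geq \ell_{h}$ for every $u \in O_{h}$. The pointwise claim I would establish is that any such $u$ of length $\ell_{u}$ satisfies $\ell_{u} \lenUtil(\ell_{u}) \leq \max\{k - 1, \criticalRatio(\lenUtil)\} \cdot \lenUtil(\ell_{h})$. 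Splitting into sub-cases based on whether $\lenUtil(\ell_{u}) < \lenUtil(\ell_{h})$ or $\lenUtil(\ell_{u}) = \lenUtil(\ell_{h})$: in the former, $\ell_{u} > \ell_{h}$ and the definition of $\criticalRatio(\lenUtil)$ yields $\ell_{u} \lenUtil(\ell_{u}) / \lenUtil(\ell_{h}) \leq \criticalRatio(\lenUtil)$; in the latter, $\ell_{h} \leq \ell^{*}$ forces $\lenUtil(\ell_{h}) > \lenUtil(k)$, which forces $\ell_{u} < k$, so $\ell_{u} \lenUtil(\ell_{u}) / \lenUtil(\ell_{h}) = \ell_{u} \leq k - 1$. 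Summing over $u \in O_{h}$ and invoking $|O_{h}| \leq \ell_{h}$ delivers the cluster bound.

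The main obstacle is handling the sub-case $\lenUtil(\ell_{u}) = \lenUtil(\ell_{h})$ with $\ell_{u} > \ell_{h}$, which is not directly covered by the definition of $\criticalRatio(\lenUtil)$. The key insight enabling the $k - 1$ bound there is that $\ell^{*}$ is chosen as the largest index with $\lenUtil(\ell^{*}) > \lenUtil(k)$, so $\lenUtil(\ell_{h}) > \lenUtil(k)$ whenever $\ell_{h} \leq \ell^{*}$; this strict inequality rules out $\ell_{u} = k$ and confines $\ell_{u}$ to at most $k - 1$, so the ``missing'' $\criticalRatio(\lenUtil)$ inequality is replaced by the direct combinatorial bound $\ell_{u} \leq k - 1$.
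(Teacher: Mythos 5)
Your proposal is correct and follows essentially the same route as the paper: you build the greedy mapping $\mu : O_{gr} \rightarrow I_{gr}$ of \Obs{}~\ref{observation:greedy-mapping} (re-deriving its two properties directly for $\GreedyAlg^{\leq \ell^{*}}$, where the paper instead notes that its invocation on $G$ coincides with \GreedyAlg{} on $G[V_{gr}]$ and cites the observation), partition $O_{gr}$ into the clusters $\mu^{-1}(v_{h})$, use $|O_{h}| \leq |\Agents(v_{h})|$, and split on whether $\lenUtil$ strictly drops between the two lengths, obtaining $\criticalRatio(\lenUtil)$ in the strict case and $k - 1$ in the equal case via the defining property of $\ell^{*}$. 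The only cosmetic point is that in your construction of $\mu$ you should pick specifically the blocking (or a minimum-length) neighbor in $I_{gr}$, as your own justification of the length condition already indicates, rather than an arbitrary one.
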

\begin{proof}
By the design of
$\GreedyAlg^{\leq \ell^{*}}$,
we know that all nodes in
$V_{ls} = V - V_{gr}$
are of length $ > \ell^{*}$.
Therefore, the invocation of
$\GreedyAlg^{\leq \ell^{*}}$
on $G$ is identical to the invocation of \GreedyAlg{} on $G[V_{gr}]$.
Employing \Obs{}~\ref{observation:greedy-mapping}
with
$J = O_{gr}$,
we conclude that there exists a function
$\mu : O_{gr} \rightarrow I_{gr}$
such that
(1)
$|\Agents(u)| \geq |\Agents(\mu(u))|$
for every node
$u \in O_{gr}$;
and
(2)
$|\mu^{-1}(v)| \leq |\Agents(v)|$
for every node
$v \in I_{gr}$.

Let
$v_{1}, \dots, v_{r} \in I_{gr}$
be the images of $\mu$ and partition $O_{gr}$ into the (pairwise disjoint)
clusters
$O_{1}, \dots, O_{r}$
defined so that
$O_{j} = \mu^{-1}(v_{j})$
for each
$j \in [r]$.
We argue that
\begin{equation}\textstyle
\label{equation:non-uniform-gr-cluster-contribution}
\Weight(O_{j})
\, \leq \,
\Weight(v_{j}) \cdot \max \left\{ k - 1, \criticalRatio(\lenUtil) \right\}
\end{equation}
for each
$j \in [r]$.
This implies that
\[\textstyle
\Weight(O_{gr})
\, = \, 
\sum_{j = 1}^{r} \Weight(O_{j})
\, \leq \,
\max \left\{ k - 1, \criticalRatio(\lenUtil) \right\} \cdot 
\sum_{j = 1}^{r} \Weight(v_{j})
\, \leq \,
\max \left\{k - 1, \criticalRatio(\lenUtil) \right\} \cdot \Weight(I_{gr}) \, ,
\]
thus establishing the assertion.

To show that \eqref{equation:non-uniform-gr-cluster-contribution} holds,
consider some 
$j \in [r]$
and recall that 
$|O_{j}| \leq |\Agents(v_{j})|$, 
hence
\begin{align*}
\Weight(O_{j})
\, = \, &
\sum_{u \in O_{j}} |\Agents(u)| \cdot \lenUtil(|\Agents(u)|)
\\
\leq \, &
|O_{j}| \cdot 
\max_{u \in O_{j}} \{ |\Agents(u)| \cdot \lenUtil(|\Agents(u)|) \}
\, \leq \,
|\Agents(v_{j})|
\cdot
\max_{u \in O_{j}} \{ |\Agents(u)| \cdot \lenUtil(|\Agents(u)|) \} \, .
\end{align*}
The design of
$\GreedyAlg^{\leq \ell^{*}}$ 
ensures that 
$2 \leq |\Agents(v_{j})| \leq \ell^{*}$
and since we know that
$|\Agents(v_{j})| \leq |\Agents(u)| \leq k$
for every
$u \in O_{j}$,
we deduce that
\begin{align*}
w(O_{j})
\, \leq \, &
\frac{\Weight(v_{j})}{\lenUtil(|\Agents(v_{j})|)}
\cdot
\max_{|\Agents(v_{j})| \leq \ell' \leq k} \ell' \cdot \lenUtil(\ell')
\, \leq \,
\Weight(v_{j})
\cdot
\max_{2 \leq \ell \leq \ell^{*}} \max_{\ell \leq \ell' \leq k}
\frac{\ell' \cdot \lenUtil(\ell')}{\lenUtil(\ell)} \\
= \, &
\Weight(v_{j})
\cdot
\max \left\{
\max_{2 \leq \ell \leq \ell^{*}} \frac{\ell \cdot \lenUtil(\ell)}{\lenUtil(\ell)}
\, ,
\max_{2 \leq \ell < \ell' \leq k%
\, : \, \lenUtil(\ell) > \lenUtil(\ell')%
}
\frac{\ell' \cdot \lenUtil(\ell')}{\lenUtil(\ell)}
\right\} \, ,
\end{align*}
where the last transition holds by separating the case that
$\lenUtil(\ell) = \lenUtil(\ell')$ 
from its complement.
Inequality~\eqref{equation:non-uniform-gr-cluster-contribution} follows as
$\max_{2 \leq \ell \leq \ell^{*}} \frac{\ell \cdot \lenUtil(\ell)}{\lenUtil(\ell)}
=
\ell^{*}
\leq
k - 1$
and
$\max_{2 \leq \ell < \ell' \leq k \, : \, \lenUtil(\ell) > \lenUtil(\ell')}
\frac{\ell' \cdot \lenUtil(\ell')}{\lenUtil(\ell)}
\leq
\criticalRatio(\lenUtil)$.
\end{proof}

\begin{lemma} \label{lemma:non-uniform-approximation-ls}
$\Weight(O_{ls})
\leq
\Weight(I_{ls}) \cdot (k - 1 + 1 / q)$.
\end{lemma}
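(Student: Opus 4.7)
The plan is to reduce the statement to Theorem~\ref{theorem:uniform-main-algorithm-approximation} by showing that on $V_{ls}$ the problem degenerates to the uniform case. First, I would argue that every node $v \in V_{ls}$ has length $|\Agents(v)| > \ell^{*}$. Indeed, by the design of the concatenated algorithm $\GreedyAlg^{\leq \ell^{*}} = \GreedyAlg^{2} \cdots \GreedyAlg^{\ell^{*}}$ and the expansion-based improvement rule $r_E^{j}$, each stage $\GreedyAlg^{j}$ runs until no length-$j$ node can be added to its current IS; hence after removing its output together with its neighbors from the graph, any surviving length-$j$ node together with that IS would form a strictly larger length-$j$ IS, contradicting termination. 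By induction over $j = 2, \dots, \ell^{*}$, the residual graph $G[V_{ls}]$ contains no node of length $\leq \ell^{*}$.

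Next, I would use the definition of $\ell^{*}$ (the largest $\ell \leq k-1$ with $\lenUtil(\ell) > \lenUtil(k)$) together with the monotonicity of $\lenUtil$ to conclude that $\lenUtil(\ell) = \lenUtil(k)$ for every $\ell^{*} < \ell \leq k$: monotonicity gives $\lenUtil(\ell) \geq \lenUtil(k)$, and the maximality of $\ell^{*}$ gives $\lenUtil(\ell) \leq \lenUtil(k)$. Combined with the previous paragraph, this yields $\Weight(v) = |\Agents(v)| \cdot \lenUtil(k)$ for every $v \in V_{ls}$, so on $G[V_{ls}]$ the weight function is a uniform rescaling of the length function by the positive constant $\lenUtil(k)$.

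I would then observe that $G[V_{ls}]$ is itself a $k$-cycle graph (the defining edge condition $\{u, v\} \in E \iff \Agents(u) \cap \Agents(v) \neq \emptyset$ is preserved under taking induced subgraphs), and that on such input the improvement rules $r_E^{> \ell^{*}}$ and $r_q^{> \ell^{*}}$ coincide with their unrestricted counterparts $r_E$ and $r_q$, since every candidate IS in $G[V_{ls}]$ automatically satisfies the length-$> \ell^{*}$ constraint. Hence $I_{ls} = \LocalSearchAlg_{q}^{> \ell^{*}}(G[V_{ls}]) = \LocalSearchAlg_{q}(G[V_{ls}])$.

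Finally, because the weight function on $G[V_{ls}]$ equals $\lenUtil(k)$ times the length, both the execution of $\LocalSearchAlg_{q}$ and the identity of a maximum-weight IS $O_{ls}$ are invariant under dividing all weights by the positive constant $\lenUtil(k)$. This exhibits the task on $G[V_{ls}]$ as (an instance isomorphic to) the $k$-maxCGIS problem, to which Theorem~\ref{theorem:uniform-main-algorithm-approximation} directly applies, yielding $\Weight(O_{ls}) \leq (k - 1 + 1/q) \cdot \Weight(I_{ls})$. The only subtle step I expect is the bookkeeping in the first paragraph verifying that the concatenation $\GreedyAlg^{\leq \ell^{*}}$ indeed purges the graph of all short nodes; the rest of the proof is a clean reduction to the already-established uniform upper bound.
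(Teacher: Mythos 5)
Your proof is correct and follows essentially the same route as the paper's: observe that $\GreedyAlg^{\leq \ell^{*}}$ purges all nodes of length at most $\ell^{*}$ so that $\LocalSearchAlg_{q}^{> \ell^{*}}$ coincides with $\LocalSearchAlg_{q}$ on $G[V_{ls}]$, note that $\lenUtil$ is constant (equal to $\lenUtil(k)$) on the surviving lengths so the instance is equivalent to the $k$-maxCGIS problem, and invoke \Thm{}~\ref{theorem:uniform-main-algorithm-approximation}. The only difference is that you spell out the maximality/termination argument and the scaling invariance, which the paper leaves implicit.
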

\begin{proof}
Recall that the design of
$\GreedyAlg^{\leq \ell^{*}}$
ensures that all nodes in $V_{ls}$ are of length larger than $\ell^{*}$.
Therefore, the invocation of 
$\LocalSearchAlg_{q}^{> \ell^{*}}$ 
on $G[V_{ls}]$ is
identical to the invocation of $\LocalSearchAlg_{q}$ on $G[V_{ls}]$.
Moreover, the length function $\lenUtil$ plays no role in the
$(k, \lenUtil)$-maxCGIS
problem on $G[V_{ls}]$ as
$\lenUtil(|\Agents(u)|) = \lenUtil(|\Agents(v)|)$
for every two nodes
$u, v \in V_{ls}$,
hence the
$(k, \lenUtil)$-maxCGIS
problem is equivalent to the $k$-maxCGIS problem on $G[V_{ls}]$.
The assertion follows by
\Thm{}~\ref{theorem:uniform-main-algorithm-approximation}.
\end{proof}

\section{The Analysis of $\LocalSearchAlg_{q}$ is Nearly Tight}
\label{section:main-algorithm-bad-example}
Fix
$k = 3$
and any parameter
$q \in \Integers_{> 0}$.
In this section, we present an example for a $k$-cycle graph
$G_{\text{bad}} = (V, E) \in \CycleGraphSet_{n}^{k}$
with an optimal IS
$O \subset V$
such that
\[
\frac{\Weight(O)}{\Weight(\LocalSearchAlg_{q}(G_{\text{bad}}))}
\, = \,
k - 1 + \frac{1}{q + 1}
\, ,
\]
thus demonstrating that the analysis in
\Sect{}~\ref{section:uniform-approximation-ratio} is nearly tight.
The $k$-cycle graph $G_{\text{bad}}$ is illustrated in
Figure~\ref{figure:main-algorithm-bad-example}.
(Alternatively, the graph illustrated in
Figure~\ref{figure:main-algorithm-bad-example} can be a connected component in
a larger $k$-cycle graph, allowing $n$ to become arbitrarily large with
respect to $q$.)
The IS
$I \subset V$
that consists of the blue (solid) rectangles admits a weight of
$\Weight(I) = 3 (q + 1)$.
A quick inspection reveals that
$r_{E}(G_{\text{bad}}, I) = r_{q}(G_{\text{bad}}, I) = \bot$,
where
$r_{E}$ and $r_{q}$ are the expansion and all-for-$q$ improvement rules,
respectively.
Therefore, with the right assignment of agent identities, we can ensure that
$I = \LocalSearchAlg_{q}(G_{\text{bad}})$.
On the other hand, the IS
$O \subset V$
that consists of the red (dashed) rectangles admits a weight of
$\Weight(O) = 3 (2 q + 3)$,
hence
$\frac{\Weight(O)}{\Weight(\LocalSearchAlg_{q}(G_{\text{bad}}))}
=
k - 1 + \frac{1}{q + 1}$.

\begin{figure}
\centering
\includegraphics[scale=0.85]{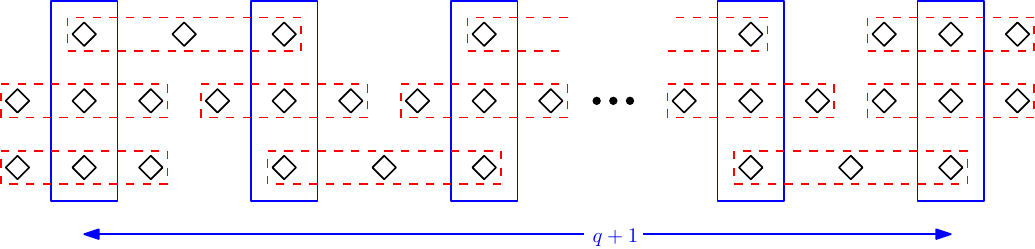}
\caption{\label{figure:main-algorithm-bad-example}%
The $k$-cycle graph
$G_{\text{bad}} = (V, E)$
whose nodes are depicted by the
$q + 1$
blue (solid) rectangles and the
$2 q + 3$
red (dashed) rectangles, defined over a set of
$n = 6 q + 9 $
agents depicted by the black diamond shapes, so that the diamond shape
corresponding to agent
$i \in [n]$
is contained in the rectangle corresponding to node
$v \in V$
if (and only if)
$i \in \Agents(v)$.}
\end{figure}

The $k$-cycle graph $G_{\text{bad}}$ also demonstrates that the IS constructed
by algorithm $\LocalSearchAlg_{q}$ may be far from maximal in terms of the set
of partaking agents (see footnote~\ref{footnote:agent-maximal-exchanges}).

\section{Lifting the Efficiency Requirement}
\label{section:no-running-time-limit} 
Consider the
$(k, \lenUtil)$-maxCGIS
problem for a length bound
$k \geq 3$
and a non-uniform length function $\lenUtil$.
In this section, we abandon the efficiency requirement of 
$(k, \lenUtil)$-maxCGIS 
algorithms as we turn to develop a (non-efficient) truthful algorithm for this problem.
The approximation ratio guaranty of this algorithm, combined with \Obs{}~\ref{observation:algorithm-implies-mechanism}, establishes \Thm{}~\ref{theorem:no-time-limit-upper-bound-non-uniform}.
To this end, 
for each
$2 \leq \ell \leq k$, 
let
$\Opt^{\ell}$
denote the algorithm that given an input $k$-cycle graph
$G = (V, E) \in \CycleGraphSet_{n}^{k}$,
$n \in \Integers_{> 0}$,
constructs the set 
$V^{\ell} = \{v \in V \mid \lenUtil(|\Agents(v)|) = \lenUtil(\ell)\}$,
and outputs the lexicographically first optimal (maximum weight) IS in the subgraph
$G[V^{\ell}]$.
Let
$L = \{ 2 \leq \ell \leq k - 1 \mid \lenUtil(\ell) > \lenUtil(\ell + 1) \} \cup \{ k \}$
be the set of lengths just before a ``tumble" in the length function $\lenUtil$,
and denote it explicitly by
$L = \{\ell_{1}, \dots, \ell_{|L|}\}$.
Our algorithm for the 
$(k, \lenUtil)$-maxCGIS
problem, 
denoted by $\IterativeOpt$ (abbreviation of iterative optimal), is defined to be the concatenation of the mechanisms $\Opt^{\ell}$, $\ell \in L$, 
namely,
$\IterativeOpt
=
\Opt^{\ell_{1}} \cdots \Opt^{\ell_{|L|}}$.

\begin{lemma}
\IterativeOpt{} is truthful.
\end{lemma}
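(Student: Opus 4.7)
The plan is to deduce the lemma from Corollary~\ref{corollary:concatenation-truthful-algorithms} applied to $\IterativeOpt = \Opt^{\ell_1} \cdots \Opt^{\ell_{|L|}}$. To invoke the corollary, I need to show (a) every $\Opt^{\ell_j}$ is truthful and INPA, and (b) $\Opt^{\ell_j} \succeq \Opt^{\ell_{j+1}}$ for every $1 \leq j \leq |L| - 1$.

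For (a), I would first establish INPA. Fix a $k$-cycle graph $G = (V, E)$, an agent $i$ with $i \notin \Agents(\Opt^{\ell}(G))$, and a node subset $S \subseteq \Agents^{-1}(i)$. Setting $O = \Opt^{\ell}(G)$ and $V^{\ell} = \{v \in V \mid \lenUtil(|\Agents(v)|) = \lenUtil(\ell)\}$, I would observe that because $i \notin \Agents(O)$, no node of $O$ contains $i$, so $O \cap S = \emptyset$ and hence $O$ is still an IS of the same weight in the restricted subgraph $G[V^{\ell} - S]$. Since passing to this restricted graph can only shrink the collection of candidate ISs, no IS in it can outweigh $O$; and any lex-smaller optimum there would remain a lex-smaller optimum in $G[V^{\ell}]$, contradicting $O$ being lex-first optimal. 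Hence $\Opt^{\ell}(G[V - S]) = O$, confirming INPA. Truthfulness of $\Opt^{\ell}$ then follows by the same argument as in the proof of Observation~\ref{observation:uniform-inpa-implies-truthful}: since every node of $V^{\ell}$ yields utility exactly $\lenUtil(\ell)$, an agent who does not partake in $\Opt^{\ell}(G)$ cannot alter its outcome (by INPA), while an agent who does partake already attains the maximum utility $\lenUtil(\ell)$ that any IS output by $\Opt^{\ell}$ can provide.

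For (b), I would exploit the plateau structure of $\lenUtil$. Since $\lenUtil$ is non-increasing on $\{2, \dots, k\}$, its level sets form contiguous intervals, and the elements $\ell_1 < \cdots < \ell_{|L|} = k$ of $L$ are exactly their right endpoints. Thus the intervals $[2, \ell_1], [\ell_1 + 1, \ell_2], \dots, [\ell_{|L|-1} + 1, k]$ partition $\{2, \dots, k\}$, and $V^{\ell_j}$ contains precisely the nodes whose length lies in the $j$th interval. It follows that the maximum length of a node appearing in any output of $\Opt^{\ell_j}$ is at most $\ell_j$, whereas the minimum length of a node appearing in any output of $\Opt^{\ell_{j+1}}$ is at least $\ell_j + 1$, which yields $\Opt^{\ell_j} \succeq \Opt^{\ell_{j+1}}$. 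Combining (a) and (b) via Corollary~\ref{corollary:concatenation-truthful-algorithms} completes the proof.

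The most delicate point, which I expect to be the main obstacle, is verifying that the lex-first tie-breaking survives node deletion. This hinges on the hypothesis $i \notin \Agents(O)$, which forces $O \cap S = \emptyset$ and thereby guarantees that $O$ itself persists as a candidate in the shrunken graph; without this, hidden nodes could in principle shift the optimum to a different IS and break the INPA bookkeeping on which the concatenation argument depends.
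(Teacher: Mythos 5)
Your proposal is correct and follows essentially the same route as the paper: establish that each $\Opt^{\ell}$ is INPA (removal of nodes of a non-partaking agent cannot conceal or outweigh the lex-first optimal IS in $G[V^{\ell}]$), deduce its truthfulness from the fact that all nodes it outputs carry the same utility value $\lenUtil(\ell)$ (the paper's ``obliviousness'' appeal to \Obs{}~\ref{observation:uniform-inpa-implies-truthful}), note $\Opt^{\ell_{j}} \succeq \Opt^{\ell_{j+1}}$ from the plateau structure of $\lenUtil$, and conclude via Corollary~\ref{corollary:concatenation-truthful-algorithms}. Your write-up merely fills in details (the lex-first tie-breaking under node deletion and the interval structure of the level sets) that the paper leaves implicit.
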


\begin{proof}
Notice first that $\Opt^{\ell}$ is INPA for all 
$\ell \in L$,
as if
$i \notin \Opt^{\ell}(I)$,
then reducing $V$ to
$V - S$
for some node subset
$S \subseteq \Agents^{-1}(i)$
can only decrease the collection of optimal ISs in the subgraph induced by $V_{\ell}$ and cannot conceal
$\Opt^{\ell}(I)$.

Since algorithm $\Opt^{\ell}$ is designed so that all nodes $v$
included in its output ISs admit
$\lenUtil(|\Agents(v)|) = \lenUtil(\ell)$,
it follows that
$\Opt^{\ell}$ is oblivious to the length function $\lenUtil$.
In particular, we can employ
\Obs{}~\ref{observation:uniform-inpa-implies-truthful}, whose scope is
restricted (originally) to uniform length functions, to deduce that the fact
that $\Opt^{\ell}$ is INPA implies its truthfulness also
under non-uniform length functions.

Finally, the truthfulness of \IterativeOpt{} follows by Corollary~\ref{corollary:concatenation-truthful-algorithms}, as each $\Opt^{\ell}$ is INPA and truthful and it is clear that
$\Opt^{\ell_{s}} \succeq \Opt^{\ell_{s + 1}}$
for all
$1 \leq s \leq |L| - 1$.
\end{proof}

\subsection{Bounding the Approximation Ratio} 
Our goal in this section is to establish the following theorem that, combined
with \Obs{}~\ref{observation:algorithm-implies-mechanism},
yields
\Thm{}~\ref{theorem:no-time-limit-upper-bound-non-uniform}.

\begin{theorem} \label{theorem:iterative-opt-approximation}
The approximation ratio of \IterativeOpt{} is at most
$\criticalRatio(\lenUtil)$.
\end{theorem}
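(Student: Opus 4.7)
My strategy is a two-part charging argument combining the per-plateau optimality of $\Opt^{\ell_s}$ with a structural bound on the $O$-nodes blocked by earlier stages.

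Let $I = \IterativeOpt(G)$ and let $O$ be an optimal IS in $G$. For each $s \in [|L|]$, set $I_s = I \cap V^{\ell_s}$, $O^s = O \cap V^{\ell_s}$, and let $G_s = G[V - \bigcup_{t < s}(I_t \cup \Neighbors_G(I_t))]$ be the subgraph remaining at the start of stage $s$; by construction, $I_s$ is the lexicographically first optimal IS in $G_s[V^{\ell_s}]$. Partition $O^s = U_s \sqcup B_s$ with $U_s = O^s \cap V(G_s)$ (unblocked) and $B_s = O^s \setminus V(G_s)$ (blocked by some earlier stage). Since $U_s$ is an IS in $G_s[V^{\ell_s}]$, optimality gives $\Weight(U_s) \leq \Weight(I_s)$, hence $\sum_s \Weight(U_s) \leq \Weight(I)$ and $\Weight(O) \leq \Weight(I) + \sum_s \Weight(B_s)$. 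To bound the second sum, I would fix a map $\phi$ sending each $o \in B_s$ to one of its $G$-neighbors in $\bigcup_{t < s} I_t$. For every $v \in I_t$ with length $\ell = |\Agents(v)|$, the preimage $\phi^{-1}(v) \subseteq \Neighbors_G(v) \cap O$ consists entirely of strictly later-plateau nodes, each satisfying $\lenUtil(|\Agents(o)|) < \lenUtil(\ell)$. By \Obs{}~\ref{observation:cycle-graph-structure} combined with the fact that $O$ is an IS, the sets $\Agents(o) \cap \Agents(v)$ are pairwise disjoint and nonempty as $o$ ranges over $\phi^{-1}(v)$, so $\sum_{o \in \phi^{-1}(v)} |\Agents(o) \cap \Agents(v)| \leq \ell$ and a direct calculation yields $\Weight(\phi^{-1}(v)) \leq \ell \cdot \max_{\ell' : \lenUtil(\ell') < \lenUtil(\ell)} \ell' \lenUtil(\ell')$, matching the first-term contribution of $\criticalRatio(\lenUtil)$'s inner max times $\Weight(v)$.

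The main obstacle is that naively combining the two bounds yields a ratio of $1 + \rho_1$, where $\rho_1 = \max_{\ell, \ell' : \lenUtil(\ell') < \lenUtil(\ell)} \ell' \lenUtil(\ell') / \lenUtil(\ell)$ is only the first-term portion of $\criticalRatio(\lenUtil)$; this is strictly weaker than $\criticalRatio(\lenUtil)$ whenever the second term $\tfrac{\ell-1}{\ell}\cdot\tfrac{\ell'\lenUtil(\ell')}{\lenUtil(\ell)} + 1$ dominates. To recover the tight bound one must avoid double-counting: the optimality bound at plateau $t$ and the full-$\ell$ blocking budget at a single $v \in I_t$ cannot simultaneously be tight. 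If $v \in I_t \setminus O$ has a same-plateau $O$-neighbor $o^* \in U_t \setminus I$, then $o^*$ consumes at least one agent of $\Agents(v)$, dropping $v$'s cross-plateau blocking budget from $\ell$ to $\ell - 1$ and producing the $\tfrac{\ell-1}{\ell}$ factor of the second term; conversely, if $v$ has no same-plateau $O$-neighbor then an exchange argument using optimality of $I_t$ in $G_t[V^{\ell_t}]$ yields $\Weight(o^*) \leq \Weight(v)$ for any properly ``matched'' competitor $o^*$, providing the slack that pays for the ``$+1$''. I expect formalizing this trade-off to be the principal technical hurdle---most naturally via a Hall-type bipartite matching between the nodes of $I_t$ and their same-plateau $O$-neighbors in $U_t \setminus I$, guaranteed to exist by the optimality/maximality of $I_t$---so that each $v$ is charged exactly one of the two cases of $\criticalRatio(\lenUtil)$'s inner maximum. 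Summing the resulting per-$v$ bound $\max\{M_v,\, \tfrac{\ell-1}{\ell} M_v + 1\} \cdot \Weight(v) \leq \criticalRatio(\lenUtil) \cdot \Weight(v)$, where $M_v = \max_{\ell' : \lenUtil(\ell') < \lenUtil(\ell)} \ell' \lenUtil(\ell') / \lenUtil(\ell)$, then yields $\Weight(O) \leq \criticalRatio(\lenUtil) \cdot \Weight(I)$, establishing the theorem.
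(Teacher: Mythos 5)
Your first paragraph is sound, but it only delivers the ratio $1 + \criticalRatio_{1}(\lenUtil)$, where $\criticalRatio_{1}(\lenUtil) = \max_{2 \leq \ell < \ell' \leq k \,:\, \lenUtil(\ell) > \lenUtil(\ell')} \ell'\lenUtil(\ell')/\lenUtil(\ell)$ — as you yourself note, this is strictly weaker than $\criticalRatio(\lenUtil)$. The entire content of the theorem lies in closing that gap, and this is exactly the step you leave as an expectation (``I expect formalizing this trade-off to be the principal technical hurdle''). Moreover, the mechanism you sketch for closing it does not work as stated. You want a Hall-type matching assigning to each $v \in I_{t}$ a same-plateau $O$-neighbor $o^{*}$ together with a per-pair comparison $\Weight(o^{*}) \leq \Weight(v)$ obtained ``by an exchange argument using optimality of $I_{t}$.'' But optimality of $I_{t}$ in $G_{t}[V^{\ell_{t}}]$ is only an \emph{aggregate} weight statement, and a plateau $L_{\ell}$ may contain several lengths: for instance, with $\lenUtil(2) = \lenUtil(3)$, a length-$3$ node of $O$ adjacent to two length-$2$ nodes of $I_{t}$ (and to nothing else in the plateau subgraph) is fully consistent with optimality of $I_{t}$, yet any injective assignment charges a node of weight $3\lenUtil(2)$ to a node of weight $2\lenUtil(2)$, so the per-pair domination you rely on fails; the exchange argument only rules out the case of a single lighter neighbor. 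Consequently neither the matching with the required properties nor your final per-node bound $\max\{M_{v}, \tfrac{\ell-1}{\ell}M_{v} + 1\}\Weight(v)$ (which, with $\ell = |\Agents(v)|$, is even finer than what the paper establishes) is justified.

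The paper closes this same gap by an aggregate, per-plateau accounting rather than a per-node matching. It covers $V$ by the sets $I_{\ell} \cup N_{\ell}$, compares an optimal IS of each piece against $I_{\ell}$, and combines three linear inequalities — optimality of $I_{\ell}$ among same-plateau ISs ($\sum_{j} j|O_{\ell}^{j}| \leq \sum_{j} j|I_{\ell}^{j}|$), the shared blocking budget $|O_{\ell}^{>\ell}| \leq \sum_{j} j|I_{\ell}^{j}| - \sum_{j}|O_{\ell}^{j}|$ (your observation that every same-plateau $O$-node consumes at least one unit of budget, via \Obs{}~\ref{observation:cycle-graph-structure}), and $j \leq \ell$ for $j \in L_{\ell}$ — in a two-case analysis according to whether $\ell\lenUtil(\ell) \leq \max_{\ell < \ell' \leq k}\ell'\lenUtil(\ell')$ (\Lem{}~\ref{lemma:non-efficient-component-approximation}). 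This is precisely the extreme-point analysis of the trade-off you describe, but carried out on plateau-level sums, which sidesteps the per-node weight comparisons your matching would need. To salvage your charging scheme you would have to make it fractional (splitting a same-plateau $O$-node's weight across all the $I_{t}$-agents it consumes), at which point you are re-deriving the paper's aggregate inequalities; as written, the proof has a genuine gap at its decisive step.
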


Consider a $k$-cycle graph
$G = (V, E) \in \CycleGraphSet_{n}^{k}$,
$n \in \Integers_{> 0}$,
input to \IterativeOpt{}.
Let 
$I = \IterativeOpt(G)$
be the IS output by \IterativeOpt{} and let $O$ be an optimal (maximum weight)
IS in $G$.
To avoid cumbersome expressions, we define 
$H_{\ell_{1}} = G$ 
and
\[
I_{\ell_{s}} = \Opt^{\ell_{s}} (H_{\ell_{s}})
\, , \quad
N_{\ell_{s}} = N_{H_{\ell_{s}}}(I_{\ell_{s}})
\, , \quad
H_{\ell_{s + 1}} = 
G \left[ V - \bigcup_{i = 1}^{s} (I_{\ell_{i}} \cup N_{\ell_{i}}) \right] 
\, ,
\]
for each
$1 \leq s \leq |L|$.
Thinking of \IterativeOpt{} as the concatenation of
$\Opt^{\ell_{1}}, \dots, \Opt^{\ell_{|L|}}$,
the subgraph $H_{\ell_{s}}$ represents the (remaining) input to $\Opt^{\ell_{s}}$,
the IS $I_{\ell_{s}}$ represents the node donation of the application of $\Opt^{\ell_{s}}$ to the returned IS $I$, and $N_{\ell_{s}}$ represents $I_{\ell_{s}}$'s neighbors in $H_{\ell_{s}}$ who are being removed, together with $I_{\ell_{s}}$, from the input to 
$\Opt^{\ell_{s + 1}}$.
Observe that
$I_{\ell_{1}}, \dots, I_{\ell_{|L|}}$
is a partition of the returned IS $I$ into pairwise disjoint clusters.
For each
$\ell \in L$,
let 
$L_{\ell} = \{\ell' \leq \ell \mid \lenUtil(\ell') = \lenUtil(\ell)\}$, and observe that
$\bigcup_{\ell \in L} L_{\ell} = \{2, 3, \dots, k\}$.
By the design of \IterativeOpt{} we deduce the following important observation.

\begin{observation} \label{observation:iterative-opt-component-length}
Since 
$I_{\ell_{s - 1}}$
is a (maximum weight IS and in particular) maximal IS in 
$H_{\ell_{s - 1}}$
for each
$2 \leq s \leq |L|$,
the graph
$H_{\ell_{s}}$ 
admits only nodes of length (strictly) greater that
$\ell_{s - 1}$.
Hence,
for each
$\ell \in L$,
every node 
$v \in I_{\ell}$
admits
$|\Agents(v)| \in L_{\ell}$,
and each node
$u \in N_{\ell}$
admits
$|\Agents(u)| \geq  \min_{\ell' \in L_{\ell}} \ell'$.
\end{observation}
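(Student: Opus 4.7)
The plan is to prove the three claims in the observation successively, using induction on the stage index $s$ to establish the structural property that $H_{\ell_s}$ contains no short nodes, and then reading off the other two claims as consequences.

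First, I would pin down the structure of $L_\ell$. By the definition of $L$, the length function $\lenUtil$ is constant on each interval $(\ell_{s-1}, \ell_s]$ (with $\ell_0$ interpreted as $1$) and strictly decreases when moving from $\ell_{s-1}$ to $\ell_{s-1}+1$. Hence $L_{\ell_s}=\{\ell_{s-1}+1,\ell_{s-1}+2,\dots,\ell_s\}$, and in particular $\min_{\ell' \in L_{\ell_s}} \ell' = \ell_{s-1}+1$. This elementary fact is what lets me translate the upcoming length-lower-bound into the form stated in the observation.

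Next, I would prove the main structural claim by induction on $s \geq 2$: \emph{$H_{\ell_s}$ contains only nodes of length strictly greater than $\ell_{s-1}$}. For the inductive step (with the base $s=2$ a special case where $H_{\ell_1}=G$ may contain all lengths $2,\dots,k$), assume $H_{\ell_{s-1}}$ contains only nodes of length $> \ell_{s-2}$ (vacuous when $s=2$). Within $H_{\ell_{s-1}}$, the set $V^{\ell_{s-1}}$ on which $\Opt^{\ell_{s-1}}$ operates consists precisely of nodes of length in $L_{\ell_{s-1}}$: they must have $\lenUtil$-value equal to $\lenUtil(\ell_{s-1})$ (by definition of $\Opt^{\ell_{s-1}}$), and combined with the inductive lower bound on lengths in $H_{\ell_{s-1}}$ this forces their length into $L_{\ell_{s-1}}$. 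Now the key step: $I_{\ell_{s-1}}$, being a maximum weight IS in $H_{\ell_{s-1}}[V^{\ell_{s-1}}]$, is in particular a \emph{maximal} IS in that induced subgraph (any unchosen non-adjacent node would strictly increase the weight since weights are positive). Therefore every node of $V^{\ell_{s-1}}$ lies either in $I_{\ell_{s-1}}$ or in $N_{\ell_{s-1}}$, and is removed when forming $H_{\ell_s}$. Combining this with the inductive hypothesis, every node remaining in $H_{\ell_s}$ has length strictly greater than $\ell_{s-1}$.

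Finally I would read off the two remaining claims. For a node $v \in I_\ell$ with $\ell = \ell_s$: by definition of $\Opt^{\ell_s}$, $\lenUtil(|\Agents(v)|)=\lenUtil(\ell_s)$; and by the structural claim (when $s\geq 2$) or by trivial reasons (when $s=1$), $|\Agents(v)| > \ell_{s-1}$. Together these two conditions place $|\Agents(v)|$ exactly in $L_{\ell_s}$. For a node $u \in N_\ell$: by definition $u$ belongs to $H_{\ell_s}$, so by the same structural claim $|\Agents(u)| > \ell_{s-1}$, i.e.\ $|\Agents(u)| \geq \ell_{s-1}+1 = \min_{\ell'\in L_{\ell_s}}\ell'$. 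The only mildly subtle step is the maximality-of-$I_{\ell_{s-1}}$ argument in the induction, since the observation's wording only asserts maximum weight; this is settled by the one-line remark that in the restricted subgraph any IS is a sub-IS of a maximal one of at least as large weight, so the chosen maximum weight IS is automatically maximal.
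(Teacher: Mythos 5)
Your proof is correct and follows essentially the same reasoning that the paper compresses into the observation's own statement: inductively, every node of $H_{\ell_{s-1}}$ with $\lenUtil$-value $\lenUtil(\ell_{s-1})$ lies in $V^{\ell_{s-1}}$ and is absorbed into $I_{\ell_{s-1}} \cup N_{\ell_{s-1}}$ because a maximum-weight IS with positive node weights is maximal there, so only nodes of length greater than $\ell_{s-1}$ survive into $H_{\ell_{s}}$, and both length claims follow. Your explicit remark that the maximality holds within $H_{\ell_{s-1}}[V^{\ell_{s-1}}]$ (rather than in all of $H_{\ell_{s-1}}$, as the paper's wording loosely suggests) is exactly the intended reading.
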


For each
$\ell \in L$,
let $O_{\ell}$ be an optimal IS in 
$G[I_{\ell} \cup N_{\ell}]$.
Since
$V = \bigcup_{\ell \in L} (I_{\ell} \cup N_{\ell})$
and since
$O \cap (I_{\ell} \cup N_{\ell})$
is an IS in 
$G[I_{\ell} \cup N_{\ell}]$
for all
$\ell \in L$,
it follows that
\[
\Weight(O) = 
\sum_{\ell \in L} \Weight(O \cap (I_{\ell} \cup N_{\ell})) \leq
\sum_{\ell \in L} \Weight(O_{\ell}) \, .
\]
We shall prove \Lem{}~\ref{lemma:non-efficient-component-approximation}, implying that
\[
\Weight(O)
\leq
\sum_{\ell \in L}
\Weight(I_{\ell}) \cdot  \beta_{\ell}
\leq
\Weight(I) \cdot \max_{\ell \in L} \{\beta_{\ell}\} \, ,
\]
where
\[
\beta_{\ell} = 
\max \left\{ 
\frac{\max_{\ell < \ell' \leq k} \{\ell' \lenUtil(\ell')\}}{\lenUtil(\ell)} ,
1 + 
\frac{\ell - 1}{\ell} \cdot 
\frac{\max_{\ell < \ell' \leq k} \{\ell' \lenUtil(\ell')\}}{\lenUtil(\ell)}
\right\} \, .
\]

This establishes \Thm{}~\ref{theorem:iterative-opt-approximation} as
\begin{align*}
\max_{\ell \in L} \{\beta_{\ell}\} &= 
\max_{\ell \in L} \left\{
\max \left\{ 
\frac{\max_{\ell < \ell' \leq k} \{\ell' \lenUtil(\ell')\}}{\lenUtil(\ell)} ,
1 + 
\frac{\ell - 1}{\ell} \cdot 
\frac{\max_{\ell < \ell' \leq k} \{\ell' \lenUtil(\ell')\}}{\lenUtil(\ell)}
\right\}
\right\} \\
&=
\max_{2 \leq \ell < \ell' \leq k \, : \, \lenUtil(\ell') < \lenUtil(\ell)} 
\left\{
\max \left\{ \frac{\ell' \lenUtil(\ell')}{\lenUtil(\ell)}, 
1 + \frac{\ell - 1}{\ell} \cdot \frac{\ell' \lenUtil(\ell')}{\lenUtil(\ell)} \right\}
\right\} 
=
\criticalRatio(\lenUtil) \, .
\end{align*}

\begin{lemma} \label{lemma:non-efficient-component-approximation}
$w(O_{\ell}) \leq w(I_{\ell}) \cdot 
\max \left\{ 
\frac{\max_{\ell < \ell' \leq k} \{\ell' \lenUtil(\ell')\}}{\lenUtil(\ell)} ,
1 + 
\frac{\ell - 1}{\ell} \cdot 
\frac{\max_{\ell < \ell' \leq k} \{\ell' \lenUtil(\ell')\}}{\lenUtil(\ell)}
\right\}$
for each 
$\ell \in L$. 
\end{lemma}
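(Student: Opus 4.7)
The plan is to compare $O_{\ell}$ and $I_{\ell}$ via the standard symmetric-difference partition combined with a per-component charging scheme. Define $A = I_{\ell} \cap O_{\ell}$, $B = I_{\ell} \setminus O_{\ell}$, and $C = O_{\ell} \setminus I_{\ell}$. Since $O_{\ell} \subseteq I_{\ell} \cup N_{\ell}$ we have $C \subseteq N_{\ell}$, and since $A \cup C = O_{\ell}$ is an IS, no node of $C$ is adjacent to $A$; hence every $c \in C$ has at least one neighbor in $B$. Consider now the connected components of $G[B \cup C]$: fix such a component $H$ with sides $B_{H} \subseteq B$ and $C_{H} \subseteq C$, and split $C_{H} = C_{H}^{1} \uplus C_{H}^{2}$, where $C_{H}^{1}$ collects the $C_{H}$-nodes whose length lies in $L_{\ell}$ and $C_{H}^{2}$ collects those of length strictly greater than $\ell$; by \Obs{}~\ref{observation:iterative-opt-component-length} this split is exhaustive.

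Two per-component inequalities will drive the argument. First, the swap $(I_{\ell} \setminus B_{H}) \cup C_{H}^{1}$ is an IS of $H_{\ell}[V(H_{\ell}) \cap V^{\ell}]$: the portion $A \cup (B \setminus B_{H})$ lies inside the IS $I_{\ell}$, $C$ avoids $\Neighbors_{G}(A)$ as observed above, and by connectedness of $H$ the set $C_{H}$ has no $B$-neighbors outside $B_{H}$. Since $I_{\ell}$ is a maximum-weight IS of this subgraph, we obtain $\Weight(B_{H}) \geq \Weight(C_{H}^{1})$. Second, \Obs{}~\ref{observation:cycle-graph-structure} gives $|\Neighbors_{G}(v) \cap O_{\ell}| \leq |\Agents(v)|$ for every $v \in B_{H}$, and each $c \in C_{H}$ has at least one neighbor in $B_{H}$, so
\[
|C_{H}^{1}| + |C_{H}^{2}|
\, \leq \,
s \, := \, \sum_{v \in B_{H}} |\Agents(v)| \, ,
\]
where we also record that $\Weight(B_{H}) = s \cdot \lenUtil(\ell)$ because every $v \in B_{H} \subseteq V^{\ell}$ satisfies $\lenUtil(|\Agents(v)|) = \lenUtil(\ell)$.

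The final step combines these two inequalities with the node-wise bounds $\Weight(C_{H}^{1}) \leq \ell \cdot \lenUtil(\ell) \cdot |C_{H}^{1}|$ (since each node in $C_{H}^{1}$ has length at most $\ell$) and $\Weight(C_{H}^{2}) \leq M \cdot |C_{H}^{2}|$ with $M := \max_{\ell < \ell' \leq k} \ell' \lenUtil(\ell')$. A two-case split on whether $|C_{H}^{1}| \leq s/\ell$ concludes the work: in the ``small'' case we use $\Weight(C_{H}^{1}) \leq \ell \lenUtil(\ell) |C_{H}^{1}|$ and maximize the linear expression $\ell \lenUtil(\ell) |C_{H}^{1}| + M(s - |C_{H}^{1}|)$ over $[0, s/\ell]$, obtaining $Ms$ when $M \geq \ell \lenUtil(\ell)$ and $s \lenUtil(\ell) + Ms(\ell-1)/\ell$ otherwise; in the ``large'' case we use $\Weight(C_{H}^{1}) \leq \Weight(B_{H}) = s \lenUtil(\ell)$ together with the forced $|C_{H}^{2}| \leq s(\ell-1)/\ell$. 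All sub-cases yield $\Weight(C_{H}) \leq \max\{M/\lenUtil(\ell), \, 1 + \tfrac{\ell-1}{\ell} \cdot M/\lenUtil(\ell)\} \cdot \Weight(B_{H})$. Summing over components produces the same bound for $\Weight(C)/\Weight(B)$, and since this maximum is always at least $1$, adding $\Weight(A)$ to both sides delivers the stated bound for $\Weight(O_{\ell})$; the edge case $\ell = k$ is trivial since then $C_{H}^{2} = \emptyset$ and the optimality bound alone suffices.

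The main subtlety we anticipate lies in the very last step: using only the optimality bound $\Weight(C_{H}^{1}) \leq \Weight(B_{H})$ together with the crude counting estimate $|C_{H}^{2}| \leq s$ would produce a weaker per-component ratio of $1 + M/\lenUtil(\ell)$, losing the $(\ell-1)/\ell$ factor that is crucial for matching $\criticalRatio(\lenUtil)$; the sharper bound emerges only by simultaneously exploiting the length-based bound $\Weight(C_{H}^{1}) \leq \ell \lenUtil(\ell) |C_{H}^{1}|$ and the capacity constraint $|C_{H}^{1}| + |C_{H}^{2}| \leq s$.
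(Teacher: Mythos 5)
Your proof is correct, and it reaches exactly the paper's bound using the same two underlying resources — the optimality of $I_{\ell}$ within the $\lenUtil(\ell)$-weight layer, and the combination of $I_{\ell}$'s maximality with the degree bound $|\Neighbors_{G}(v) \cap O_{\ell}| \leq |\Agents(v)|$ from \Obs{}~\ref{observation:cycle-graph-structure} — but the bookkeeping is organized differently. The paper assumes $I_{\ell} \cap O_{\ell} = \emptyset$ ``without loss of generality'' and works globally with length-indexed counts $|I_{\ell}^{j}|$, $|O_{\ell}^{j}|$, $|O_{\ell}^{>\ell}|$, splitting the analysis according to whether the function-value condition $\ell \, \lenUtil(\ell) \leq \max_{\ell < \ell' \leq k} \ell' \lenUtil(\ell')$ holds and then pushing through two algebraic chains; you instead keep the intersection $A = I_{\ell} \cap O_{\ell}$ explicit, decompose $G[B \cup C]$ into connected components, certify the layer-optimality inequality per component via a swap $(I_{\ell} \setminus B_{H}) \cup C_{H}^{1}$, and split on the structural quantity $|C_{H}^{1}|$ versus $s/\ell$. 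The two case analyses are equivalent ways of maximizing the same linear bound under the same constraints (your split is over the variable, the paper's over the sign of its coefficient), so neither is stronger; what your version buys is a cleaner treatment of two points the paper glosses over — the disjointness ``WLOG'' (your $\rho' \geq 1$ step handles $A$ rigorously) and the $\ell = k$ edge case where the inner maximum is over an empty range — while the paper's global accounting is somewhat shorter since the per-component decomposition is not actually needed (your swap and counting arguments go through verbatim with $B$ and $C$ in place of $B_{H}$ and $C_{H}$).
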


\begin{proof}
Consider some
$\ell \in L$, and assume without loss of generality that $I_{\ell}$ and $O_{\ell}$ are disjoint, as nontrivial intersection of them improves the approximation ratio of $I_{\ell}$.
In order to explicitly calculate their weight, we partition the ISs $I_{\ell}$ and $O_{\ell}$ according to node lengths as follows.
\begin{itemize}
\item 
Set
$I_{\ell}^{j} = \{v \in I_{\ell} \mid |\Agents(v)| = j\}$
for each
$j \in L_{\ell}$.
\item 
Set
$O_{\ell}^{j} = \{v \in O_{\ell} \mid |\Agents(v)| = j\}$
for each 
$j \in L_{\ell}$.
\item 
Set
$O_{\ell}^{>\ell} = \{v \in O_{\ell} \mid |\Agents(v)| > \ell\}$.
\end{itemize}

\Obs{}~\ref{observation:iterative-opt-component-length} ensures that
$I_{\ell} = \biguplus_{j \in L_{\ell}} I_{\ell}^{j}$
and
$O_{\ell} = \biguplus_{j \in L_{\ell}} O_{\ell}^{j} \uplus O_{\ell}^{>\ell}$.
Hence, we can write the weights of 
$I_{\ell}, O_{\ell}$
as
\begin{equation}
\label{equation:iterative-opt-1}
\Weight(I_{\ell}) = 
\sum_{j \in L_{\ell}} \Weight(I_{\ell}^{j}) = 
\sum_{j \in L_{\ell}} |I_{\ell}^{j}| \cdot j \lenUtil(j) =
\lenUtil(\ell) \cdot \sum_{j \in L_{\ell}} j \cdot |I_{\ell}^{j}| \, ,
\end{equation}
and
\[
\Weight(O_{\ell})  =
\sum_{j \in L_{\ell}} \Weight(O_{\ell}^{j}) + \Weight(O_{\ell}^{>\ell}) =
\sum_{j \in L_{\ell}} |O_{\ell}^{j}| \cdot j \lenUtil(j) + \Weight(O_{\ell}^{>\ell}) =
\lenUtil(\ell) \cdot \sum_{j \in L_{\ell}} j \cdot |O_{\ell}^{j}| + \Weight(O_{\ell}^{>\ell}) \, .
\]
By the definition of $O_{\ell}^{>\ell}$, we deduce that
\[
\Weight(O_{\ell}^{>\ell}) \leq
|O_{\ell}^{>\ell}| \cdot
\max_{u \in O_{\ell}^{>\ell}} \{|\Agents(u)| \cdot \lenUtil(|\Agents(u)|)\} \leq
|O_{\ell}^{>\ell}| \cdot 
\max_{\ell < \ell' \leq k} \{\ell' \cdot \lenUtil(\ell')\} \, ,
\]
which implies that
\begin{equation} \label{equation:iterative-opt-2}
\Weight(O_{\ell})  \leq
\lenUtil(\ell) \cdot \sum_{j \in L_{\ell}} j \cdot |O_{\ell}^{j}| + |O_{\ell}^{>\ell}| \cdot 
\max_{\ell < \ell' \leq k} \{\ell' \cdot \lenUtil(\ell')\} \, .
\end{equation}
Recall that 
$V^{\ell} = \{v \in V \mid \lenUtil(|\Agents(v)|) = \lenUtil(\ell)\}$.
Since $I_{\ell}$ is a maximum weight IS in the subgraph $H_{\ell}[V_{\ell}]$ and the node set 
$\bigcup_{j \in L_{\ell}} O_{\ell}^{j}$ 
is an IS in $H_{\ell}[V_{\ell}]$, it follows that
\[
\lenUtil(\ell) \cdot \sum_{j \in L_{\ell}} j \cdot |O_{\ell}^{j}| =
\Weight(\bigcup_{j \in L_{\ell}} O_{\ell}^{j}) \leq 
\Weight(I_{\ell}) =
\lenUtil(\ell) \cdot \sum_{j \in L_{\ell}} j \cdot |I_{\ell}^{j}|  
\, ,
\] 
which implies that
\begin{equation} \label{equation:iterative-opt-3}
\sum_{j \in L_{\ell}} j \cdot |O_{\ell}^{j}| \leq 
\sum_{j \in L_{\ell}} j \cdot |I_{\ell}^{j}| \, .
\end{equation}
Recall our beloved \Obs{}~\ref{observation:cycle-graph-structure},
stating that a node of length $j$ in $I_{\ell}$ has at most $j$
neighbors in $O_{\ell}$.
Using the fact that $I_{\ell}$ is a maximal IS in 
$G[I_{\ell} \cup N_{\ell}]$
and that 
$I_{\ell}, O_{\ell}$
are disjoint,
we deduce that
\[
|O_{\ell}|
=
|\Neighbors_{G}(I_{\ell}) \cap O_{\ell}| 
\, \leq \,
\sum_{v \in I_{\ell}} |\Neighbors_{G}(v) \cap O_{\ell}|
\, = \,
\sum_{j \in L_{\ell}} \sum_{v \in I_{\ell}^{j}} |\Neighbors_{G}(v) \cap O_{\ell}|
\, \leq \,
\sum_{j \in L_{\ell}} j \cdot |I_{\ell}^{j}| \, ,
\]
hence
\begin{equation} \label{equation:iterative-opt-4}
|O_{\ell}^{>\ell}| \leq
\sum_{j \in L_{\ell}} j \cdot |I_{\ell}^{j}| -
\sum_{j \in L_{\ell}} |O_{\ell}^{j}| \, .
\end{equation}

The rest of the proof is divided into two cases depending on whether the following inequality holds
\begin{equation} \label{equation:iterative-opt-*}
\ell \lenUtil(\ell) \leq \max_{\ell < \ell' \leq k} \{\ell' \lenUtil(\ell')\} \, .
\end{equation}
We shall prove a different upper bound to $\Weight(O_{\ell})$ in each case and we will get the desired approximation ratio by taking the maximum of those bounds.

Assume first that \eqref{equation:iterative-opt-*} holds.
Applying a few straightforward algebraic transitions, it follows that
\begin{align*}
\Weight(O_{\ell}) 
& \stackrel{\eqref{equation:iterative-opt-2}}{\leq} 
\lenUtil(\ell) \cdot \sum_{j \in L_{\ell}} |O_{\ell}^{j}| \cdot j  +
\max_{\ell < \ell' \leq k} \{\ell' \lenUtil(\ell')\} \cdot |O_{\ell}^{>\ell}| \\
& \stackrel{\eqref{equation:iterative-opt-*}}{\leq}
\max_{\ell < \ell' \leq k} \{\ell' \lenUtil(\ell')\} \cdot
\left(
\frac{1}{\ell} \cdot \sum_{j \in L_{\ell}} j \cdot |O_{\ell}^{j}| + |O_{\ell}^{>\ell}|
\right) \\
& \stackrel{\eqref{equation:iterative-opt-4}}{\leq}
\max_{\ell < \ell' \leq k} \{\ell' \lenUtil(\ell')\} \cdot
\left(
\frac{1}{\ell} \cdot \sum_{j \in L_{\ell}} j \cdot |O_{\ell}^{j}| + 
\sum_{j \in L_{\ell}} j \cdot |I_{\ell}^{j}| -
\sum_{j \in L_{\ell}} |O_{\ell}^{j}|
\right) \\
& \stackrel{\forall j \in L_{\ell}: \, j \leq \ell}{\leq}
\max_{\ell < \ell' \leq k} \{\ell' \lenUtil(\ell')\} \cdot
\left(
\sum_{j \in L_{\ell}} j \cdot |I_{\ell}^{j}| 
\right) \\
& \stackrel{\eqref{equation:iterative-opt-1}}{=}
\Weight(I_{\ell}) \cdot 
\frac{\max_{\ell < \ell' \leq k} \{\ell' \lenUtil(\ell')\}}{\lenUtil(\ell)} \, .
\end{align*}

On the other hand, if inequality~\eqref{equation:iterative-opt-*} is not satisfied,
we can take advantage of our aforementioned equations and use algebraic manipulations to deduce that

\begin{align*}
\Weight(O_{\ell}) 
& \stackrel{\eqref{equation:iterative-opt-2}}{\leq} 
\lenUtil(\ell) \cdot \sum_{j \in L_{\ell}} |O_{\ell}^{j}| \cdot j  +
\max_{\ell < \ell' \leq k} \{\ell' \lenUtil(\ell')\} \cdot |O_{\ell}^{>\ell}| \\
& \stackrel{\eqref{equation:iterative-opt-4}}{\leq}
\sum_{j \in L_{\ell}} j \cdot |O_{\ell}^{j}| \cdot \lenUtil(\ell) +
\max_{\ell < \ell' \leq k} \{\ell' \lenUtil(\ell')\} \cdot \sum_{j \in L_{\ell}} j \cdot |I_{\ell}^{j}| -
\max_{\ell < \ell' \leq k} \{\ell' \lenUtil(\ell')\} \cdot \sum_{j \in L_{\ell}} |O_{\ell}^{j}| \\
& \stackrel{\eqref{equation:iterative-opt-1}}{=}
\Weight(I_{\ell}) \cdot 
\frac{\max_{\ell < \ell' \leq k} \{\ell' \lenUtil(\ell')\}}{\lenUtil(\ell)} +
\sum_{j \in L_{\ell}} j \cdot |O_{\ell}^{j}|
\left(\lenUtil(\ell) - \frac{\max_{\ell < \ell' \leq k} \{\ell' \lenUtil(\ell')\}}{j} \right) \\
& \stackrel{\forall j \in L_{\ell}: \, j \leq \ell}{\leq}
\Weight(I_{\ell}) \cdot 
\frac{\max_{\ell < \ell' \leq k} \{\ell' \lenUtil(\ell')\}}{\lenUtil(\ell)} +
\left(\lenUtil(\ell) - \frac{\max_{\ell < \ell' \leq k} \{\ell' \lenUtil(\ell')\}}{\ell} \right) \cdot
\sum_{j \in L_{\ell}} 
j \cdot |O_{\ell}^{j}| \\
& \stackrel{\eqref{equation:iterative-opt-3}, \neg \eqref{equation:iterative-opt-*}}{\leq} 
\Weight(I_{\ell}) \cdot 
\frac{\max_{\ell < \ell' \leq k} \{\ell' \lenUtil(\ell')\}}{\lenUtil(\ell)} +
\left(\lenUtil(\ell) - \frac{\max_{\ell < \ell' \leq k} \{\ell' \lenUtil(\ell')\}}{\ell} \right) \cdot
\sum_{j \in L_{\ell}} 
j \cdot |I_{\ell}^{j}|  \\
& \stackrel{\eqref{equation:iterative-opt-1}}{=} 
\Weight(I_{\ell}) \cdot 
\frac{\max_{\ell < \ell' \leq k} \{\ell' \lenUtil(\ell')\}}{\lenUtil(\ell)} +
\left(\lenUtil(\ell) - \frac{\max_{\ell < \ell' \leq k} \{\ell' \lenUtil(\ell')\}}{\ell} \right) \cdot
\frac{\Weight(I_{\ell})}{\lenUtil(\ell)} \\
& = 
\Weight(I_{\ell}) \cdot \left(
\frac{\max_{\ell < \ell' \leq k} \{\ell' \lenUtil(\ell')\}}{\lenUtil(\ell)} +
1 -
\frac{\max_{\ell < \ell' \leq k} \{\ell' \lenUtil(\ell')\}}{\ell \lenUtil(\ell)}
\right) \\
& = 
\Weight(I_{\ell}) \cdot \left(
1 + 
\frac{\ell - 1}{\ell} \cdot 
\frac{\max_{\ell < \ell' \leq k} \{\ell' \lenUtil(\ell')\}}{\lenUtil(\ell)}
\right)
\, .
\end{align*}

Finally, by taking the maximum of the bounds on $\Weight(O_{\ell})$ from both cases we conclude that 
\[
\Weight(O_{\ell}) \leq
\Weight(I_{\ell}) \cdot
\max \left\{ 
\frac{\max_{\ell < \ell' \leq k} \{\ell' \lenUtil(\ell')\}}{\lenUtil(\ell)} ,
1 + 
\frac{\ell - 1}{\ell} \cdot 
\frac{\max_{\ell < \ell' \leq k} \{\ell' \lenUtil(\ell')\}}{\lenUtil(\ell)}
\right\} \, ,
\]
thus completing the proof of
\Lem{}~\ref{lemma:non-efficient-component-approximation}.
\end{proof}

\section{Lower Bound for Non-Uniform Length Functions} 
\label{section:lower-bound-non-uniform}
Consider a length bound
$k \geq 3$
and a non-uniform length function $\lenUtil$.
In this section, we establish the lower bound for the
$(k, \lenUtil)$-BE
problem promised in \Thm{}~\ref{theorem:non-uniform-lower-bound}.
To do so, we first introduce a
$(k, \lenUtil)$-BE
instance, called
\emph{$(h, v)$-comb},
and use this instance to establish a
$\max_{%
2 \leq \ell < \ell' \leq k \, : \, \lenUtil(\ell) > \lenUtil(\ell')%
}
\left\{
\tfrac{\ell' \cdot \lenUtil(\ell')}{\lenUtil(\ell)}
\right\}$
lower bound on the approximation ratio of any truthful mechanism;
this is done in \Sect{}~\ref{section:weak-lower-bound}.
Subsequently, in \Sect{}~\ref{section:strong-lower-bound}, we introduce a
slightly more sophisticated instance, called
\emph{double $(h, v)$-comb},
and establish the improved
$\max_{%
2 \leq \ell < \ell' \leq k \, : \, \lenUtil(\ell) > \lenUtil(\ell')%
}
\left\{
\max \left\{
\frac{\ell' \cdot \lenUtil(\ell')}{\lenUtil(\ell)}, \,
\frac{\ell - 1}{\ell}
\cdot
\frac{\ell' \cdot \lenUtil(\ell')}{\lenUtil(\ell)}
+ 1
\right\}
\right\}$
lower bound promised in \Thm{}~\ref{theorem:non-uniform-lower-bound} (building
on some arguments presented in \Sect{}~\ref{section:weak-lower-bound}).
We start by introducing the equivalent interpretation of a wish list vector
as an agent digraph, which will make our life easier in the representation and
discussion about specific instances of the
$(k, \lenUtil)$-BE
problem and shall serve us in the current section and in
\Sect{}~\ref{section:lower-bound-local-search}.

Consider a wish list vector
$\Wish \in \WishVecSpace_{n}$,
$n \in \Integers_{> 0}$.
Throughout this section we interpret $\Wish$ as a digraph
$\Wish = ([n], E_{\Wish})$,
referred to as the \emph{agent digraph}, that includes a (directed) arc from
vertex
$i \in [n]$
to vertex
$j \in [n]$
if and only if
$j \in \Wish_{i}$.
Frequently, it will be helpful to think of the strategy space of an agent as
the collection of all subsets of outgoing arcs she might reveal to the
mechanism.
Observe that the set $\Cycles_{\Wish}^{k}$ identifies with the set of
(directed) cycles in the agent digraph $\Wish$ of length at most $k$ and
following the discussion in \Sect{}~\ref{section:model}, notice that
$\Cycles_{\pi}$ is a set of pairwise (vertex-)disjoint cycles in
$\Cycles_{\Wish}^{k}$ for each exchange
$\pi \in \ExcSet_{\Wish}^{k}$.
This facilitates a graph theoretic view of the (restricted version of the)
$(k, \lenUtil)$-BE
problem:
an instance of the problem is characterized by an agent digraph
$\Wish^{*} = ([n], E_{\Wish^{*}})$;
a mechanism $f$ for the problem maps a given agent digraph
$\Wish = ([n], E_{\Wish})$,
where
$E_{\Wish} \subseteq E_{\Wish^{*}}$,
to an exchange
$f(\Wish) \in \ExcSet_{\Wish}^{k}$.

\subsection{A Simplified Construction} 
\label{section:weak-lower-bound}
In this section, we present a simple-to-analyze instance of the 
$(k, \lenUtil)$-BE
problem, and show that it yields a (weak) lower bound to this problem.
Throughout this section we fix a non-uniform length function $\lenUtil$ and
two integers
$2 \leq h < v \leq k$
such that
$\lenUtil(v) < \lenUtil(h)$.

\begin{definition*}[comb instance]
The
\emph{$(h, v)$-comb} instance
is defined to be the agent digraph
$\Wish \in \WishVecSpace_{h v}$
depicted in Figure~\ref{figure:comb-instance},
consisting of $h$ 'black agents' and 
$h(v - 1)$
'white agents'.\footnote{%
To avoid cumbersome expressions,
we do not mention the parameters
$h, v$
in the notation of the 
$(h, v)$-comb
instance, recalling that we fixed
$h, v$
for this section.
}
The cycles in the set 
$\Cycles_{\Wish}^{k}$,
which identifies with the set of (directed) cycles in the agent digraph
$\Wish$,
are highlighted using stretched ellipses with gray color.
We denote the only horizontal cycle by 
$c_{H} = (1, \dots, h)$
and we define $\pi_{H}$ as the exchange for whom
$\Cycles_{\pi_{H}} = \{c_{H}\}$.
Additionally, for every
$i \in [h]$,
we denote by 
$\Wish'_{i} = \{i + 1\}$
(except for 
$\Wish'_{h} = \{1\}$)
the wish list for whom agent $i$ reveals only her outgoing arc to a black
agent.
This notation is naturally extended to agent subsets $[i]$,
$1 \leq i \leq h$, defining 
$\Wish'_{[i]} = (\Wish'_{1}, \Wish'_{2}, \dots, \Wish'_{i})$.
\end{definition*}

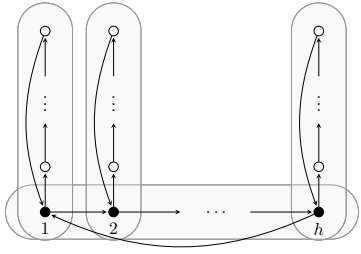
\begin{figure}
\centering
\scalebox{0.6}{
\begin{tikzpicture}
\node[cycrec = 8, fill = lightgray!10] at (3,0) {};
\node[cycrec = 5.5, rotate=90, fill = lightgray!10] at (0,2) {};
\node[cycrec = 5.5, rotate=90, fill = lightgray!10] at (1.5,2) {};
\node[cycrec = 5.5, rotate=90, fill = lightgray!10] at (6,2) {};
\node[cycrec = 8] at (3,0) {};
\node[cycrec = 5.5, rotate=90] at (0,2) {};
\node[cycrec = 5.5, rotate=90] at (1.5,2) {};
\node[cycrec = 5.5, rotate=90] at (6,2) {};

\black (p1) at (0,0) [label=below:$1$] {};
\black (p2) at (1.5,0) [label=below:$2$] {};
\black (pn) at (6,0) [label=below:$h$] {};
	
\white (v11) at (0,1) [] {};
\white (v1k) at (0,4) [] {};
	
\white (v21) at (1.5,1) [] {};
\white (v2k) at (1.5,4) [] {};
		
\white (vn1) at (6,1) [] {};
\white (vnk) at (6,4) [] {};
		
\path 
(p2) -- node[auto=false]{$\ldots$} (pn)
(p1) edge[outarrow] (p2)
(p2) edge[outarrow] (3,0)
(4.5,0) edge[outarrow] (pn)
(pn) edge[outarrow, bend left = 25] (p1)
		
(v11) -- node[auto=true]{$\vdots$} (v1k)
		
(p1) edge[outarrow] (v11)
(v11) edge[outarrow] (0,2)
(0,3) edge[outarrow] (v1k)
(v1k) edge[outarrow, bend right = 20] (p1)
		
(v21) -- node[auto=true]{$\vdots$} (v2k)
		
(p2) edge[outarrow] (v21)
(v21) edge[outarrow] (1.5,2)
(1.5,3) edge[outarrow] (v2k)
(v2k) edge[outarrow, bend right = 20] (p2)
		
(vn1) -- node[auto=true]{$\vdots$} (vnk)
		
(pn) edge[outarrow] (vn1)
(vn1) edge[outarrow] (6,2)
(6,3) edge[outarrow] (vnk)
(vnk) edge[outarrow, bend right = 20] (pn)
;
\end{tikzpicture}
}
\caption{\label{figure:comb-instance}%
The agent digraph of an
$(h, v)$-comb instance,
where
$2 \leq h < v \leq k$
and
$\lenUtil(v) < \lenUtil(h)$,
that consists of $h$ 'black agents' and
$h (v - 1)$
'white agents'.
The (directed) cycles in the agent digraph are depicted by the stretched gray
ellipses.
Any truthful mechanism must output the exchange $\pi_{H}$ defined so that
$\Cycles_{\pi_{H}} = \{ c_{H} \}$,
where $c_{H}$ is the 'horizontal cycle' that consists of the black agents;
the social welfare of this exchange is
$\SW(\pi_{H}) = h \cdot \lenUtil(h)$.
In contrast, the social welfare of the exchange $\pi$, defined so that
$\Cycles_{\pi}$ includes all 'vertical cycles', is
$\SW(\pi) = h \cdot v \cdot \lenUtil(v)$.
Maximizing over $h$ and $v$, we obtain a lower bound of
$\max_{%
2 \leq h < v \leq k \, : \, \lenUtil(h) > \lenUtil(v)%
}
\left\{
\tfrac{v \cdot \lenUtil(v)}{\lenUtil(h)}
\right\}$
on the approximation ratio.
}
\end{figure}

\begin{lemma} \label{lemma:comb-instance-output}
Consider a deterministic truthful mechanism $f$ (that admits a constant approximation ratio) to the 
$(k, \lenUtil)$-BE problem.
Then,
$f(\Wish) = \pi_{H}$.
\end{lemma}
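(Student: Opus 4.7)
\medskip
\noindent\textbf{Proof plan.}
The plan is to run a downward induction on the number of black agents that hide their outgoing arc to the white (vertical) cluster. For each $0 \le t \le h$, let
\[
W^{t} \;=\; \bigl(\Wish'_{[t]},\, \Wish_{-[t]}\bigr)
\]
be the wish-list vector in which black agents $1,\ldots,t$ report the truncated list $\Wish'_{i}=\{i+1\}$ (only the horizontal arc), while all remaining agents (including every white agent) report truthfully. Thus $W^{0}=\Wish$, while in $W^{h}$ no black agent reveals an arc to a white agent. The target statement $f(\Wish)=\pi_{H}$ is precisely $f(W^{0})=\pi_{H}$, which we obtain by proving $f(W^{t})=\pi_{H}$ for every $0\le t \le h$, starting from $t=h$.

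\medskip
\noindent\textbf{Base case $t=h$.}
Under $W^{h}$, the agent digraph contains no arc from a black agent to a white agent, so every vertical cycle is destroyed and only the horizontal cycle $c_{H}$ survives as a directed cycle of length at most $k$. Consequently $\ExcSet^{k}_{W^{h}}=\{\pi_{\emptyset},\pi_{H}\}$ and the optimum has social welfare $h\cdot\lenUtil(h)>0$. Because $f$ is assumed to admit a constant approximation ratio, it cannot output the empty exchange on $W^{h}$ (the ratio would be unbounded), so $f(W^{h})=\pi_{H}$.

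\medskip
\noindent\textbf{Inductive step.}
Assume $f(W^{t})=\pi_{H}$ and show $f(W^{t-1})=\pi_{H}$. The vectors $W^{t}$ and $W^{t-1}$ differ only in coordinate $t$: agent $t$'s report is $\Wish'_{t}\subseteq\Wish_{t}$ in $W^{t}$ and the truthful $\Wish_{t}$ in $W^{t-1}$. Truthfulness applied to agent $t$ therefore gives
\[
\utility\bigl(t,\,f(W^{t})\bigr) \;\le\; \utility\bigl(t,\,f(W^{t-1})\bigr).
\]
By the inductive hypothesis, the left-hand side equals $\lenUtil(h)$. To bound the right-hand side, list the cycles of $\Cycles^{k}_{W^{t-1}}$ that pass through agent $t$: agents $1,\ldots,t-1$ still hide the arc into their vertical clusters, so the only such cycles are the horizontal cycle $c_{H}$ (length $h$) and the vertical cycle $c_{V}^{t}$ through agent $t$ (length $v$). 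Any exchange $f(W^{t-1})$ either excludes agent $t$ (giving her utility $0$) or places her on one of these two cycles (giving her utility $\lenUtil(h)$ or $\lenUtil(v)$). Since $\lenUtil(v)<\lenUtil(h)$ by our standing hypothesis, the inequality above forces agent $t$ onto $c_{H}$, and by the vertex-disjointness of trading cycles the chosen exchange must be $f(W^{t-1})=\pi_{H}$. Iterating the step from $t=h$ down to $t=0$ yields $f(\Wish)=\pi_{H}$.

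\medskip
\noindent\textbf{Expected obstacle.}
The only subtle point is the base case, where one needs the constant-approximation hypothesis to rule out the empty exchange on $W^{h}$; without it, a mechanism could trivially output $\pi_{\emptyset}$ and the whole argument collapses. The inductive step is routine once one inventories the cycles that survive each report change and appeals to the one-direction truthfulness inequality enforced by the subset wish-list assumption.
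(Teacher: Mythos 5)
Your proof is correct and follows essentially the same route as the paper: the same family of intermediate wish-list vectors in which black agents $1,\dots,t$ conceal their vertical arcs, the same use of the constant-approximation hypothesis to pin down the base case $t=h$, and the same one-sided truthfulness inequality driving the downward induction. The only differences are notational and your slightly more explicit accounting of the cycles through agent $t$ and the disjointness argument forcing $f(W^{t-1})=\pi_{H}$, both of which the paper leaves implicit.
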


\begin{proof}
For each 
$0 \leq i \leq h$, 
let
$\Wish^{i} = (\Wish_{-[i]}, \Wish'_{[i]})$
be the agent digraph in which all agents report truthfully except for agents
$1, \dots ,i$
who reveal only their outgoing arc to a black agent
(note that
$\Wish^{0} = \Wish$).
Observe that
$\Cycles_{\Wish^{i}}^{k}$,
which identifies with the set of (directed) cycles in the agent digraph
$\Wish^{i}$,
consists of the horizontal cycle $c_{H}$ and the
$h - i$ 
rightmost vertical cycles.
We shall prove by induction that 
$f(\Wish^{i}) = \pi_{H}$ 
for all
$i = h, h - 1, \dots, 0$, 
which will establish the assertion as 
$f(\Wish) = f(\Wish^{0})$.

The induction base concerns the case of 
$i = h$,
and follows immediately as 
$\Cycles_{\Wish^{h}}^{k}$
consists only the horizontal cycle $c_{H}$, which must be included in $\Cycles_{f(\Wish^{h})}$ 
(otherwise, $f$ does not admit a constant approximation ratio).
For the induction step, consider some  
$1 \leq i \leq h$
and assume that 
$f(\Wish^{i}) = \pi_{H}$.
Notice that as
$\Wish^{i} = (\Wish^{i - 1}_{-i}, \Wish'_{i})$,
we know that
$\utility(i, f(\Wish^{i - 1}_{-i}, \Wish'_{i})) = 
\utility(i, f(\Wish^{i})) =
\lenUtil(h)$.
Consequently, the truthfulness of $f$ guarantees that
$\utility(i, f(\Wish^{i - 1})) \geq
\utility(i, f(\Wish^{i - 1}_{-i}, \Wish'_{i})) = 
\lenUtil(h)$,
which can be realized only if 
$f(\Wish^{i - 1}) = \pi_{H}$
as
$\lenUtil(h) > \lenUtil(v)$,
thus completing the induction step.
\end{proof}

Notice that the aforementioned lemma already establishes a (decent) lower bound for the 
$(k, \lenUtil)$-BE
problem, which can be derived as follows. 
Consider the 
$(h, v)$-comb
instance.
\Lem{}~\ref{lemma:comb-instance-output} ensures that a deterministic truthful mechanism obtains a social welfare of
$h \cdot \lenUtil(h)$.
On the other hand, observe that the exchange who possess all vertical cycles in its trading cycle set admits a social welfare of 
$h \cdot v \cdot \lenUtil(v)$.
This yields a lower bound of 
$\frac{h v \lenUtil(v)}{h \lenUtil(h)} = \frac{v \lenUtil(v)}{\lenUtil(h)}$.
Going over all pairs 
$(h, v)$
that satisfy
$\lenUtil(v) < \lenUtil(h)$
derives the following impossibility result. 

\begin{corollary} \label{corollary:weak-lower-bound}
The
$(k, \lenUtil)$-BE
problem does not admit a truthful (deterministic) mechanism with approximation
ratio strictly smaller than 
$
\max_{2 \leq h < v \leq k \, : \, \lenUtil(v) < \lenUtil(h)}
\left\{
\frac{v \lenUtil(v)}{\lenUtil(h)}
\right\}
$ 
for every length bound
$k \geq 3$
and non-uniform length function $\lenUtil$.
\end{corollary}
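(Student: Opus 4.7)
The plan is to derive the lower bound by instantiating \Lem{}~\ref{lemma:comb-instance-output} on the $(h, v)$-comb instance for every admissible pair $(h, v)$, and then taking the maximum over such pairs. Fix a deterministic truthful mechanism $f$ for the $(k, \lenUtil)$-BE problem and let $r$ denote its approximation ratio; if $r$ is not finite there is nothing to prove, so assume $r$ is a (finite) constant. For any $2 \leq h < v \leq k$ with $\lenUtil(v) < \lenUtil(h)$, consider the wish list vector $\Wish \in \WishVecSpace_{h v}$ of the $(h, v)$-comb. \Lem{}~\ref{lemma:comb-instance-output} asserts that $f(\Wish) = \pi_{H}$, and since $\pi_{H}$ consists of a single trading cycle of length $h$ (the horizontal one), we obtain $\SW(f(\Wish)) = h \cdot \lenUtil(h)$.

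Next, I exhibit a better exchange to lower bound the optimum. Let $\pi^{*}$ be the exchange whose trading cycle set $\Cycles_{\pi^{*}}$ consists of exactly the $h$ vertical cycles (each of length $v$). Since $v \leq k$ and each vertical cycle is a cycle in the agent digraph $\Wish$, we have $\pi^{*} \in \ExcSet_{\Wish}^{k}$, and every one of the $h v$ agents partakes in a cycle of length $v$, yielding $\SW(\pi^{*}) = h \cdot v \cdot \lenUtil(v)$. The approximation ratio guarantee then gives
\[
r
\, \geq \,
\frac{\SW(\pi^{*})}{\SW(f(\Wish))}
\, = \,
\frac{h \cdot v \cdot \lenUtil(v)}{h \cdot \lenUtil(h)}
\, = \,
\frac{v \cdot \lenUtil(v)}{\lenUtil(h)} \, .
\]

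Because this inequality holds for every pair $(h, v)$ with $2 \leq h < v \leq k$ and $\lenUtil(v) < \lenUtil(h)$, we may take the maximum over all such pairs to obtain the bound stated in the corollary. The set of admissible pairs is non-empty precisely because $\lenUtil$ is non-uniform, so the maximum is well defined. There is essentially no obstacle here: all of the work is already packed into \Lem{}~\ref{lemma:comb-instance-output}, whose inductive argument pins $f(\Wish)$ down to $\pi_{H}$; what remains for the corollary is the short calculation above and the maximization step.
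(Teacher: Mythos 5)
Your proposal is correct and follows essentially the same route as the paper: invoke \Lem{}~\ref{lemma:comb-instance-output} to pin the output on the $(h,v)$-comb down to $\pi_{H}$ with social welfare $h \cdot \lenUtil(h)$, compare against the exchange consisting of all $h$ vertical cycles with social welfare $h \cdot v \cdot \lenUtil(v)$, and maximize over admissible pairs $(h,v)$. Your extra remark handling the finiteness of the approximation ratio (needed to apply the lemma) is a harmless and correct refinement of what the paper leaves implicit.
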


\subsection{Enhancing the Construction} 
\label{section:strong-lower-bound}
In the current section we develop the lower bound promised in \Thm{}~\ref{theorem:non-uniform-lower-bound},
by taking advantage of an augmented instance to the one presented in \Sect{}~\ref{section:weak-lower-bound}.
Once again, throughout this section we fix a non-uniform length function $\lenUtil$ and two integrals 
$2 \leq h < v \leq k$
such that
$\lenUtil(v) < \lenUtil(h)$.
We note that in this section we shall redefine terms from \Sect{}~\ref{section:weak-lower-bound}, but we will refer to the definitions as presented in the current section.

\begin{definition*}[double comb instance]
The
\emph{double 
$(h, v)$-comb}
instance is defined to be the agent digraph 
$\Wish \in \WishVecSpace_{(2h - 1) v}$
depicted in Figure~\ref{figure:two-sided-comb},
consisting of 
$2h - 1$
'black agents' and
$(2h - 1) (v - 1)$
'white agents'.\footnote{%
Once again, to ease the notations, we do not mention the parameters
$h, v$
in the notation of the double
$(h, v)$-comb
instance, recalling that we fixed
$h, v$
for this section.
}
The cycles in the set 
$\Cycles_{\Wish}^{k}$,
which identifies with the set of (directed) cycles in the agent digraph
$\Wish$,
are highlighted using stretched ellipses with gray color.
We denote the left horizontal cycle by
$c_{L} = (l_{1}, \dots, l_{h})$,
the right horizontal cycle by 
$c_{R} = (l_{h}, r_{1}, \dots, r_{h - 1})$,
and we define 
$\pi_{L}, \pi_{R}$ as the exchanges for whom
$\Cycles_{\pi_{L}} = \{c_{L}\}, \Cycles_{\pi_{R}} = \{c_{R}\}$.
Additionally, for every
$i \in [h - 1]$
we denote by 
$\Wish'_{r_{i}} = \{r_{i + 1}\}$
(except for
$\Wish'_{r_{h - 1}} = \{l_{h}\}$)
the wish list for whom agent $r_{i}$ reveals only her outgoing arc to a black agent.
This notation is naturally extended to agent subsets 
$\{r_{1}, \dots, r_{i}\}$,
$1 \leq i \leq h - 1$, 
defining 
$\Wish'_{\{r_{1}, \dots, r_{i}\}} = (\Wish'_{r_{1}}, \Wish'_{r_{2}}, \dots, \Wish'_{r_{i}})$.
For each 
$0 \leq i \leq h - 1$, 
let
$\Wish^{i} = 
(\Wish_{-\{r_{1}, \dots, r_{i}\}}, \Wish'_{\{r_{1}, \dots, r_{i}\}})$
be the agent digraph in which all agents report truthfully except for agents
$r_{1}, \dots, r_{i}$
who reveal only their outgoing arc to a black agent
(note that
$\Wish^{0} = \Wish$).
\end{definition*}

\begin{figure}
\centering
\scalebox{0.7}{
\begin{tikzpicture}
\node[cycrec = 8, fill = lightgray!10] at (3,0) {};
\node[cycrec = 8, fill = lightgray!10] at (9,0) {};
\node[cycrec = 5.5, rotate=90, fill = lightgray!10] at (0,2) {};
\node[cycrec = 5.5, rotate=90, fill = lightgray!10] at (4.5,2) {};
\node[cycrec = 5.5, rotate=90, fill = lightgray!10] at (6,2) {};
\node[cycrec = 5.5, rotate=90, fill = lightgray!10] at (7.5,2) {};
\node[cycrec = 5.5, rotate=90, fill = lightgray!10] at (12,2) {};
\node[cycrec = 8] at (3,0) {};
\node[cycrec = 8] at (9,0) {};
\node[cycrec = 5.5, rotate=90] at (0,2) {};
\node[cycrec = 5.5, rotate=90] at (4.5,2) {};
\node[cycrec = 5.5, rotate=90] at (6,2) {};
\node[cycrec = 5.5, rotate=90] at (7.5,2) {};
\node[cycrec = 5.5, rotate=90] at (12,2) {};

\black (p1) at (0,0) [label=below:$l_{1}$] {};
\black (p2) at (4.5,0) [label=below:$l_{h - 1}$] {};
\black (pn) at (6,0) [label=below:$l_{h}$] {};
	
\white (v11) at (0,1) [label=right:$ $] {};
\white (v1k) at (0,4) [label=right:$ $] {};
	
\white (v21) at (4.5,1) [label=right:$ $] {};
\white (v2k) at (4.5,4) [label=right:$ $] {};
		
\white (vn1) at (6,1) [label=right:$ $] {};
\white (vnk) at (6,4) [label=right:$ $] {};

\black (r1) at (6,0) {};
\black (r2) at (7.5,0) [label=below:$r_{1}$] {};
\black (rn) at (12,0) [label=below:$r_{h - 1}$] {};

\white (w21) at (7.5,1) [label=right:$ $] {};
\white (w2k) at (7.5,4) [label=right:$ $] {};
		
\white (wn1) at (12,1) [label=right:$ $] {};
\white (wnk) at (12,4) [label=right:$ $] {};
		
\path 
(p1) -- node[auto=false]{$\ldots$} (p2)
(p1) edge[outarrow] (1.5,0)
(3,0) edge[outarrow] (p2)
(p2) edge[outarrow] (pn)
(pn) edge[outarrow, bend left = 25] (p1)
		
(v11) -- node[auto=true]{$\vdots$} (v1k)
		
(p1) edge[outarrow] (v11)
(v11) edge[outarrow] (0,2)
(0,3) edge[outarrow] (v1k)
(v1k) edge[outarrow, bend right = 20] (p1)
		
(v21) -- node[auto=true]{$\vdots$} (v2k)
		
(p2) edge[outarrow] (v21)
(v21) edge[outarrow] (4.5,2)
(4.5,3) edge[outarrow] (v2k)
(v2k) edge[outarrow, bend right = 20] (p2)
		
(vn1) -- node[auto=true]{$\vdots$} (vnk)
		
(pn) edge[outarrow] (vn1)
(vn1) edge[outarrow] (6,2)
(6,3) edge[outarrow] (vnk)
(vnk) edge[outarrow, bend right = 20] (pn)
;

\path 
(r2) -- node[auto=false]{$\ldots$} (rn)
(r1) edge[outarrow] (r2)
(r2) edge[outarrow] (9,0)
(10.5,0) edge[outarrow] (rn)
(rn) edge[outarrow, bend left = 25] (r1)

(w21) -- node[auto=true]{$\vdots$} (w2k)
		
(r2) edge[outarrow] (w21)
(w21) edge[outarrow] (7.5,2)
(7.5,3) edge[outarrow] (w2k)
(w2k) edge[outarrow, bend right = 20] (r2)
		
(wn1) -- node[auto=true]{$\vdots$} (wnk)
		
(rn) edge[outarrow] (wn1)
(wn1) edge[outarrow] (12,2)
(12,3) edge[outarrow] (wnk)
(wnk) edge[outarrow, bend right = 20] (rn)
;
\end{tikzpicture}
}
\caption{\label{figure:two-sided-comb}%
The agent digraph of a double
$(h, v)$-comb
instance that consists of
$2 h - 1$
'black agents' and
$(2 h - 1) (v - 1)$
'white agents'.
The (directed) cycles in the agent digraph are depicted by the stretched gray
ellipses.
}
\end{figure}
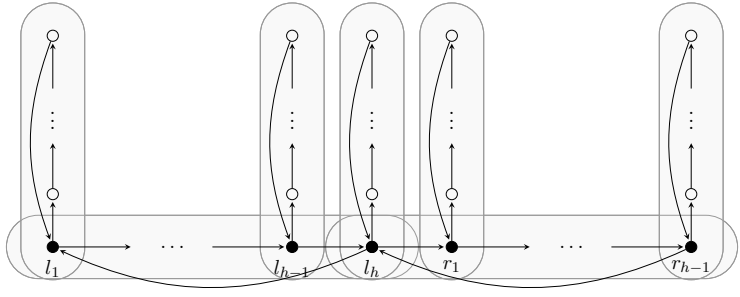

For the rest of this section we consider a deterministic truthful mechanism $f$ for the 
$(k, \lenUtil)$-BE 
problem.
Observe that 
$\Cycles_{f(\Wish)}$
can not include both horizontal cycles 
$c_{L}, c_{R}$ 
as they share a common participant.
Thus, we can assume without loss of generality (by the symmetry of $\Wish$) that
$c_{R} \notin \Cycles_{f(\Wish)}$.
We shall be interested in the output of $f$ to the agent digraph 
$\Wish^{h - 1}$,
depicted in Figure~\ref{figure:subproblem-two-sided-comb}
for illustration.
We will show that the assumption
$c_{R} \notin \Cycles_{f(\Wish)}$
implies that
$c_{R} \notin \Cycles_{f(\Wish^{h - 1})}$,
which will further results in
$f(\Wish^{h - 1}) = \pi_{L}$.
Consequently, the desired lower bound of 
$\criticalRatio(\lenUtil)$
will be directly established.

\begin{figure}
\centering
\scalebox{0.7}{
\begin{tikzpicture}
\node[cycrec = 8, fill = lightgray!10] at (3,0) {};
\node[cycrec = 8, fill = lightgray!10] at (9,0) {};
\node[cycrec = 5.5, rotate=90, fill = lightgray!10] at (0,2) {};
\node[cycrec = 5.5, rotate=90, fill = lightgray!10] at (4.5,2) {};
\node[cycrec = 5.5, rotate=90, fill = lightgray!10] at (6,2) {};
\node[cycrec = 8] at (3,0) {};
\node[cycrec = 8] at (9,0) {};
\node[cycrec = 5.5, rotate=90] at (0,2) {};
\node[cycrec = 5.5, rotate=90] at (4.5,2) {};
\node[cycrec = 5.5, rotate=90] at (6,2) {};

\black (p1) at (0,0) [label=below:$l_{1}$] {};
\black (p2) at (4.5,0) [label=below:$l_{h - 1}$] {};
\black (pn) at (6,0) [label=below:$l_{h}$] {};
	
\white (v11) at (0,1) [label=right:$ $] {};
\white (v1k) at (0,4) [label=right:$ $] {};
	
\white (v21) at (4.5,1) [label=right:$ $] {};
\white (v2k) at (4.5,4) [label=right:$ $] {};
		
\white (vn1) at (6,1) [label=right:$ $] {};
\white (vnk) at (6,4) [label=right:$ $] {};

\black (r1) at (6,0) {};
\black (r2) at (7.5,0) [label=below:$r_{1}$] {};
\black (rn) at (12,0) [label=below:$r_{h - 1}$] {};

\white (w21) at (7.5,1) [label=right:$ $] {};
\white (w2k) at (7.5,4) [label=right:$ $] {};
		
\white (wn1) at (12,1) [label=right:$ $] {};
\white (wnk) at (12,4) [label=right:$ $] {};
		
\path 
(p1) -- node[auto=false]{$\ldots$} (p2)
(p1) edge[outarrow] (1.5,0)
(3,0) edge[outarrow] (p2)
(p2) edge[outarrow] (pn)
(pn) edge[outarrow, bend left = 25] (p1)
		
(v11) -- node[auto=true]{$\vdots$} (v1k)
		
(p1) edge[outarrow] (v11)
(v11) edge[outarrow] (0,2)
(0,3) edge[outarrow] (v1k)
(v1k) edge[outarrow, bend right = 20] (p1)
		
(v21) -- node[auto=true]{$\vdots$} (v2k)
		
(p2) edge[outarrow] (v21)
(v21) edge[outarrow] (4.5,2)
(4.5,3) edge[outarrow] (v2k)
(v2k) edge[outarrow, bend right = 20] (p2)
		
(vn1) -- node[auto=true]{$\vdots$} (vnk)
		
(pn) edge[outarrow] (vn1)
(vn1) edge[outarrow] (6,2)
(6,3) edge[outarrow] (vnk)
(vnk) edge[outarrow, bend right = 20] (pn)
;

\path 
(r2) -- node[auto=false]{$\ldots$} (rn)
(r1) edge[outarrow] (r2)
(r2) edge[outarrow] (9,0)
(10.5,0) edge[outarrow] (rn)
(rn) edge[outarrow, bend left = 25] (r1)

(w21) -- node[auto=true]{$\vdots$} (w2k)
		
(r2) edge[outarrow, dashed] (w21)
(w21) edge[outarrow] (7.5,2)
(7.5,3) edge[outarrow] (w2k)
(w2k) edge[outarrow, bend right = 20] (r2)
		
(wn1) -- node[auto=true]{$\vdots$} (wnk)
		
(rn) edge[outarrow, dashed] (wn1)
(wn1) edge[outarrow] (12,2)
(12,3) edge[outarrow] (wnk)
(wnk) edge[outarrow, bend right = 20] (rn)
;
\end{tikzpicture}
}
\caption{\label{figure:subproblem-two-sided-comb}%
The agent digraph $\Wish^{h - 1}$.
The dashed arcs of agents
$\{r_{1}, \dots, r_{h - 1}\}$
represent the concealed arcs.}
\end{figure}
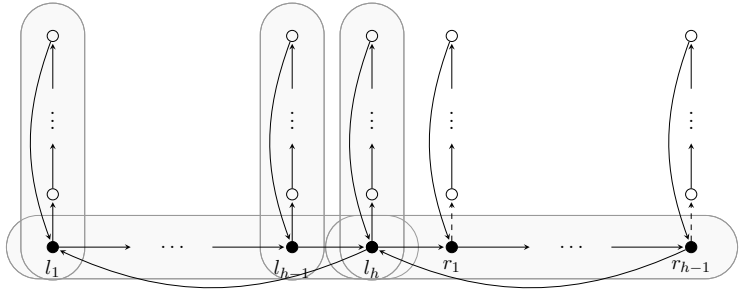

\begin{lemma} \label{lemma:two-sided-comb-not-output}
$c_{R} \notin \Cycles_{f(\Wish^{h - 1})}$.
\end{lemma}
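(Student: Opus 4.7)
The plan is to establish the lemma by a contrapositive backward induction on the sequence $\Wish^{0}, \Wish^{1}, \dots, \Wish^{h-1}$, in the same spirit as the proof of \Lem{}~\ref{lemma:comb-instance-output}, but traversing the sequence in the opposite direction. The key monotonicity claim I would prove is:
\emph{for every $1 \leq i \leq h - 1$, if $c_R \in \Cycles_{f(\Wish^{i})}$, then $c_R \in \Cycles_{f(\Wish^{i-1})}$.}
Chaining the contrapositive of this claim from the hypothesis $c_R \notin \Cycles_{f(\Wish)} = \Cycles_{f(\Wish^{0})}$ immediately yields $c_R \notin \Cycles_{f(\Wish^{h-1})}$, as desired.

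To prove the monotonicity claim, fix $1 \leq i \leq h - 1$ and suppose $c_R \in \Cycles_{f(\Wish^{i})}$. Under $\Wish^{i}$, agent $r_{i}$ conceals all her outgoing arcs except the one towards the next black agent (to $r_{i+1}$, or to $l_h$ if $i = h - 1$), so the only trading cycle in $\Cycles_{\Wish^{i}}^{k}$ in which $r_{i}$ can partake is $c_R$ itself. Combined with the assumption $c_R \in \Cycles_{f(\Wish^{i})}$ and the fact that $c_R$ has length $h$, this gives $\utility(r_{i}, f(\Wish^{i})) = \lenUtil(h)$.

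Now I apply truthfulness to agent $r_{i}$: since $\Wish^{i} = (\Wish^{i-1}_{-r_{i}}, \Wish'_{r_{i}})$ and the true wish list of $r_{i}$ is the one reported in $\Wish^{i-1}$, the truthfulness of $f$ ensures that
\[
\utility(r_{i}, f(\Wish^{i-1}))
\, \geq \,
\utility(r_{i}, f(\Wish^{i}))
\, = \,
\lenUtil(h) \, .
\]
Under $\Wish^{i-1}$, the only trading cycles in $\Cycles_{\Wish^{i-1}}^{k}$ that contain $r_{i}$ are $c_R$ (of length $h$, contributing utility $\lenUtil(h)$) and the vertical cycle through $r_{i}$ (of length $v$, contributing utility $\lenUtil(v) < \lenUtil(h)$); if $r_{i}$ partakes in no cycle her utility is $0$. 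The inequality above together with $\lenUtil(h) > \lenUtil(v) > 0$ therefore forces $r_{i}$ to partake in $c_R$, which implies $c_R \in \Cycles_{f(\Wish^{i-1})}$, establishing the claim.

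The technical step requiring care (though hardly an obstacle) is the explicit enumeration of the cycles in $\Cycles_{\Wish^{i}}^{k}$ and $\Cycles_{\Wish^{i-1}}^{k}$ that contain $r_{i}$: one must verify from the digraph in Figure~\ref{figure:two-sided-comb} that no other cycle of length at most $k$ passes through $r_{i}$ in these two wish list vectors, so that the utility comparison above really has only the three listed options. Once this is observed, the rest of the proof is a purely mechanical application of truthfulness, mirroring the induction step in \Lem{}~\ref{lemma:comb-instance-output}.
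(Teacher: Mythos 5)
Your proposal is correct and is essentially the paper's own argument: the paper proves by forward induction that $c_{R} \notin \Cycles_{f(\Wish^{i})}$ for $i = 0, 1, \dots, h-1$, and your monotonicity claim is exactly the contrapositive of its induction step, using the same truthfulness inequality for agent $r_{i}$ between $\Wish^{i} = (\Wish^{i-1}_{-r_{i}}, \Wish'_{r_{i}})$ and $\Wish^{i-1}$ together with $\lenUtil(v) < \lenUtil(h)$. The enumeration of the cycles through $r_{i}$ that you flag checks out, so no gap remains.
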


\begin{proof}
We shall prove by induction that
$c_{R} \notin \Cycles_{f(\Wish^{i})}$
for all
$i = 0, 1, \dots, h - 1$
which yields the assertion.

For the induction base
$i = 0$,
we know that
$f(\Wish^{0}) = f(\Wish)$
and the induction claim follows as we assumed that 
$c_{R} \notin \Cycles_{f(\Wish)}$.
For the induction step, consider some 
$1 \leq i \leq h - 1$
and assume that 
$c_{R} \notin \Cycles_{f(\Wish^{i - 1})}$.
Consequently, 
since
$\Wish^{i} = (\Wish_{-r_{i}}^{i - 1}, \Wish'_{r_{i}})$,
the truthfulness of $f$ ensures that
\[
\utility(r_{i}, f(\Wish^{i})) = 
\utility(r_{i}, f(\Wish_{-r_{i}}^{i - 1}, \Wish'_{r_{i}})) \leq
\utility(r_{i}, f(\Wish^{i - 1})) <
\lenUtil(h) \, ,
\]
where the strict inequality follows as
$c_{R} \notin \Cycles_{f(\Wish^{i - 1})}$
and
$\lenUtil(v) < \lenUtil(h)$.
Hence, we deduce that
$c_{R} \notin f(\Wish^{i})$,
thus completing the induction step.
\end{proof}

By inspecting the incentives of agent $l_{h}$, we deduce the following lemma.

\begin{lemma} \label{lemma:two-sided-comb-output}
$f(\Wish^{h - 1}) = \pi_{L}$.
\end{lemma}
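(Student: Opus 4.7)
The plan is to show that $c_{L} \in \Cycles_{f(\Wish^{h-1})}$; once this is established, $f(\Wish^{h-1}) = \pi_{L}$ follows immediately, since every other cycle in $\Cycles_{\Wish^{h-1}}^{k}$ intersects $c_{L}$ (the cycle $c_{R}$ shares $l_{h}$ with $c_{L}$, each left vertical cycle shares some $l_{i}$ with $c_{L}$, and no right vertical cycle belongs to $\Cycles_{\Wish^{h-1}}^{k}$ at all because $r_{1}, \dots, r_{h-1}$ already hid their arcs to white agents).

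To force $c_{L}$ into the output, I consider a further deviation $\bar{\Wish}$ obtained from $\Wish^{h-1}$ by letting agent $l_{h}$ report the wish list $\{l_{1}\}$, thereby hiding (on top of the reports in $\Wish^{h-1}$) both her arc to $r_{1}$ and her arc into her own vertical cycle. A direct inspection reveals that $\Cycles_{\bar{\Wish}}^{k}$ consists of $c_{L}$ together with the $h - 1$ left vertical cycles incident to $l_{1}, \dots, l_{h-1}$, and nothing else.

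Next, I mimic the inductive argument of \Lem{}~\ref{lemma:comb-instance-output} to establish $f(\bar{\Wish}) = \pi_{L}$. Let $\bar{\Wish}^{0} = \bar{\Wish}$ and, for $i = 1, \dots, h - 1$, let $\bar{\Wish}^{i}$ be the instance where, on top of the restrictions in $\bar{\Wish}$, agents $l_{1}, \dots, l_{i}$ reveal only their outgoing arcs to the next left black agents. Then $\Cycles_{\bar{\Wish}^{h-1}}^{k} = \{ c_{L} \}$, so $f(\bar{\Wish}^{h-1}) = \pi_{L}$ by the constant-approximation guarantee. A descending induction on $i$ from $h - 1$ down to $0$ then extends this conclusion back to $\bar{\Wish}^{0}$: at each step, truthfulness of $l_{i}$ combined with $\lenUtil(v) < \lenUtil(h)$ forces $l_{i}$ into $c_{L}$ in $f(\bar{\Wish}^{i-1})$, and since $c_{L}$ covers every left black agent no other cycle of $\Cycles_{\bar{\Wish}^{i-1}}^{k}$ can coexist with it in the output.

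The proof will close with a final application of truthfulness, this time to $l_{h}$. Since $\bar{\Wish}$ is obtained from $\Wish^{h-1}$ by shrinking only $l_{h}$'s reported wish list, truthfulness yields $\utility(l_{h}, f(\Wish^{h-1})) \geq \utility(l_{h}, f(\bar{\Wish})) = \lenUtil(h)$. The only cycles in $\Cycles_{\Wish^{h-1}}^{k}$ containing $l_{h}$ that grant her a utility of at least $\lenUtil(h)$ are $c_{L}$ and $c_{R}$ (her vertical cycle yields only $\lenUtil(v) < \lenUtil(h)$), and \Lem{}~\ref{lemma:two-sided-comb-not-output} rules out $c_{R}$, so $c_{L} \in \Cycles_{f(\Wish^{h-1})}$ as desired. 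The main subtlety I foresee is verifying the inductive step cleanly, in particular confirming that at every $\bar{\Wish}^{i-1}$ the cycles available to $l_{i}$ reduce to $c_{L}$ and her own vertical cycle, so that the inequality $\lenUtil(v) < \lenUtil(h)$ really pins $l_{i}$ to $c_{L}$.
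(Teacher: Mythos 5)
Your proof is correct and takes essentially the same route as the paper: the paper lets $l_{h}$ conceal only her arc to $r_{1}$, so the resulting digraph $\Wish^{h}$ has exactly the cycle structure of the $(h, v)$-comb and \Lem{}~\ref{lemma:comb-instance-output} applies verbatim, whereas you additionally conceal her vertical arc and therefore re-run that lemma's induction on a comb missing $l_{h}$'s tooth. In both cases the final step is identical: truthfulness of $l_{h}$ yields $\utility(l_{h}, f(\Wish^{h - 1})) \geq \lenUtil(h)$, which together with \Lem{}~\ref{lemma:two-sided-comb-not-output} forces $c_{L} \in \Cycles_{f(\Wish^{h - 1})}$ and hence $f(\Wish^{h - 1}) = \pi_{L}$.
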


\begin{proof}
Let 
$\Wish'_{l_{h}}$
be the wish list of agent $l_{h}$ in which she conceals her outgoing arc to agent $r_{1}$ and reveals her other two.
Notice that the agent digraph
$\Wish^{h} = (\Wish^{h - 1}_{-l_{h}}, \Wish'_{l_{h}})$,
admits the same cycle structure as in the 
$(h, v)$-comb 
instance.
The proof of \Lem{}~\ref{lemma:comb-instance-output} can be applied, verbatim, to the agent digraph $\Wish^{h}$, concluding that 
$f(\Wish^{h}) = \pi_{L}$, thus  
$\utility(l_{h}, f(\Wish^{h})) = \lenUtil(h)$.
Hence, the truthfulness of $f$ ensures that 
\[
\utility(l_{h}, \Wish^{h - 1}) \geq 
\utility(l_{h}, (\Wish^{h - 1}_{-l_{h}}, \Wish'_{l_{h}})) =
\utility(l_{h}, \Wish^{h}) =
\lenUtil(h) \, ,
\]
which can be realized only if 
$\Cycles_{f(\Wish^{h - 1})}$
contains a cycle of length $h$ as
$\lenUtil(h) > \lenUtil(v)$.
Finally, by \Lem{}~\ref{lemma:two-sided-comb-not-output} we know that
$c_{R} \notin \Cycles_{f(\Wish^{h - 1})}$,
thus
$c_{L} \in \Cycles_{f(\Wish^{h - 1})}$,
which means that
$f(\Wish^{h - 1}) = \pi_{L}$.
\end{proof}

We are now ready to establish the lower bound for the non-uniform length function $\lenUtil$, by comparing the social welfare of the output 
$f(\Wish^{h - 1})$
to the social welfare of other exchanges that respect
$\Wish^{h - 1}$. 

\begin{proof}[Proof of \Thm{}~\ref{theorem:non-uniform-lower-bound}]
Let $f$ be a deterministic truthful mechanism to the 
$(k, \lenUtil)$-BE problem.
As we mentioned earlier, 
$\Cycles_{f(\Wish)}$
can not include both horizontal cycles 
$c_{L}, c_{R}$,
so assume that
$c_{R} \notin \Cycles_{f(\Wish)}$.
(If
$c_{R} \in \Cycles_{f(\Wish)}$
and
$c_{L} \notin \Cycles_{f(\Wish)}$
instead, then a symmetric reflection of the arguments in this section will derive the same end result.)
\Lem{}~\ref{lemma:two-sided-comb-output}
ensures that the social welfare of 
$f(\Wish^{h - 1})$ 
is
$h \cdot \lenUtil(h)$.
Consider the following two feasible exchanges that respect the agent digraph
$\Wish^{h - 1}$.\\
(1) The exchange that possess all vertical cycles (of 
$\Wish^{h - 1}$) 
in its trading cycle set, which admits a social welfare of 
$h \cdot v \cdot \lenUtil(v)$, 
and \\
(2)
the exchange for which its trading cycle set consists of the
$h - 1$
leftmost vertical cycles with the addition of the right horizontal cycle $c_{R}$, which admits a social welfare of 
$(h - 1) \cdot v \cdot \lenUtil(v) + h \cdot \lenUtil(h)$.\\
As the ratio between the social welfare of any exchange that respects
$\Wish^{h - 1}$
to the social welfare of the exchange 
$f(\Wish^{h - 1})$ 
establishes a lower bound, we obtain a lower bound of
\[
\max 
\left\{
\frac{h \cdot v \cdot \lenUtil(v)}{h \cdot \lenUtil(h)} ,
\frac{(h - 1) \cdot v \cdot \lenUtil(v) + h \cdot \lenUtil(h)}{h \cdot \lenUtil(h)}
\right\}
=
(h - 1) \frac{v \cdot \lenUtil(v)}{h \cdot \lenUtil(h)}
+
\max
\left\{
\frac{v \cdot \lenUtil(v)}{h \cdot \lenUtil(h)},
1
\right\} \, .
\]
Finally, going over all pairs
$(h, v)$
that satisfy
$\lenUtil(v) < \lenUtil(h)$
yields a lower bound of
\[
\max_{2 \leq \ell < \ell' \leq k \, : \, \lenUtil(\ell) > \lenUtil(\ell')}
\left\{
(\ell - 1) \frac{\ell' \cdot \lenUtil(\ell')}{\ell \cdot \lenUtil(\ell)}
+
\max
\left\{
1,
\frac{\ell' \cdot \lenUtil(\ell')}{\ell \cdot \lenUtil(\ell)}
\right\}
\right\}
\, = \,
\criticalRatio(\lenUtil)
\, ,
\]
thus completing the proof.
\end{proof}

\section{Lower Bound for Local Search Algorithms} 
\label{section:lower-bound-local-search}
In this section we establish an impossibility result for a class of local
search
$(k, \lenUtil)$-maxCGIS
algorithms referred to as \emph{standard local search algorithms}.
This result is based on the construction of $k$-cycle graphs all of whose nodes
admit the same length $k$, hence we obtain the same approximation lower bound
for all length functions $\lenUtil$.
Moreover, all $k$-cycle graphs considered in this section correspond to wish
list vectors, which means that our lower bound applies also to the class of
$(k, \lenUtil)$-BE
mechanisms derived (in the sense of
\Obs{}~\ref{observation:algorithm-implies-mechanism}) from standard local
search algorithms, referred to as \emph{standard local search mechanisms},
thus establishing \Thm{}~\ref{theorem:local-search-lower-bound}.

\sloppy
\begin{definition*}[standard]
A local search algorithm \Alg{} characterized by the improvement rule list
$R = (r_{1}, \dots, r_{|R|})$
is said to be \emph{standard} if it satisfies the following three properties. 
\begin{enumerate}

\item \label{property:expansion}
$r_{1} = r_{E}$. 

\item \label{property:locality}
For every
$n \in \Integers_{> 0}$,
$k$-cycle graph
$G = (V, E) \in \CycleGraphSet_{n}^{k}$,
partition of $V$ into
$V = V_{1} \uplus V_{2}$,
IS $I$ in $G$ and its induced partition
$I_{1} = I \cap V_{1}$,
$I_{2} = I \cap V_{2}$,
and
$1 \leq j \leq |R|$,
if
$r_{j}(G[V_{1}], I_{1}) = I'_{1} \neq \bot$
and
$I_{2} \cup I'_{1}$
is an IS,
then
$r_{j}(G, I) \neq \bot$.

\item \label{property:efficiency}
For every
$\epsilon > 0$,
there exists
$n_{0} = n_{0}(\epsilon) \in \Integers_{> 0}$
such that for every
$n \geq n_{0}$,
$k$-cycle graph
$G = (V, E) \in \CycleGraphSet_{n}^{k}$,
and
ISs $I$ and $I'$ in $G$,
if
$r_{j}(G, I) = I'$
for some
$1 \leq j \leq |R|$,
then
$|I' \triangle I| < \epsilon n$.\footnote{%
The binary operator $\triangle$ denotes the set theoretic symmetric difference.}

\end{enumerate}
\end{definition*}
\par\fussy

On an intuitive level, property \ref{property:locality} requires that if a local
improvement is possible, then so is a global improvement, whereas property
\ref{property:efficiency} requires that the number of nodes who switch their belonging status to the maintained IS in each step is small. 

Recall the agent digraph notion described in \Sect{}~\ref{section:lower-bound-non-uniform} as it will accompany as throughout this section.
The underlying agent digraph that we will take advantage of is the agent
digraph 
$\Wish^{H} \in \WishVecSpace_{k (k - 1) + 1}$
illustrated in Figure~\ref{figure:gadget}, in which all (directed) cycles admit the same length $k$.
In Figure~\ref{figure:gadget}, the agent digraph $\Wish^{H}$ is displayed in the left side while in the right side we present a succinct way to depict this digraph using only dots, who stands for the agents, and long ellipses who stands for the (directed) cycles. 
Let $H$ be the $k$-cycle graph corresponding to $\Wish^{H}$.
We determine the lexicographical order over the nodes in $H$ to be
\[
c_{1} \to 
c_{2} \to
\cdots \to
c_{k}
\, .
\]

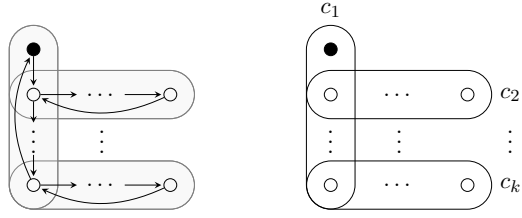
\begin{figure}
\centering
\scalebox{0.8}{
\begin{tikzpicture}
[cycrec/.style={draw=gray, rounded rectangle, minimum width=3.3cm,minimum height=0.8cm}]
\node[cycrec, fill = lightgray!10] at (1.5*0.75,-1*0.75) {};
\node[cycrec, fill = lightgray!10] at (1.5*0.75,-3*0.75) {};
\node[cycrec, rotate=90, fill = lightgray!10] at (0,-1.5*0.75) {};
\node[cycrec] at (1.5*0.75,-1*0.75) {};
\node[cycrec] at (1.5*0.75,-3*0.75) {};
\node[cycrec, rotate=90] at (0,-1.5*0.75) {};

\black (p1) at (0,0) [] {};
\white (c1) at (0,-1*0.75) [] {};
\white (c2) at (3*0.75,-1*0.75) [] {};
\white (d1) at (0,-3*0.75) [] {};
\white (d2) at (3*0.75,-3*0.75) [] {};

\path 
(p1) edge[outarrow] (c1)
(c1) edge[outarrow] (0,-2*0.75+0.25)
(0,-2*0.75-0.25) edge[outarrow] (d1)
(d1) edge[outarrow, bend left = 25] (p1)

(c1) edge[outarrow] (1*0.75,-1*0.75)
(2*0.75,-1*0.75) edge[outarrow] (c2)
(c2) edge[outarrow, bend left = 25] (c1)

(d1) edge[outarrow] (1*0.75,-3*0.75)
(2*0.75,-3*0.75) edge[outarrow] (d2)
(d2) edge[outarrow, bend left = 25] (d1)
;

\node[rotate=90] at (0,-2*0.75) {$\dots$};
\node[rotate=90] at (1.5*0.75,-2*0.75) {$\dots$};
\node at (1.5*0.75,-1*0.75) {$\dots$};
\node at (1.5*0.75,-3*0.75) {$\dots$};
\end{tikzpicture}
}
\hspace*{1cm}
\scalebox{0.8}{
\begin{tikzpicture}
[cycrec/.style={draw=black, rounded rectangle, minimum width=3.3cm,minimum height=0.8cm}]
\node[cycrec] at (1.5*0.75,-0.75) [label=right:$c_{2}$] {};
\node[cycrec] at (1.5*0.75,-3*0.75) [label=right:$c_{k}$] {};
\node[cycrec, rotate=90] at (0,-1.5*0.75) [label=right:$c_{1}$] {};

\black (p1) at (0,0) [] {};
\white (c1) at (0,-1*0.75) [] {};
\white (c2) at (3*0.75,-1*0.75) [] {};
\white (d1) at (0,-3*0.75) [] {};
\white (d2) at (3*0.75,-3*0.75) [] {};

\node[rotate=90] at (0,-2*0.75) {$\dots$};
\node[rotate=90] at (1.5*0.75,-2*0.75) {$\dots$};
\node[rotate=90] at (3.94*0.75,-2*0.75) {$\dots$};
\node at (1.5*0.75,-1*0.75) {$\dots$};
\node at (1.5*0.75,-3*0.75) {$\dots$};

\end{tikzpicture}
}
\caption{\label{figure:gadget}%
The agent digraph $\Wish^{H}$ (left) and its succinct representation (right).}
\end{figure}

For every fixed (large) integer
$N \in \Integers_{> 0}$,
let $G_{N}$ be the $k$-cycle graph corresponding to the agent digraph
$\Wish^{G_{N}} \in \WishVecSpace_{n(N)}$,
depicted (succinctly like $\Wish^{H}$) in
Figure~\ref{figure:local-search-lower-bound},
where
$n(N) = (k - 1)(k + 1 + 2N(k - 1)) + 2$.
We determine the lexicographical order over the nodes in $G_{N}$ to be
\[
c_{1} \to 
c_{2} \to
\cdots \to
c_{k} \to
b \to
A_{1} \to
B_{1} \to
A_{2} \to
B_{2} \to
\cdots \to
A_{N} \to
B_{N} \to
a \, ,
\]
where a set preceding another set implies a priority of all nodes in the first set to those in the second one (and an arbitrary order among nodes in the same set).
Observe that the length function $\lenUtil$ plays no role in the graphs
$H, G_{N}$
as all of their nodes admit the same length $k$.
We shall prove the following lemma, which its negation implies that any truthful standard local search algorithm $\Alg$ for the 
$(k, \lenUtil)$-maxCGIS
problem admits
$\Alg(H) = \{c_{1}\}$.
This will yield the desired lower bound of
$k - 1$
as the optimal IS in $H$ is 
$\{c_{2}, \dots, c_{k}\}$.

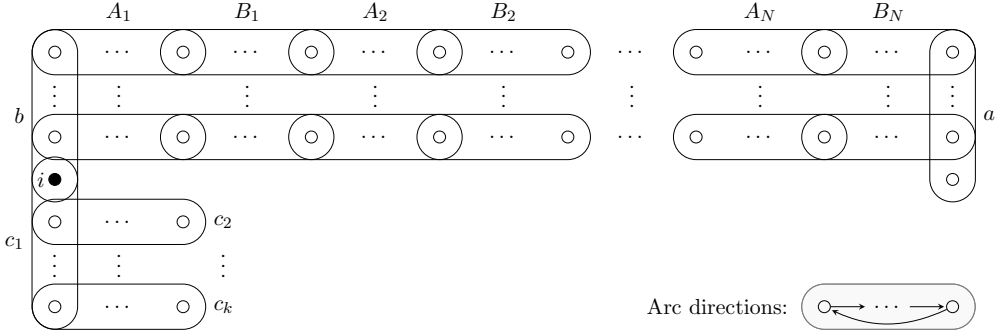
\begin{figure}
\centering
\scalebox{0.75}{
\begin{tikzpicture}
[cycrec/.style={draw=black, rounded rectangle, minimum width=3.3cm,minimum height=0.8cm}]
\node[cycrec] at (1.5*0.75,0.75) {};
\node[cycrec] at (4.5*0.75,0.75) {};
\node[cycrec] at (7.5*0.75,0.75) {};
\node[cycrec] at (10.5*0.75,0.75) {};
\node[cycrec] at (16.5*0.75,0.75) {};
\node[cycrec] at (19.5*0.75,0.75) {};

\node[cycrec] at (1.5*0.75,3*0.75) [label=above:$A_{1}$] {};
\node[cycrec] at (4.5*0.75,3*0.75) [label=above:$B_{1}$] {};
\node[cycrec] at (7.5*0.75,3*0.75) [label=above:$A_{2}$] {};
\node[cycrec] at (10.5*0.75,3*0.75) [label=above:$B_{2}$] {};
\node[cycrec] at (16.5*0.75,3*0.75) [label=above:$A_{N}$] {};
\node[cycrec] at (19.5*0.75,3*0.75) [label=above:$B_{N}$] {};

\node[cycrec] at (1.5*0.75,-0.75) [label=right:$c_{2}$] {};
\node[cycrec] at (1.5*0.75,-3*0.75) [label=right:$c_{k}$] {};

\node[cycrec, rotate=90] at (0,1.5*0.75) [label=above:$b$] {};
\node[cycrec, rotate=90] at (0,-1.5*0.75) [label=above:$c_{1}$] {};
\node[cycrec, rotate=90] at (21*0.75,1.5*0.75) [label=below:$a$] {};

\black (p1) at (0,0) [label={[label distance=-0.7mm]left:$i$}] {};
\white (p2) at (21*0.75,0) [] {};
	
\white (a1) at (0,1*0.75) [] {};
\white (a2) at (3*0.75,1*0.75) [] {};
\white (a3) at (6*0.75,1*0.75) [] {};
\white (a4) at (9*0.75,1*0.75) [] {};

\white (a5) at (12*0.75,1*0.75) [] {};
\white (a6) at (15*0.75,1*0.75) [] {};
\white (a7) at (18*0.75,1*0.75) [] {};
\white (a8) at (21*0.75,1*0.75) [] {};

\white (b1) at (0,3*0.75) [] {};
\white (b2) at (3*0.75,3*0.75) [] {};
\white (b3) at (6*0.75,3*0.75) [] {};
\white (b4) at (9*0.75,3*0.75) [] {};

\white (b5) at (12*0.75,3*0.75) [] {};
\white (b6) at (15*0.75,3*0.75) [] {};
\white (b7) at (18*0.75,3*0.75) [] {};
\white (b8) at (21*0.75,3*0.75) [] {};

\white (c1) at (0,-1*0.75) [] {};
\white (c2) at (3*0.75,-1*0.75) [] {};
\white (d1) at (0,-3*0.75) [] {};
\white (d2) at (3*0.75,-3*0.75) [] {};

\foreach \i in {0,1.5,4.5,7.5,10.5,13.5,16.5,19.5,21}{
\node[rotate=90] at (\i*0.75,2*0.75) {$\dots$};
}

\foreach \i in {1.5,4.5,7.5,10.5,13.5,16.5,19.5}{
\node at (\i*0.75,1*0.75) {$\dots$};
\node at (\i*0.75,3*0.75) {$\dots$};
}

\node[rotate=90] at (0,-2*0.75) {$\dots$};
\node[rotate=90] at (1.5*0.75,-2*0.75) {$\dots$};
\node[rotate=90] at (3.94*0.75,-2*0.75) {$\dots$};
\node at (1.5*0.75,-1*0.75) {$\dots$};
\node at (1.5*0.75,-3*0.75) {$\dots$};

\node[cycrec, draw=gray, fill = lightgray!10] at (19.5*0.75,-3*0.75) {};
\node[cycrec, draw=gray] at (19.5*0.75,-3*0.75) {};

\white (e1) at (18*0.75,-3*0.75) [] {};
\white (e2) at (21*0.75,-3*0.75) [] {};

\path
(e1) edge[outarrow] (19*0.75,-3*0.75)
(20*0.75,-3*0.75) edge[outarrow] (e2)
(e2) edge[outarrow, bend left = 25] (e1)
;

\node at (19.5*0.75,-3*0.75) {$\dots$};
\node at (15.5*0.75,-3*0.75) {Arc directions:};

\end{tikzpicture}
}
\caption{\label{figure:local-search-lower-bound}%
The agent digraph $\Wish^{G_{N}}$ represented in the succinct form.
Lower case letters represent single cycles (except for $i$ who represent the filled agent) and upper case letters represent the set of all horizontal cycles beneath them.
The arc directions for horizontal cycles are described in the bottom right corner, and we set the arc directions for vertical cycles to be a 90 degrees clockwise rotation of them.}
\end{figure}

\begin{lemma}\label{lemma:standard-not-truthful}
Consider a standard local search algorithm \Alg{} for the 
$(k, \lenUtil)$-maxCGIS
problem, characterized by the improvement rule list
$R = (r_{1}, \dots, r_{|R|})$.
If 
$r_{j}(H, \{c_{1}\}) \ne \bot$
for some
$1 \leq j \leq |R|$,
then $f$ is not truthful.
\end{lemma}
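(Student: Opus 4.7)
The plan is to derive a contradiction by exhibiting a profitable deviation for agent $i$ (the filled agent in $G_N$). Suppose $\Alg$ is truthful and fix any $j$ with $r_j(H, \{c_1\}) = I'_1 \neq \bot$; since $I'_1$ is an IS in $H$ with $\Weight(I'_1) > \Weight(c_1) = k \lenUtil(k)$, we have $I'_1 \subseteq \{c_2, \ldots, c_k\}$. I take $N$ large enough so that property~\ref{property:efficiency} forbids any single-step symmetric difference of size $\Theta(N)$. Then I compare $\utility(i, \Alg(G_N))$ to $\utility(i, \Alg(G'_N))$, where $G'_N$ is the $k$-cycle graph obtained when agent $i$ conceals her out-arc that belongs to cycle $c_1$, so that $c_1$ disappears while $b$ survives.

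Write $I^{*} = \Alg(G_N)$. My first sub-goal is to show $c_1 \notin I^{*}$. Assume otherwise; then $I^{*}$ restricted to the node set of the $H$-subgraph equals $\{c_1\}$. Invoking property~\ref{property:locality} with $V_1$ equal to the nodes of $H$ and $V_2$ its complement in $G_N$, the hypothesis $r_j(H, \{c_1\}) = I'_1 \neq \bot$, together with the observation that $I'_1 \subseteq \{c_2, \ldots, c_k\}$ shares no agents with any $V_2$-node, so that $(I^{*} \cap V_2) \cup I'_1$ remains an IS, gives $r_j(G_N, I^{*}) \neq \bot$ and contradicts the termination of $\Alg$ at $I^{*}$. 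My second sub-goal is to show $b \notin I^{*}$. By property~\ref{property:expansion} the algorithm begins with $r_E$; the lex order forces $r_E$ to insert $c_1$ first, blocking all of $c_2, \ldots, c_k$ and $b$, and then to fill the top chain in the $A$-direction, yielding $\{c_1\} \cup A_1 \cup \cdots \cup A_N \cup \{a\}$. Later rule applications may swap $c_1$ for $I'_1$ and complete the set to $\{c_2, \ldots, c_k\}$, but $A_1$ still blocks $b$. For $b$ to enter, some rule must strictly improve while removing $A_1$; since $\Weight(A_1) = (k - 1) k \lenUtil(k) > k \lenUtil(k) = \Weight(b)$, the improvement must also insert a $B_t$, every such insertion conflicts with $A_{t + 1}$, and this cascades into toggling all of $A_1, \ldots, A_N, a$ against $b, B_1, \ldots, B_N$ in a single step. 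That symmetric difference is $\Theta(N)$, violating property~\ref{property:efficiency} for large $N$. Hence $b \notin I^{*}$ and $\utility(i, I^{*}) = 0$.

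In $G'_N$ the cycle $c_1$ is absent, so $r_E$ inserts (in lex order) $c_2, \ldots, c_k$ (pairwise non-adjacent), then $b$ (whose only cycle-graph neighbors are $c_1$ and $A_1$, both absent from the current IS), then the chain in the $B$-direction, terminating at $J = \{c_2, \ldots, c_k, b\} \cup B_1 \cup \cdots \cup B_N$. The mirror of the previous cascade argument shows that no subsequent rule can swing $J$ back to the $A$-side without a $\Theta(N)$-node change, so $\Alg(G'_N) = J$. Since $b \in J$, $\utility(i, J) = \lenUtil(k) > 0 = \utility(i, I^{*})$, contradicting the truthfulness of the mechanism derived from $\Alg$. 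The hard part will be the cascade argument itself: one must rigorously verify that every strict improvement capable of bringing $b$ into the IS (in the $G_N$-case) or removing $b$ from it (in the $G'_N$-case) toggles $\Omega(N)$ nodes, because every ``partial'' move within the chain is weight-non-increasing. A secondary subtlety is that the rule $\Alg$ actually fires at any given step is the smallest-index applicable one --- not necessarily $r_j$ --- so properties~\ref{property:expansion} and \ref{property:locality} are used only to certify applicability, never to identify which rule is chosen.
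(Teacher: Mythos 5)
Your argument that $c_{1} \notin \Alg(G_{N})$ is correct and in fact cleaner than the paper's: applying property~\ref{property:locality} directly to the final output (with $V_{1}$ the nodes of $H$, using that any $I'_{1}$ with $\Weight(I'_{1}) > \Weight(\{c_{1}\})$ lies in $\{c_{2}, \dots, c_{k}\}$ and shares no agents with the upper part) immediately contradicts termination, whereas the paper argues through the intermediate IS and weight monotonicity. The genuine gap is in your second sub-goal, $b \notin \Alg(G_{N})$. Your contradiction rests on the claim that any \emph{single} strict improvement bringing $b$ in must toggle $\Theta(N)$ nodes, and you yourself flag that this cascade claim is unverified. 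As stated it does not follow: the algorithm is not forced to perform the $A$-to-$B$ conversion in one step. After the expansion phase it can, in small strictly improving steps, finance the removal of $A_{1}$-nodes and the creation of $B$-prefixes out of the lower cycles $c_{2}, \dots, c_{k}$ (e.g.\ remove $c_{1}, A_{1}^{(1)}, A_{2}^{(1)}$ and add $c_{2}, c_{3}, c_{4}, B_{1}^{(1)}$ when $k \geq 5$), so a later $b$-adding step need not itself be large. What rules this out is a budget argument you never invoke: every node has weight $k\lenUtil(k)$, the optimum exceeds $\Weight(\Alg^{t}(G_{N}))$ by only $(k-2)\,k\lenUtil(k)$, hence at most $k-2$ further iterations occur; combining this with property~\ref{property:efficiency}'s per-step bound (for $\epsilon$ chosen as a suitable constant depending on $k$, since $n(N) = \Theta(N)$) caps the \emph{accumulated} symmetric difference from $\Alg^{t}(G_{N})$ by less than $N$, while a pigeonhole count shows that any IS containing $b$ of weight strictly above $\Weight(\Alg^{t}(G_{N}))$ must contain a complete row of $B$-cycles and therefore differs from $\Alg^{t}(G_{N})$ in at least $N$ nodes. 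This iteration-count-plus-accumulation argument is exactly the paper's route and is the missing piece of your proof; without it, property~\ref{property:efficiency} alone cannot exclude a gradual drift toward a $b$-containing output.

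Two smaller remarks. First, for the deviation instance $G_{N}[V_{N} - \{c_{1}\}]$ you do not need any ``mirror cascade'': the IS $\{c_{2}, \dots, c_{k}, b\} \cup B_{1} \cup \dots \cup B_{N}$ reached by the expansion rule is a maximum-weight IS of that graph, so no improvement rule (all of which must strictly increase weight) can apply and the algorithm halts there. Second, your remark that the rule fired at each step need not be $r_{j}$ is well taken and consistent with the paper, which uses $r_{j}$ only to certify that the algorithm cannot halt.
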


\begin{proof}
Let \Alg{} be a standard local search algorithm for the 
$(k, \lenUtil)$-maxCGIS
problem, characterized by the improvement rule list
$R = (r_{1}, \dots, r_{|R|})$,
for which
$r_{j}(H, \{c_{1}\}) \ne \bot$
for some
$1 \leq j \leq |R|$.
Let
$\epsilon \in (0, 1)$,
and let
$n_{0} = n_{0}(\epsilon)$
be the one promised in Property~\ref{property:efficiency} for \Alg{} and $\epsilon$.
Pick
$N = k \cdot n_{0}$,
and consider the $k$-cycle graph
$G_{N} = (V_{N}, E_{N})$.
Observe that the number of nodes in $G_{N}$ satisfies
$n(N) > n_{0}$.
We shall prove that agent $i$ improves her utility by hiding the node $c_{1}$ in $G_{N}$.

\sloppy
Assume first that agent $i$ does not conceal any node in $G_{N}$.
In that case, we know that property \ref{property:expansion} states that 
$r_{1} = r_{E}$,
and recalling the determined lexicographical order over the nodes in $G_{N}$, it follows that
$\Alg^{t}(G_{N}) = \{c_{1}\} \cup A_{1} \cup \dots \cup A_{N} \cup \{a\}$
for iteration 
$t = N (k - 1) + 2$.
By plugging
$G = G_{N}$,
$V_{1} = \{c_{1}, c_{2}, \dots, c_{k}\}$,
$I = \Alg^{t}(G_{N})$
into property~\ref{property:locality},
and by the assumption that
$r_{j}(G_{N}[\{c_{1}, c_{2}, \dots, c_{k}\}], \{c_{1}\}) = 
r_{j}(H, \{c_{1}\}) \ne \bot$,
we deduce that
$r_{j}(G_{N}, \Alg^{t}(G_{N})) \ne \bot$.
Since
$A_{1} \cup \dots \cup A_{N} \cup \{a\}$
is an optimal IS in the upper part of $G_{N}$ (the whole graph without the nodes $c_{1}, \dots ,c_{k}$),
it follows that
$c_{1} \notin \Alg^{t + 1}(G_{N})$
and that there exists  
$c_{j} \in \Alg^{t + 1}(G_{N})$,
for some
$2 \leq j \leq k$,
which implies that
$c_{1} \notin \Alg(G_{N})$.
The node $b$ can be included in $\Alg^{t'}(G_{N})$ for
$t' > t$
only if a complete horizontal row of nodes in the upper part of $G_{N}$ (that corresponds to a complete row of horizontal cycles in $\Wish^{G_{N}}$) alternate their belonging status to the maintained IS from $\Alg^{t}(G_{N})$ to $\Alg^{t'}(G_{N})$.
As each improvement rule improves the weight of the maintained IS by $\lenUtil(k)$, the maximum remaining iterations after iteration $t$ is 
$k - 2$.
Property \ref{property:efficiency} ensures that at most 
$\epsilon \cdot n_{0}$
nodes may switch their belonging status to $\Alg^{t'}(G_{N})$ at every iteration
$t' > 0$.
Hence, at most
$(k - 2) \cdot \epsilon \cdot n_{0} < N$
nodes may switch their belonging status to the maintained IS in the application of \Alg{} to $G_{N}$ after iteration $t$,
implying that
$b \notin \Alg(G_{N})$.
Therefore,
as both
$c_{1} \notin \Alg(G_{N})$
and
$b \notin \Alg(G_{N})$
we conclude that 
$\utility(i, \Alg(G_{N})) = 0$.
\par\fussy

We shall now examine the behavior of \Alg{} on the input 
$G_{N}[V_{N} - \{c_{1}\}]$,
in which agent $i$ hides the node $c_{1}$.
Since 
$r_{1} = r_{E}$,
the lexicographical order over the nodes in $G_{N}[V_{N} - \{c_{1}\}]$ (induced by that of $G_{N}$ by omitting $c_{1}$ from the top of the list) guarantees that
$\Alg^{t}(G_{N}[V_{N} - \{c_{1}\}]) = 
\{c_{2}, \dots, c_{k}\} \cup \{b\} \cup B_{1} \cup \dots \cup B_{N}$
for iteration 
$t = N (k - 1) + k$.
As 
$\Alg^{t}(G_{N}[V_{N} - \{c_{1}\}])$
is an optimal IS in $G_{N}[V_{N} - \{c_{1}\}]$, it follows that
$\Alg(G_{N}[V_{N} - \{c_{1}\}]) = \Alg^{t}(G_{N}[V_{N} - \{c_{1}\}])$.
Hence, 
since 
$b \in \Alg(G_{N}[V_{N} - \{c_{1}\}])$, we know that
$\utility(i, \Alg(G_{N}[V - \{c_{1}\}])) = \lenUtil(k)$.
Therefore, agent $i$ has an incentive to conceal the node $\{c_{1}\}$ that she partakes in, which means that \Alg{} is not truthful.
\end{proof}

\begin{corollary}
The 
$(k, \lenUtil)$-maxCGIS
problem does not admit a truthful standard local search algorithm with approximation ratio strictly smaller than
$k - 1$
for every length bound 
$k \geq 3$
and length function 
$\lenUtil$.
\end{corollary}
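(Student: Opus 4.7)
The plan is to read the corollary directly off of Lemma~\ref{lemma:standard-not-truthful} by inspecting the behavior of any truthful standard local search algorithm on the $k$-cycle graph $H$ of Figure~\ref{figure:gadget}. Because all nodes of $H$ have length $k$, the length function $\lenUtil$ only scales weights uniformly and plays no role in the argument, which is why the same $k-1$ lower bound will hold for every choice of $\lenUtil$.

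First I will verify the structure of $H$. The agent digraph $\Wish^H$ has one central agent $i$ sitting in the ``vertical'' cycle $c_1$, while each of the ``horizontal'' cycles $c_2, \dots, c_k$ shares exactly one (white) agent with $c_1$ and is pairwise vertex-disjoint from the other horizontal cycles. Consequently $H$ is a star with $c_1$ at its center and $c_2, \dots, c_k$ as its leaves, so the unique maximum IS in $H$ is $O = \{c_2, \dots, c_k\}$ with weight $\Weight(O) = (k-1) \cdot k \cdot \lenUtil(k)$.

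Next I will track the execution of an arbitrary standard local search algorithm $\Alg$, characterized by a list $R = (r_1, \dots, r_{|R|})$, on $H$ under the prescribed lexicographic order $c_1 \to c_2 \to \cdots \to c_k$. Property~\ref{property:expansion} gives $r_1 = r_E$, so the first iteration yields $\Alg^1(H) = r_E(H, \emptyset) = \{c_1\}$. The star structure then guarantees $r_E(H, \{c_1\}) = \bot$, since $c_1$ is adjacent to every other node of $H$. Hence, if $\Alg$ ever leaves the singleton $\{c_1\}$ while processing $H$, it must be through a non-$\bot$ application of some rule $r_j$ with $j \geq 2$ to $(H, \{c_1\})$.

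Finally I will close the loop using Lemma~\ref{lemma:standard-not-truthful}. Suppose $\Alg$ is truthful and admits approximation ratio strictly smaller than $k-1$. Since $\Weight(O)/\Weight(\{c_1\}) = k-1$, the algorithm cannot terminate at $\{c_1\}$ on $H$, so some $r_j(H, \{c_1\}) \neq \bot$ must occur. But then Lemma~\ref{lemma:standard-not-truthful} declares $\Alg$ to be non-truthful, contradicting the hypothesis. Thus every truthful standard local search algorithm outputs $\Alg(H) = \{c_1\}$ and its approximation ratio is at least $k-1$. I expect no serious obstacle here: the heavy lifting is already done by Lemma~\ref{lemma:standard-not-truthful}, and what remains is just the straightforward structural inspection of $H$ together with the observation that $H$ corresponds to the wish list vector $\Wish^H$, so the bound transfers to standard local search mechanisms for the $(k, \lenUtil)$-BE problem and establishes \Thm{}~\ref{theorem:local-search-lower-bound}.
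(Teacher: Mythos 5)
Your proposal is correct and follows essentially the same route as the paper: invoke Lemma~\ref{lemma:standard-not-truthful} (in contrapositive form) to force any truthful standard local search algorithm to output $\{c_{1}\}$ on $H$, then compare against the optimal IS $\{c_{2},\dots,c_{k}\}$ to get the ratio $k-1$, noting that $\lenUtil$ is irrelevant since all nodes of $H$ have length $k$. The extra details you supply (the star structure of $H$, why $r_{E}$ and the lexicographic order pin the intermediate IS at $\{c_{1}\}$) are just explicit versions of steps the paper leaves implicit, and your weight $\Weight(O)=(k-1)\cdot k\cdot\lenUtil(k)$ is in fact the more precise count, with the same ratio.
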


\begin{proof}
Let \Alg{} be a standard truthful local search algorithm to the
$(k, \lenUtil)$-maxCGIS
problem.
The negation of \Lem{}~\ref{lemma:standard-not-truthful} ensures that 
$\Alg(H) = \{c_{1}\}$,
thus
$\Weight(\Alg(H)) = \lenUtil(k)$.
However, the optimal IS for $H$ is
$O_{H} = \{c_{2}, \dots, c_{k}\}$
for which 
$\Weight(O_{H}) = (k - 1) \cdot \lenUtil(k)$.
The ratio between the weight of
$O_{H}$
to the weight of 
$\Alg(H)$
establishes the desired lower bound.
\end{proof}

\section{Additional Related Work}
\label{section:related-work}

\subsection*{%
Barter Exchange from a Game Theoretic View Point}
The game theoretic study of barter markets dates back to the seminal
work of Shapley and Scarf \cite{ShapleyS1974cores} that considers a model
where each agent is associated with a preference list over all other agents.
Shapley and Scarf introduced the classic top trading cycle (TTC) mechanism for
this setting whose truthfulness was established by Roth
\cite{roth1982incentive}.

The trend of looking at the kidney exchange market through a rigorous game
theoretic lens was initiated by Roth et al.~\cite{roth2004kidney}.
In contrast to the model considered in the current paper, the one studied in
\cite{roth2004kidney} includes (strict) preferences over the agent wish lists
as well as non-cyclic trades.
Another major difference is that in \cite{roth2004kidney}, Roth et al.\ allow
for trading cycles (and chains) of unbounded length.

As discussed in \Sect{}~\ref{section:introduction}, the model considered in
the current paper was introduced by Roth et al.\ in \cite{roth2005pairwise}.
In the language of our BE problem, each agent in \cite{roth2005pairwise}
corresponds to a pair consisting of a patient that needs a kidney transplant
and an incompatible donor that is willing to donate a healthy kidney. 
The inclusion of agent $i$ in the wish list of agent $j$ corresponds in this
regard to a compatibility between the healthy donor associated with agent $i$
and the patient associated with agent $j$.
Among other results, Roth et al.\ reduce (implicitly) the task of designing a
truthful Pareto efficient mechanism for the case of trading cycles of length
$k = 2$
to that of constructing a maximum size matching in an undirected graph.
They also prove that in the scope of deterministic mechanisms, Pareto
efficiency is equivalent to maximizing the social welfare, which leads to the
$(k, \lenUtil_{k}^{u})$-BE
problem defined in the current paper.
Hatfield \cite{hatfield2005pairwise} extends some of the arguments of
\cite{roth2005pairwise} to trading cycles of length at most
$k \geq 3$
(among other constraints), which enables the development of an optimal, alas
computationally inefficient, truthful mechanism for the
$(k, \lenUtil_{k}^{u})$-BE
problem.

In \cite{roth2007efficient}, Roth et al.\ analyze the social welfare gain
obtained from increasing the trading cycles length bound $k$ assuming that the
donation compatibility is determined by the $4$ possible blood types.
They prove that the increase from
$k = 2$
to
$k = 3$
leads to a dramatic social welfare gain and that this gain is substantially
larger than the one obtained by further increasing the trading cycle length
bound from
$k = 3$
to
$k = 4$.
Moreover, any further increase in $k$ does not improve the optimal social
welfare.

We emphasize that despite the extensive treatment that the
$(k, \lenUtil_{k}^{u})$-BE
problem received in the last two decades, the current paper is the first to
develop efficient truthful mechanisms for this problem with a non-trivial
approximation ratio.
To the best of our knowledge, the extension to non-uniform length functions
$\lenUtil \neq \lenUtil_{k}^{u}$
has not been studied so far.

Regarding the
$(k, \lenUtil_{k}^{u})$-maxCGIS
problem, the notion of INPA algorithms, which is pivotal to the truthfulness
proofs in the current paper, is closely related to the notion of
``consistent'' mechanisms, introduced in \cite{hatfield2005pairwise}.
Consistency states that if an exchange is chosen from some set of feasible
exchanges, then it is also chosen from any subset of that set.
From the viewpoint of the 
$(k, \lenUtil_{k}^{u})$-maxCGIS
problem, INPA is a weak version of consistency as it merely suppresses
manipulations of non-partaking agents.
Other related concepts that limit the effect of individual manipulations on
the group outcome (alas in the context of binary outcomes) include the notions
of ``bitonic'' mechanisms \cite{mu2008truthful} and ``XBONE'' and
``non-bossy'' mechanisms \cite{akbarpour2020investment}.

Abbassi et al.~\cite{abbassi2015exchange} introduced a generalization of our
$(k, \lenUtil_{k}^{u})$-BE
problem, where each agent holds a set of items that she wishes to exchange for
certain items held by the other agents.
As in our BE problem, the exchanges are carried out via trading cycles,
however, an agent may now participate in multiple trading cycles and her
utility is the number of wished items she obtains.
The main positive results of Abbassi et al.\ are
(1)
an efficient optimal truthful mechanism for the problem version with trading
cycles of unbounded length;
and
(2)
an efficient truthful mechanism that approximates the optimal solution within
ratio $8$ for pairwise exchanges, i.e., for trading cycles of length
$k = 2$.
The version with trading cycles of length at most
$k \geq 3$,
that is closer to the focus of the current paper, is treated in
\cite{abbassi2015exchange} only on the negative side, establishing
approximability lower bounds both for the combinatorial optimization problem
and for the task of designing truthful mechanisms. 

In a related problem, studied by Ashlagi et al.\
\cite{ashlagi2015mix, ashlagi2014free},
the strategic entities are hospitals, each responsible for a subset of
patient-donor pairs.
The objective of a hospital is to maximize the number of patients under its
responsibility that receive a donation through pairwise exchanges
(i.e.,
$k = 2$)
and to this end, the hospital may choose to match some of their patient-donor
pairs internally, rather than revealing them to the mechanism so that they
become available for inter-hospital matches.
Another variant of the barter exchange setting studied recently by Dickerson et
al.~\cite{dickerson2019failure} is the one where each arc in the agent digraph
is associated with a failure probability.

Also related to the
$(k, \lenUtil_{k}^{u})$-maxCGIS
problem are the \emph{hedonic coalition formation} games in which a mechanism
partitions the players into disjoint sets (coalitions) and each player's
utility is fully determined by the other members of her
coalition~\cite{woeginger2013core}.
Specifically, our maxCGIS problem resembles the case of \emph{anonymous
preferences}, where the players are indifferent about coalitions of the same
size~\cite{woeginger2013core}.
The (informally defined)
$(\infty, \lenUtil_{\infty}^{u})$-maxCGIS
problem falls under the family of \emph{group activity selection problems
(GASPs)}~\cite{darmann2012group}.
The goal in GASP is to assign agents to activities in a ``good way'', where
agents have preferences over
$\langle \text{activity}, \text{group size} \rangle$
pairs.
Specifically, the
$(\infty, \lenUtil_{\infty}^{u})$-maxCGIS
problem corresponds to the special case in which agents are indifferent
between ``accepted" coalitions'' and there is a single $\infty$-copyable
activity.

The main difference between the maxCGIS problem and the ones mentioned in the
previous paragraph is our hard bound on the coalition size (or, using the
maxCGIS terminology, the number of agents included in the label of each node).
A setting that includes such a hard bound was recently investigated in
\cite{DarmannDDS2022} in the context of \emph{simplified GASPs}, where
agents only express their preferences over the set of activities with
restrictions on the number of participants in each activity.
Notice that the simplified GASP setting differs from our maxCGIS problem as
the agents in the latter problem admit (cardinal) preferences over the
coalitions (based on their size), rather than the activities.

\subsection*{%
Barter Exchange from a Combinatorial Optimization View Point }
The family of BE problems turns out to admit an interesting combinatorial
structure even when one focuses merely on the construction of a social welfare
maximizing exchange, putting truthfulness considerations aside.
While the
$k = 2$
case reduces to maximum size matching, and hence admits an efficient
algorithm, Abraham et al.~\cite{abraham2007clearing} proved that the
$(k, \lenUtil)$-BE
problem is NP-hard for every length function $\lenUtil$ already for
$k = 3$.
Biro et al.~\cite{biro2009maximum} enhanced this hardness result and proved
that the problem is also APX-hard.
An explicit approximation lower bound of
$698 / 697$
was established by Luo et al.~\cite{LuoTWZ2016approximation}.
We note that the original constructions in
\cite{abraham2007clearing, biro2009maximum, LuoTWZ2016approximation}
are designed for the case of uniform length functions
$\lenUtil = \lenUtil_{k}^{u}$,
however the hard instances they use can be adjusted so that all trading cycles
in $\Cycles_{\Wish}^{k}$ are of length (exactly) $k$, which means that
the problem becomes oblivious to the length function $\lenUtil$.

The instances of the
$(k, \lenUtil)$-BE
problem where all trading cycles in $\Cycles_{\Wish}^{k}$ are of length $k$ is
a special case of the classic $k$-set packing ($k$-SP) problem.
As proved by Hazan et al.\
\cite{hazan2003hardness, hazan2006complexity},
this problem is hard to approximate within ratio better then
$\Omega(\frac{k}{\ln k})$.
It is not clear if this lower bound can be extended also to our
$(k, \lenUtil)$-BE
problem.

On the positive side, it is well known that the greedy approach leads to a
$k$-approximation of $k$-SP.
A folklore result (see, e.g., \cite{halldorsson1998approximations}) states that
the local search algorithm relying on the $2$-for-$1$ policy provides a
$\frac{k + 1}{2}$-approximation.
As proved in \cite{hurkens1989size}, the approximation ratio improves to
$\frac{k}{2} + \varepsilon$
if the $2$-for-$1$ policy is replaced by the $t$-local search policy.
More recently, Sviridenko and Ward \cite{sviridenko2013large} and Cygan
\cite{cygan2013improved} used more sophisticated local search algorithms to
further improve the approximation ratio upper bound to
$\frac{k + 2}{3}$
and
$\frac{k + 1}{3} + \varepsilon$,
respectively.

As discussed in \Sect{}~\ref{section:preliminaries} (see \Obs{}\
\ref{observation:algorithm-implies-mechanism} and
\ref{observation:cycle-graph-structure}), the combinatorial optimization
aspect of the
$(k, \lenUtil)$-BE
problem reduces to the problem of constructing a maximum weight IS in
$(k + 1)$-claw
free graphs (this is true also for the $k$-SP problem).
This problem has received considerable attention and several approximation
algorithms were developed for it.
Specifically, Arkin and Hassin \cite{arkin1998local} analyze the $t$-local
search policy in
$(k + 1)$-claw
free graphs and prove that it is guaranteed to approximate the maximum weight
IS within a
$k - 1 + \varepsilon$
ratio.
Biro et al.~\cite{biro2009maximum}, who investigated the maximum weight IS
problem in
$(k + 1)$-claw
free graphs specifically in the context of the
$(k, \lenUtil_{k}^{u})$-BE
problem, obtain the same approximation guarantee with another local search
algorithm which is simpler to analyze.
The approximation ratio upper bound was improved to
$\frac{2 (k + 1)}{3} + \varepsilon$
by Chandra and Halld\'{o}rsson
\cite{chandra2001greedy}
and to
$\frac{k + 1}{2} + \varepsilon$
by Berman \cite{berman2000d};
these two works also use local search algorithm alas with slightly more
advanced policies.

It is important to point out that while there is a resemblance between the
improvement rules used by our local search algorithm \LocalSearchAlg{} and some
of the existing local search policies, our improvement rules must take into
account the strings associated with each node, a feature that becomes
effective only when one aims for a truthful algorithm.
Consequently, the approximation ratio analysis of \LocalSearchAlg{} (see
\Thm{}~\ref{theorem:uniform-main-algorithm-approximation}) follows a
different path in comparison to the analyses used in the existing literature.
Moreover, the algorithms developed in
\cite{arkin1998local, berman2000d, chandra2001greedy, biro2009maximum}
belong to the family of standard local search algorithms, hence, by
\Thm{}~\ref{theorem:local-search-lower-bound}, there is no hope in relying on
them to obtain a truthful algorithm with approximation ratio better than
$k - 1$.

\clearpage
\bibliographystyle{alpha}
\bibliography{references}

\newcommand{\etalchar}[1]{$^{#1}$}
\begin{thebibliography}{AKLM21}

\bibitem[ABS07]{abraham2007clearing}
David~J. Abraham, Avrim Blum, and Tuomas Sandholm.
\newblock Clearing algorithms for barter exchange markets: enabling nationwide
  kidney exchanges.
\newblock In Jeffrey~K. MacKie{-}Mason, David~C. Parkes, and Paul Resnick,
  editors, {\em Proceedings 8th {ACM} Conference on Electronic Commerce
  (EC-2007), San Diego, California, USA, June 11-15, 2007}, pages 295--304.
  {ACM}, 2007.

\bibitem[AFKP15]{ashlagi2015mix}
Itai Ashlagi, Felix Fischer, Ian~A Kash, and Ariel~D Procaccia.
\newblock Mix and match: A strategyproof mechanism for multi-hospital kidney
  exchange.
\newblock {\em Games and Economic Behavior}, 91:284--296, 2015.

\bibitem[AH98]{arkin1998local}
Esther~M Arkin and Refael Hassin.
\newblock On local search for weighted k-set packing.
\newblock {\em Mathematics of Operations Research}, 23(3):640--648, 1998.

\bibitem[AHM15]{abbassi2015exchange}
Zeinab Abbassi, Nima Haghpanah, and Vahab Mirrokni.
\newblock Exchange market mechanisms without money.
\newblock In {\em Web and Internet Economics: 11th International Conference,
  WINE 2015, Amsterdam, The Netherlands, December 9-12, 2015, Proceedings},
  volume 9470, page 429, 2015.

\bibitem[AKLM21]{akbarpour2020investment}
Mohammad Akbarpour, Scott~Duke Kominers, Shengwu Li, and Paul~R. Milgrom.
\newblock Investment incentives in near-optimal mechanisms.
\newblock In P{\'{e}}ter Bir{\'{o}}, Shuchi Chawla, and Federico Echenique,
  editors, {\em {EC} '21: The 22nd {ACM} Conference on Economics and
  Computation, Budapest, Hungary, July 18-23, 2021}, page~26. {ACM}, 2021.

\bibitem[AR14]{ashlagi2014free}
Itai Ashlagi and Alvin~E Roth.
\newblock Free riding and participation in large scale, multi-hospital kidney
  exchange.
\newblock {\em Theoretical Economics}, 9(3):817--863, 2014.

\bibitem[Ber00]{berman2000d}
Piotr Berman.
\newblock A \emph{d}/2 approximation for maximum weight independent set in
  \emph{d}-claw free graphs.
\newblock In Magn{\'{u}}s~M. Halld{\'{o}}rsson, editor, {\em Algorithm Theory -
  {SWAT} 2000, 7th Scandinavian Workshop on Algorithm Theory, Bergen, Norway,
  July 5-7, 2000, Proceedings}, volume 1851 of {\em Lecture Notes in Computer
  Science}, pages 214--219. Springer, 2000.

\bibitem[BMR09]{biro2009maximum}
P{\'e}ter Biro, David~F Manlove, and Romeo Rizzi.
\newblock Maximum weight cycle packing in directed graphs, with application to
  kidney exchange programs.
\newblock {\em Discrete Mathematics, Algorithms and Applications},
  1(04):499--517, 2009.

\bibitem[CH01]{chandra2001greedy}
Barun Chandra and Magn{\'u}s~M Halld{\'o}rsson.
\newblock Greedy local improvement and weighted set packing approximation.
\newblock {\em Journal of Algorithms}, 39(2):223--240, 2001.

\bibitem[Cyg13]{cygan2013improved}
Marek Cygan.
\newblock Improved approximation for 3-dimensional matching via bounded
  pathwidth local search.
\newblock In {\em 54th Annual {IEEE} Symposium on Foundations of Computer
  Science, {FOCS} 2013, 26-29 October, 2013, Berkeley, CA, {USA}}, pages
  509--518. {IEEE} Computer Society, 2013.

\bibitem[DDDS22]{DarmannDDS2022}
Andreas Darmann, Janosch D\"{o}cker, Britta Dorn, and Sebastian
  Schneckenburger.
\newblock Simplified group activity selection with group size constraints.
\newblock {\em Int. J. Game Theory}, 51(1):169–212, 2022.

\bibitem[DEK{\etalchar{+}}12]{darmann2012group}
Andreas Darmann, Edith Elkind, Sascha Kurz, J{\'e}r{\^o}me Lang, Joachim
  Schauer, and Gerhard Woeginger.
\newblock Group activity selection problem.
\newblock In {\em International Workshop on Internet and Network Economics},
  pages 156--169. Springer, 2012.

\bibitem[DPS19]{dickerson2019failure}
John~P Dickerson, Ariel~D Procaccia, and Tuomas Sandholm.
\newblock Failure-aware kidney exchange.
\newblock {\em Management Science}, 65(4):1768--1791, 2019.

\bibitem[Hal98]{halldorsson1998approximations}
Magn{\'{u}}s~M. Halld{\'{o}}rsson.
\newblock Approximations of independent sets in graphs.
\newblock In Klaus Jansen and Dorit~S. Hochbaum, editors, {\em Approximation
  Algorithms for Combinatorial Optimization, International Workshop APPROX'98,
  Aalborg, Denmark, July 18-19, 1998, Proceedings}, volume 1444 of {\em Lecture
  Notes in Computer Science}, pages 1--13. Springer, 1998.

\bibitem[Hat05]{hatfield2005pairwise}
John~William Hatfield.
\newblock Pairwise kidney exchange: Comment.
\newblock {\em Journal of Economic Theory}, 125(2):189--193, 2005.

\bibitem[Hom]{HomeExchange}
{Home Exchange}.
\newblock \url{https://www.homeexchange.com/}.

\bibitem[HS89]{hurkens1989size}
Cor A.~J. Hurkens and Alexander Schrijver.
\newblock On the size of systems of sets every t of which have an sdr, with an
  application to the worst-case ratio of heuristics for packing problems.
\newblock {\em SIAM Journal on Discrete Mathematics}, 2(1):68--72, 1989.

\bibitem[HSS03]{hazan2003hardness}
Elad Hazan, Shmuel Safra, and Oded Schwartz.
\newblock On the complexity of approximating k-dimensional matching.
\newblock In Sanjeev Arora, Klaus Jansen, Jos{\'{e}} D.~P. Rolim, and Amit
  Sahai, editors, {\em Approximation, Randomization, and Combinatorial
  Optimization: Algorithms and Techniques, 6th International Workshop on
  Approximation Algorithms for Combinatorial Optimization Problems, {APPROX}
  2003 and 7th International Workshop on Randomization and Approximation
  Techniques in Computer Science, {RANDOM} 2003, Princeton, NJ, USA, August
  24-26, 2003, Proceedings}, volume 2764 of {\em Lecture Notes in Computer
  Science}, pages 83--97. Springer, 2003.

\bibitem[HSS06]{hazan2006complexity}
Elad Hazan, Shmuel Safra, and Oded Schwartz.
\newblock On the complexity of approximating k-set packing.
\newblock {\em computational complexity}, 15(1):20--39, 2006.

\bibitem[Lov]{LoveHomeSwap}
{Home Exchange with Love Home Swap}.
\newblock \url{https://www.lovehomeswap.com/}.
\newblock {House Swap Holidays}.

\bibitem[LTWZ16]{LuoTWZ2016approximation}
Suiqian Luo, Pingzhong Tang, Chenggang Wu, and Jianyang Zeng.
\newblock Approximation of barter exchanges with cycle length constraints.
\newblock {\em CoRR}, abs/1605.08863, 2016.

\bibitem[MN08]{mu2008truthful}
Ahuva Mu'Alem and Noam Nisan.
\newblock Truthful approximation mechanisms for restricted combinatorial
  auctions.
\newblock {\em Games and Economic Behavior}, 64(2):612--631, 2008.

\bibitem[Pap]{PaperBackSwap}
{Trade Used Books with PaperBack Swap}.
\newblock \url{https://www.paperbackswap.com/}.
\newblock {The world's largest book club}.

\bibitem[Rot82]{roth1982incentive}
Alvin~E Roth.
\newblock Incentive compatibility in a market with indivisible goods.
\newblock {\em Economics letters}, 9(2):127--132, 1982.

\bibitem[RS{\"U}04]{roth2004kidney}
Alvin~E Roth, Tayfun S{\"o}nmez, and M~Utku {\"U}nver.
\newblock Kidney exchange.
\newblock {\em The Quarterly journal of economics}, 119(2):457--488, 2004.

\bibitem[RS{\"U}05]{roth2005pairwise}
Alvin~E Roth, Tayfun S{\"o}nmez, and M~Utku {\"U}nver.
\newblock Pairwise kidney exchange.
\newblock {\em Journal of Economic theory}, 125(2):151--188, 2005.

\bibitem[RS{\"U}07]{roth2007efficient}
Alvin~E Roth, Tayfun S{\"o}nmez, and M~Utku {\"U}nver.
\newblock Efficient kidney exchange: Coincidence of wants in markets with
  compatibility-based preferences.
\newblock {\em American Economic Review}, 97(3):828--851, 2007.

\bibitem[SS74]{ShapleyS1974cores}
Lloyd Shapley and Herbert Scarf.
\newblock On cores and indivisibility.
\newblock {\em Journal of Mathematical Economics}, 1(1):23--37, 1974.

\bibitem[SW13]{sviridenko2013large}
Maxim Sviridenko and Justin Ward.
\newblock Large neighborhood local search for the maximum set packing problem.
\newblock In Fedor~V. Fomin, Rusins Freivalds, Marta~Z. Kwiatkowska, and David
  Peleg, editors, {\em Automata, Languages, and Programming - 40th
  International Colloquium, {ICALP} 2013, Riga, Latvia, July 8-12, 2013,
  Proceedings, Part {I}}, volume 7965 of {\em Lecture Notes in Computer
  Science}, pages 792--803. Springer, 2013.

\bibitem[{\"U}nv10]{unver2010dynamic}
M~Utku {\"U}nver.
\newblock Dynamic kidney exchange.
\newblock {\em The Review of Economic Studies}, 77(1):372--414, 2010.

\bibitem[Woe13]{woeginger2013core}
Gerhard~J Woeginger.
\newblock Core stability in hedonic coalition formation.
\newblock In {\em International Conference on Current Trends in Theory and
  Practice of Computer Science}, pages 33--50. Springer, 2013.

\end{thebibliography}

\clearpage
\appendix
\begin{figure}[!t]
{\centering
\Large{\textbf{APPENDIX}}
\par}
\end{figure}

\section{Lifting the Subset Wish List Assumption}
\label{appendix:subset-assumption}
Recall that in \Sect{}~\ref{section:model}, we make the subset wish list
assumption stating that the wish list $\Wish_{i}$ reported by an agent
$i \in [n]$
must be a subset of her true wish list $\Wish^{*}_{i}$.
In this section, we present a simple method for lifting this assumption by
developing a reduction that transforms any ``reasonable'' (in a sense revealed
soon) truthful
$(k, \lenUtil)$-BE
mechanism $f$ working under the subset wish list assumption into a randomized
truthful
$(k, \lenUtil)$-BE
mechanism $f'$ that does not rely on the subset wish list assumption, while
increasing the guaranteed approximation ratio by no more than a
$(1 + \zeta)$-factor
for an arbitrarily small parameter
$\zeta > 0$.
This transformation is designed so that $f'$ is ``almost deterministic'' in
the sense that the amount of entropy in the image of $f'$ is arbitrarily close
to $0$.
Moreover, if $f$ is computationally efficient, than so is $f'$.

\subsection*{The Assumption's Critical Role}
Before presenting the aforementioned reduction, let us explain the critical
role that the subset wish list assumption plays in the mechanisms developed in
the current paper.
Recall that our 
$(k, \lenUtil)$-BE
mechanisms are developed as 
$(k, \lenUtil)$-maxCGIS
algorithms $\Alg$ and that a key feature of these algorithms is that they obey
the INPA property.
This property ensures that given a $k$-cycle graph
$G = (V, E) \in \CycleGraphSet_{n}^{k}$
and an agent
$i \in [n]$,
if
$i \notin \Agents(\Alg(G))$,
then
$\Alg(G[V - S]) = \Alg(G)$
for any node subset
$S \subseteq \Agents^{-1}(i)$.
In other words, a non-partaking agent $i$ cannot affect the outcome of the
algorithm by removing some of its nodes.

The INPA property on the other hand, does not give us any guarantees regarding
agents
$i \in [n]$
that do partake in $\Alg(G)$.
In particular, if
$i \in \Agents(\Alg(G))$,
then we cannot rule out that
$i \notin \Alg(G[V - S])$
for some node subsets
$S \subseteq \Agents^{-1}(i)$
even if
$S \cap \Alg(G) = \emptyset$;
in fact, simple examples demonstrate that this phenomenon exists in our main
algorithm $\LocalSearchAlg$ and its derivatives.
Looking at it from the converse direction, agent $i$, that does not partake in
the outcome
$\Alg(G[V - S])$,
may manipulate the algorithm, adding the node subset $S$ to the graph
$G[V - S]$,
so that it does partake in
$\Alg(G)$
with
$i \in \Agents(\Alg(G) - S)$,
thus gaining from the manipulation.

This is exactly where the subset wish list assumption comes into play:
It ``guards'' our maxCGIS algorithms from such manipulations,
allowing us to concentrate on the removal of node subsets when trying to
establish truthfulness.

\subsection*{The Reduction}
Consider a truthful mechanism $f$ for the
$(k, \lenUtil)$-BE
problem, working under the subset wish list assumption.
Suppose that $f$ is ``reasonable'' in the sense that it is not affected by the
omission of agent $j$ from wish list $\Wish_{i}$ if there does not exist any
exchange
$\pi \in \ExcSet_{\Wish}^{k}$
such that
$\pi(i) = j$.\footnote{%
We emphasize that all mechanisms presented in the current paper satisfy this
property.}
Let
$\zeta > 0$ 
be an arbitrarily small real number
and consider an input wish list vector 
$\Wish \in \WishVecSpace_{n}$,
$n \in \Integers_{> 0}$.
For each
$c \in \Cycles_{\Wish}^{k}$,
let $\pi_{c}$ denote the exchange satisfying
$\Cycles_{\pi_{c}} = \{ c \}$.

The mechanism $f'$ is constructed as follows:
Given a wish list vector
$\Wish \in \WishVecSpace_{n}$,
$n \in \Integers_{> 0}$,
$f'$ returns the exchange
$\pi = f'(\Wish)$
chosen randomly via the following process:
\begin{enumerate}

\item
With probability
$1 - \zeta$
set
$\pi = f(\Wish)$.

\item 
With probability $\zeta$,
pick
$\ell \in \{2, 3, \dots, k\}$
uniformly at random,
then choose a cycle from
$\{c \in \Cycles_{\Wish}^{k} \mid \length(c) = \ell\}$
uniformly at random and set
$\pi = \pi_{c}$.
If $\{c \in \Cycles_{\Wish}^{k} \mid \length(c) = \ell\} = \emptyset$, then set $\pi$ to be the identity bijection over $[n]$.

\end{enumerate}

Let us now show that $f'$ admits an approximation ratio of 
$\frac{r}{1 - \zeta}$,
where $r$ is the approximation ratio that $f$ admits,
and afterwards we shall prove that $f'$ is truthful.
First, consider some 
$\Wish^{*} \in \WishVecSpace_{n}$.
As each 
$c \in \Cycles_{\Wish^{*}}^{k}$
grants a non-negative utility to all of its partaking agents, combined with the approximation ratio guarantee of $f$, we deduce that
\[
\Ex \left( \SW(f'(\Wish^{*})) \right)
\, \geq \,
(1 - \zeta) \cdot \Ex \left( \SW(f(\Wish^{*})) \right)
\, \geq \,
(1 - \zeta) \cdot
\frac{1}{r}
\cdot
\max \left\{
\SW(\pi) \, : \, \pi \in \ExcSet_{\Wish^{*}}^{k}
\right\}
\, ,
\]
thus establishing the desired approximation ratio of $f'$.

To see that $f'$ is truthful, consider a wish list vector
$\Wish \in \WishVecSpace_{n}$,
$n \in \Integers_{> 0}$,
and an agent 
$i \in [n]$.
We shall prove that
\begin{equation} \label{equation:appendix-A}
\Ex \left( \utility(i, f'(\Wish)) \right)
\, \leq \,
\Ex \left( \utility \left(
i,
f' \left( \Wish_{-i}, \Wish^{*}_{i} \right)
\right) \right)
\, .
\end{equation}

If 
$\Cycles_{\Wish}^{k} \not \subseteq \Cycles_{(\Wish_{-i}, \Wish^{*}_{i})}^{k}$
then there exists a cycle
$c' \in \Cycles_{\Wish}^{k} - \Cycles_{(\Wish_{-i}, \Wish^{*}_{i})}^{k}$
in which agent $i$ is assigned to a non-wished agent.
We know that
$\utility(i, \pi_{c'}) = -\infty$,
which implies that
$\Ex \left( \utility(i, f'(\Wish)) \right) = -\infty$
as $\pi_{c'}$ has a positive utility to be the realization of $f'(\Wish)$.
This yields inequality~\eqref{equation:appendix-A} as $f$ is individually rational and, by definition, there is no cycle in 
$\Cycles_{(\Wish_{-i}, \Wish^{*}_{i})}^{k}$
that assigns $i$ to a non-wished agent.

If  
$\Cycles_{\Wish}^{k} \subseteq \Cycles_{(\Wish_{-i}, \Wish^{*}_{i})}^{k}$,
then there is no exchange 
$\pi \in \ExcSet_{\Wish}^{k}$
such that
$\pi(i) = j$
for any agent
$j \in \Wish_{i} - \Wish^{*}_{i}$.
Therefore,
as $f$ is ``reasonable'',
we deduce that
$
f(\Wish)
=
f \left( \Wish_{-i}, \Wish_{i} \cap \Wish^{*}_{i} \right)
$.
Hence, the truthfulness of $f$ guarantees that 
\[
\Ex \left( \utility(i, f(\Wish)) \right)
=
\Ex \left( \utility \left(
i,
f \left( \Wish_{-i}, \Wish_{i} \cap \Wish^{*}_{i} \right)
\right) \right)
\leq
\Ex \left( \utility \left(
i,
f \left( \Wish_{-i}, \Wish^{*}_{i} \right)
\right) \right)
\, .
\]
Finally, the fact that 
$\{c \in \Cycles_{\Wish}^{k} \mid \length(c) = \ell\} \subseteq
\{c \in \Cycles_{(\Wish_{-i}, \Wish^{*}_{i})}^{k} \mid \length(c) = \ell\}$
for every
$\ell = 2, 3, \dots, k$
derives~\eqref{equation:appendix-A}.

\section{Elaborating on $\criticalRatio(\lenUtil)$}
\label{appendix:critical-ratio}
Consider a non-uniform length function
$\lenUtil : \{ 2, 3, \dots, k \} \rightarrow (0, 1]$
and recall the definition of 
$\criticalRatio(\lambda)$
as presented in \Sect{}~\ref{section:contribution}:
\[
\criticalRatio(\lenUtil)
\, = \,
\max_{%
2 \leq \ell < \ell' \leq k \, : \, \lenUtil(\ell) > \lenUtil(\ell')%
}
\left\{
\max \left\{
\frac{\ell' \cdot \lenUtil(\ell')}{\lenUtil(\ell)}, \,
\frac{\ell - 1}{\ell}
\cdot
\frac{\ell' \cdot \lenUtil(\ell')}{\lenUtil(\ell)}
+ 1
\right\}
\right\} 
\, = \,
\max
\left\{
\criticalRatio_{1}(\lenUtil), 
\criticalRatio_{2}(\lenUtil)
\right\}
\, ,
\]
where
\[
\criticalRatio_{1}(\lenUtil)
\, = \,
\max_{%
2 \leq \ell < \ell' \leq k \, : \, \lenUtil(\ell) > \lenUtil(\ell')%
}
\left\{
\frac{\ell' \cdot \lenUtil(\ell')}{\lenUtil(\ell)}
\right\}
\]
and
\[
\criticalRatio_{2}(\lenUtil)
\, = \,
\max_{%
2 \leq \ell < \ell' \leq k \, : \, \lenUtil(\ell) > \lenUtil(\ell')%
}
\left\{
\frac{\ell - 1}{\ell}
\cdot
\frac{\ell' \cdot \lenUtil(\ell')}{\lenUtil(\ell)}
+ 1
\right\} \, .
\]
We begin our discussion of $\criticalRatio(\lenUtil)$ by establishing the
following (easy) lemma that provides a sanity check that \Thm{}
\ref{theorem:greedy-mechanism} and \ref{theorem:non-uniform-lower-bound} do
not contradict.

\begin{lemma} \label{lemma:critical-ratio:sanity-check}
The expression $\criticalRatio(\lenUtil)$ is (strictly) smaller than $k$.
\end{lemma}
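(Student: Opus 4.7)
The plan is to bound $\criticalRatio_{1}(\lenUtil)$ and $\criticalRatio_{2}(\lenUtil)$ separately, noting that $\criticalRatio(\lenUtil) = \max\{\criticalRatio_{1}(\lenUtil), \criticalRatio_{2}(\lenUtil)\}$ is a maximum over a finite (hence achieved) index set, so it suffices to prove that every term in both maxima is strictly less than $k$.

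For $\criticalRatio_{1}(\lenUtil)$, I would fix any pair $2 \leq \ell < \ell' \leq k$ with $\lenUtil(\ell) > \lenUtil(\ell')$ and observe that $\ell' \cdot \lenUtil(\ell') \leq k \cdot \lenUtil(\ell')< k \cdot \lenUtil(\ell)$, where the first inequality uses $\ell' \leq k$ and the second is strict by the case hypothesis $\lenUtil(\ell') < \lenUtil(\ell)$. Dividing both sides by $\lenUtil(\ell) > 0$ yields $\frac{\ell' \lenUtil(\ell')}{\lenUtil(\ell)} < k$, so $\criticalRatio_{1}(\lenUtil) < k$.

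For $\criticalRatio_{2}(\lenUtil)$, I would again fix such a pair $\ell < \ell'$ and aim to show $\frac{\ell-1}{\ell} \cdot \frac{\ell' \lenUtil(\ell')}{\lenUtil(\ell)} + 1 < k$, which, after clearing denominators, is equivalent to $(\ell-1) \cdot \ell' \cdot \lenUtil(\ell') < \ell(k-1) \lenUtil(\ell)$. I would chain: using $\lenUtil(\ell') < \lenUtil(\ell)$ and $(\ell-1)\ell' > 0$, we get $(\ell-1)\ell' \lenUtil(\ell') < (\ell-1)\ell' \lenUtil(\ell)$; using $\ell' \leq k$, we get $(\ell-1)\ell' \lenUtil(\ell) \leq (\ell-1) k \lenUtil(\ell)$; finally, algebraic rearrangement shows $(\ell-1)k \leq \ell(k-1)$ holds whenever $\ell \leq k$, which is certainly the case since $\ell < \ell' \leq k$. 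One strict inequality along the chain is enough to conclude strict inequality overall, so $\criticalRatio_{2}(\lenUtil) < k$.

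Combining the two bounds, $\criticalRatio(\lenUtil) = \max\{\criticalRatio_{1}(\lenUtil), \criticalRatio_{2}(\lenUtil)\} < k$. There is no serious obstacle here; the only mild subtlety is keeping track of strictness — it is important that the maxima range over a nonempty finite set (guaranteed by $\lenUtil$ being non-uniform) so the suprema are attained and the pointwise strict inequalities lift to a strict inequality for the maximum.
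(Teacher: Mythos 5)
Your proof is correct and takes essentially the same approach as the paper: decompose $\criticalRatio(\lenUtil)$ into $\criticalRatio_{1}(\lenUtil)$ and $\criticalRatio_{2}(\lenUtil)$ and bound each term of the (nonempty, finite) maxima by elementary inequalities. The only cosmetic difference is that the paper bounds the factor $\frac{\ell-1}{\ell}$ by its worst case $\frac{k-2}{k-1}$ and reuses $\criticalRatio_{1}(\lenUtil) < k$, whereas you clear denominators and verify $(\ell-1)k \leq \ell(k-1)$ termwise; the underlying arithmetic is the same.
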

\begin{proof}
Since the component $\criticalRatio_{1}(\lenUtil)$ is clearly smaller than
$k$, it remains to show that
$\criticalRatio_{2}(\lenUtil) < k$
or, equivalently, that
\[
\max_{%
2 \leq \ell < \ell' \leq k \, : \, \lenUtil(\ell) > \lenUtil(\ell')%
}
\left\{
\frac{\ell - 1}{\ell}
\cdot
\frac{\ell' \cdot \lenUtil(\ell')}{\lenUtil(\ell)}
\right\}
\, < \,
k - 1
\, .
\]
The left-hand side of the last inequality is up-bounded by
\[
\frac{k - 2}{k - 1} \cdot \criticalRatio_{1}(\lenUtil)
\, < \,
\frac{k - 2}{k - 1} \cdot k
\]
which is indeed smaller than
$k - 1$.
\end{proof}

Next, we turn our attention to an intuitive explanation of the
$\criticalRatio(\lenUtil)$-lower bound promised in
\Thm{}~\ref{theorem:non-uniform-lower-bound}.
To this end, it is advisable to take a look at
Figure~\ref{figure:comb-instance};
this figure demonstrates that the approximation ratio of any truthful
$(k, \lenUtil)$-BE
mechanism is low-bounded by the component
$\criticalRatio_{1}(\lenUtil)$
that captures the ratio of the social welfare obtained from $\ell$ trading
cycles of length $\ell'$ to to that obtained from a single trading cycle of
length $\ell$.
The component
$\criticalRatio_{2}(\lenUtil)$
stems from the enhanced construction presented in
Figure~\ref{figure:subproblem-two-sided-comb}, which involves a bit more
manipulations with the design of the agent digraph.

\subsection*{Matching Lower and Upper Bounds}
Since $\criticalRatio(\lenUtil)$ appears in both the lower bound of
\Thm{}~\ref{theorem:non-uniform-lower-bound} and the upper bound of
\Thm{}~\ref{theorem:non-uniform-upper-bound}, in the remainder of this
section, we wish to reveal the
properties of the length function $\lenUtil$ that ensure that the two bounds
match or, equivalently, the properties that yield
\[
\criticalRatio(\lenUtil)
\, > \,
k - 1
\, .
\]

First, let us understand when each of the components $\criticalRatio_{1}$ and
$\criticalRatio_{2}$ is dominant.
As the intuition from Figure~\ref{figure:subproblem-two-sided-comb} and the
corresponding lower bound proof in Section~\ref{section:strong-lower-bound}
suggest, the trade-off between $\criticalRatio_{1}$ and $\criticalRatio_{2}$ is
the trade-off between the social welfare obtained from the rightmost vertical
trading cycle in Figure~\ref{figure:subproblem-two-sided-comb} and that
obtained from the right horizontal trading cycle in
Figure~\ref{figure:subproblem-two-sided-comb};
that is,
\[
\criticalRatio_{1}(\lenUtil) \ge \criticalRatio_{2}(\lenUtil)
\iff
\ell' \cdot \lenUtil(\ell') \ge \ell \cdot \lenUtil(\ell)
\]
for the choice of
$2 \leq \ell < \ell' \leq k$,
$\lenUtil(\ell) > \lenUtil(\ell')$,
for which
\[
\criticalRatio(\lambda) 
=
\max \left\{
\frac{\ell' \cdot \lenUtil(\ell')}{\lenUtil(\ell)}, \,
\frac{\ell - 1}{\ell}
\cdot
\frac{\ell' \cdot \lenUtil(\ell')}{\lenUtil(\ell)}
+ 1
\right\} \, .
\]
Hence, if
$\criticalRatio_{1}(\lenUtil) < \criticalRatio_{2}(\lenUtil)$,
then
\begin{align*}
\criticalRatio(\lenUtil) 
\, = \,
\criticalRatio_{2}(\lenUtil)
\, = \, &
\max_{%
2 \leq \ell < \ell' \leq k \, : \, \lenUtil(\ell) > \lenUtil(\ell')%
}
\left\{
\frac{\ell - 1}{\ell}
\cdot
\frac{\ell' \cdot \lenUtil(\ell')}{\lenUtil(\ell)}
+ 1
\right\}
\\
< \, &
\max_{%
2 \leq \ell < \ell' \leq k \, : \, \lenUtil(\ell) > \lenUtil(\ell')%
}
\left\{
\frac{\ell - 1}{\ell}
\cdot
\frac{\ell \cdot \lenUtil(\ell)}{\lenUtil(\ell)}
+ 1
\right\}
\\
= \, &
\max_{%
2 \leq \ell < \ell' \leq k \, : \, \lenUtil(\ell) > \lenUtil(\ell')%
}
\{ \ell -1 + 1 \}
\, = \,
k - 1
\, .
\end{align*}
Consequently, the inequality
$\criticalRatio(\lenUtil) > k - 1$
can hold only if
$\criticalRatio(\lambda)
=
\criticalRatio_{1}(\lambda)
\geq
\criticalRatio_{2}(\lambda)$.
Moreover, it is necessary that 
$\ell' = k$
since otherwise 
\[
\criticalRatio_{1}(\lenUtil)
\, = \,
\max_{%
2 \leq \ell < \ell' \leq k - 1 \, : \, \lenUtil(\ell) > \lenUtil(\ell')%
}
\left\{
\frac{\ell' \cdot \lenUtil(\ell')}{\lenUtil(\ell)}
\right\}
\, < \,
\max_{%
2 \leq \ell < \ell' \leq k - 1 \, : \, \lenUtil(\ell) > \lenUtil(\ell')%
}
\ell'
\, \leq \,
k - 1
\, ,
\]
thus establishing the following lemma.

\begin{lemma} \label{lemma:critical-ratio:matching-bounds-condition}
The expression $\criticalRatio(\lenUtil)$ satisfies
$\criticalRatio(\lenUtil) > k - 1$
if and only if
\[
\frac{\lenUtil(k)}{\lenUtil(\ell^{*})}
\, > \,
\frac{k - 1}{k}
\, = \,
1 - \frac{1}{k}
\, ,
\]
recalling that $\ell^{*}$ is the largest
$2 \leq \ell \leq k - 1$
that satisfies
$\lenUtil(\ell) > \lenUtil(k)$.
\end{lemma}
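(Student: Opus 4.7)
The plan is to read off the lemma almost directly from the case analysis in the paragraphs that precede it, which has already done the bulk of the work. Specifically, the preceding discussion establishes two facts: (i) if $\criticalRatio(\lenUtil) > k - 1$, then $\criticalRatio_{2}(\lenUtil) < k - 1$, forcing $\criticalRatio(\lenUtil) = \criticalRatio_{1}(\lenUtil)$; and (ii) the maximizing pair $(\ell, \ell')$ in $\criticalRatio_{1}(\lenUtil)$ must satisfy $\ell' = k$, since otherwise $\criticalRatio_{1}(\lenUtil) \leq \ell' \leq k - 1$. Combining these, the inequality $\criticalRatio(\lenUtil) > k - 1$ is equivalent to the existence of some $2 \leq \ell \leq k - 1$ with $\lenUtil(\ell) > \lenUtil(k)$ satisfying $\frac{k \cdot \lenUtil(k)}{\lenUtil(\ell)} > k - 1$, i.e., $\frac{\lenUtil(k)}{\lenUtil(\ell)} > \frac{k - 1}{k}$.

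The only remaining step is to show that such an $\ell$ exists if and only if the inequality holds at $\ell = \ell^{*}$. This is immediate from the definition of $\ell^{*}$ as the largest $2 \leq \ell \leq k - 1$ with $\lenUtil(\ell) > \lenUtil(k)$, combined with the fact that $\lenUtil$ is non-increasing: among all candidate $\ell$ satisfying $\lenUtil(\ell) > \lenUtil(k)$, the value $\lenUtil(\ell^{*})$ is the smallest, hence the ratio $\frac{\lenUtil(k)}{\lenUtil(\ell)}$ is maximized at $\ell = \ell^{*}$. Thus the existence of a ``witness'' $\ell$ coincides with the inequality holding for $\ell^{*}$ itself, yielding the claimed biconditional.

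I do not expect any genuine obstacle here; the proof is essentially a bookkeeping exercise that repackages the computations already carried out immediately before the lemma statement. The only points that require a sentence of care are the well-definedness of $\ell^{*}$ (guaranteed by the standing non-uniformity assumption on $\lenUtil$, which ensures that some $2 \leq \ell \leq k - 1$ with $\lenUtil(\ell) > \lenUtil(k)$ exists) and the strictness of the inequalities, which is preserved throughout since we are comparing to $k - 1$ rather than to a weak bound.
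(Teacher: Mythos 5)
Your overall route is the same as the paper's: reduce $\criticalRatio(\lenUtil) > k-1$ to the statement that the maximum is attained by the $\criticalRatio_{1}$-component with $\ell' = k$, and then use the monotonicity of $\lenUtil$ to conclude that among the admissible $\ell$ the ratio $\lenUtil(k)/\lenUtil(\ell)$ is largest at $\ell = \ell^{*}$. The final monotonicity step, the well-definedness of $\ell^{*}$ under non-uniformity, and the (implicit, trivial) converse direction --- the pair $(\ell^{*}, k)$ already witnesses $\criticalRatio(\lenUtil) \geq \tfrac{k\,\lenUtil(k)}{\lenUtil(\ell^{*})} > k-1$ --- are all fine and match the paper.

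However, your intermediate fact (i) is false as stated: $\criticalRatio(\lenUtil) > k-1$ does \emph{not} imply $\criticalRatio_{2}(\lenUtil) < k-1$. For $k = 3$, $\lenUtil(2) = 1$, $\lenUtil(3) = 9/10$, one gets $\criticalRatio_{1}(\lenUtil) = 2.7$ and $\criticalRatio_{2}(\lenUtil) = \tfrac{1}{2}\cdot 2.7 + 1 = 2.35$, so $\criticalRatio(\lenUtil) = 2.7 > k-1 = 2$ while also $\criticalRatio_{2}(\lenUtil) = 2.35 > k-1$. What the preceding discussion actually establishes (and all that your argument needs) is the implication: if $\criticalRatio_{1}(\lenUtil) < \criticalRatio_{2}(\lenUtil)$, then $\criticalRatio(\lenUtil) = \criticalRatio_{2}(\lenUtil) < k-1$; its contrapositive gives that $\criticalRatio(\lenUtil) > k-1$ forces $\criticalRatio(\lenUtil) = \criticalRatio_{1}(\lenUtil)$, which is the only consequence you use of (i). Similarly, (ii) should be read as a conditional necessity: under the standing assumption $\criticalRatio_{1}(\lenUtil) > k-1$, any maximizing pair must have $\ell' = k$, because a pair with $\ell' \leq k-1$ contributes $\tfrac{\ell' \lenUtil(\ell')}{\lenUtil(\ell)} < \ell' \leq k-1$. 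With (i) repaired in this way the rest of your proof goes through and coincides with the paper's argument.
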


In other words, the ratio of the smallest possible agent utility, i.e., the
one obtained from a trading cycle of length $k$, to the second smallest agent
utility (obtained from a  trading cycle of length $\ell^{*}$) must be
strictly larger than
$1 - \frac{1}{k}$.
So, for the condition
$\frac{\lenUtil(k)}{\lenUtil(\ell^{*})}
>
1 - \frac{1}{k}$
presented in \Lem{}~\ref{lemma:critical-ratio:matching-bounds-condition} to
hold, the length function $\lenUtil$ must ``tail (sufficiently) flatly''.

Any ``near-uniform'' length function $\lenUtil$ in particular tails flatly and
thus, satisfies the condition of
\Lem{}~\ref{lemma:critical-ratio:matching-bounds-condition}.
An example for a length function $\lenUtil$ that tails sufficiently flatly
although it is far from being ``near-uniform'' is the exponential length
function defined as
$\lenUtil(\ell) = C \cdot (1 - \epsilon)^{\ell}$,
where
$0 < \epsilon < \frac{1}{k}$
and 
$0 < C \leq 1$.
We note that although this exponential length function is convex, in general,
convex functions (let alone concave functions) do not necessarily satisfy the
condition of \Lem{}~\ref{lemma:critical-ratio:matching-bounds-condition} as
they may fail to tail sufficiently flatly.



\section{Cycle Graphs are Richer than Wish List Vectors}
\label{appendix:cycle-graph-with-no-wish-list-vector}
In this section we present a $k$-cycle graph that does not correspond to any
wish list vector.
Let
$k = 3$,
$n = 6$,
and consider the $3$-cycle graph
$G = (V, E) \in \CycleGraphSet_{6}^{3}$
characterized by the node set
$V = \{ v_{1}, v_{2}, v_{3}, v_{4} \}$
and the following agent assignments:
\[
\Agents(v_{1}) = \{ 1, 2 \}
\, , \quad
\Agents(v_{2}) = \{ 2, 3 \}
\, , \quad
\Agents(v_{3}) = \{ 1, 3 \}
\, , \quad \text{and} \quad
\Agents(v_{4}) = \{ 4, 5, 6 \}
\]
(in any cyclic order).

Assume towards contradiction that $G$ corresponds to the wish list
$\Wish \in \WishVecSpace_{4}$.
By definition,
the agent assignment of $v_{1}$ implies that
$2 \in \Wish_{1}$,
the agent assignment of $v_{2}$ implies that
$3 \in \Wish_{2}$,
and
the agent assignment of $v_{3}$ implies that
$1 \in \Wish_{3}$.
Hence, we deduce that the cycle corresponding to the cyclic order
$( 1, 2, 3 )$
is included in $\Cycles_{\Wish}^{k}$, deriving a contradiction as there is no
node
$v \in V$
for whom
$\Agents(v) = \{ 1, 2, 3 \}$.

\end{document}